\renewcommand\expandafter\subsection\expandafter
	\newcommand\@fb@secFB{\FloatBarrier
		\gdef\@fb@afterHHook{\@fb@topbarrier \gdef\@fb@afterHHook{}}}%
	\g@addto@macro\@afterheading{\@fb@afterHHook}%
	\gdef\@fb@afterHHook{}%
\newtheorem{theorem}{Theorem}
\newtheorem{lemma}[theorem]{Lemma}
\newdefinition{remark}{Remark}
\newproof{proof}{Proof}
\newtheorem{definition}{Definition}
\newcommand{\mybox}{%
	\collectbox{%
		\setlength{\fboxsep}{1pt}%
		\fbox{\BOXCONTENT}%
	}%
}
\newcommand{\remove}[1]{}
\journal{arXiv}
\begin{document}
	
\begin{frontmatter}
\title{Traceable Policy-Based Signatures and Instantiation from Lattices}

\author[1]{Yanhong Xu\corref{cor1}
}
\ead{yanhong.xu1@ucalgary.ca}

\author[1]{Reihaneh Safavi-Naini}
\ead{rei@ucalgary.ca}

\author[2]{Khoa Nguyen}
\ead{khoantt@ntu.edu.sg}

\author[2]{Huaxiong Wang}
\ead{hxwang@ntu.edu.sg}

\cortext[cor1]{Corresponding author}

\address[1]{Department of Computer Science, 
	University of Calgary, 
	2500 University Drive, NW
	Calgary T2N 1N4, Canada}
\address[2]{School of Physical and Mathematical Sciences, Nanyang Technological University, 
	21 Nanyang Link, Singapore 637371}

\begin{abstract}
Policy-based signatures (PBS) were proposed by Bellare and Fuchsbauer (PKC 2014) to allow an {\em authorized} member of an organization to sign a message on behalf of the organization. The user's authorization is determined by a policy managed by the organization's trusted authority, while the signature preserves the privacy of the organization's policy. Signing keys in PBS do not include user identity information and thus can be passed to others, violating the intention of employing PBS to restrict users' signing capability.

In this paper, we introduce the notion of {\em traceability} for PBS by including user identity in the signing key such that the trusted authority will be able to open a suspicious signature and recover the signer's identity should the needs arise. We provide rigorous definitions and stringent security notions of traceable PBS (TPBS), capturing the properties of PBS suggested by Bellare-Fuchsbauer and resembling the ``full traceability'' requirement for group signatures put forward by Bellare-Micciancio-Warinschi (Eurocrypt 2003). As a proof of concept, we provide a modular construction of TPBS, based on a signature scheme, an encryption scheme and a zero-knowledge proof system. Furthermore, to demonstrate the feasibility of achieving TPBS from concrete, quantum-resistant assumptions, we give an instantiation based on  lattices.
\end{abstract}
\begin{keyword}
Policy-based signatures \sep privacy\sep  traceability\sep modular constructions\sep lattice-based instantiations\sep  zero-knowledge proofs
\end{keyword}

\end{frontmatter}

\section{Introduction}
 Policy-based signatures (PBS) were introduced by Bellare and Fuchsbauer~\cite{BF14PKC} to allow authorized users in an organization to sign messages on behalf of the organization, while keeping the internal authorization policy  of the organization private.  A signature in PBS will be verified with respect to the organization's public key and so does not leak any identity information, nor it reveals the policy that is applied to the signed message.
PBS is an attractive privacy-preserving primitive in practice as, similar to group signature~\cite{CH91EC}, it allows members of an organization to sign on behalf of the organization, but enables organization to enforce fine-grained policies for signing messages without revealing to the outside world.  From a theoretical viewpoint, PBS is also a powerful primitive: it captures and implies a number of anonymity-oriented authentication systems, serving as an umbrella notion that unifies many existing notions. 
In particular, it was shown~\cite{BF14PKC} to imply group signatures~\cite{CH91EC,BMW03EC}, ring signatures~\cite{RST01AC}, anonymous credential~\cite{CL01EC},  anonymous proxy signatures~\cite{FP08SCN}, attribute-based signatures~\cite{MPR11CT-RSA}, and certain variants of functional signatures~\cite{BGI14PKC,BMS16PKC}.

Bellare and Fuchsbauer~\cite{BF14PKC} initially required  two  basic security requirements for PBS: {\em  indistinguishability}  and \emph{unforgeability.} Indistinguishability captures privacy of the policy under which a message is signed, and  requires the verifier not to be able to distinguish which of the two candidate keys is used for signing a message $m$, knowing that the message satisfies the underlying policies of both keys. Unforgeability demands that it is infeasible to produce a valid signature on a message without possessing a key for a policy that permits the given message\footnote{Note that a message could be authorized under multiple policies.}. 
Bellare  and Fuchsbauer however argued that a usual indistinguishability notion is insufficient for some applications and a typical definition of unforgeability can lead to technical difficulties. They therefore proposed two stringent notions of {\em simulatability} and {\em extractability}  that are proven to imply indistinguishability and unforgeability. 
They also provided generic constructions of PBS satisfying the proposed strong notions of security, and demonstrated a concrete pairing-based instantiation.  In their constructions, a  trusted authority issues a signing key $sk_p$ for a policy $p$  by certifying $p$ via an ordinary signature scheme. If a message $m$ complies with policy $p$, then the holder of $sk_p$  can create a PBS on $m$ by generating a zero-knowledge proof of knowledge of a valid certificate on certain hidden policy $p$ such that  the given message $m$ conforms  to $p$.

We observe that signing keys in PBS are associated only with policies and does not contain any identifying information of users. This pitfall allows the keys to be easily misused without penalty: key holders can freely pass their keys to anyone they choose to, enabling them to sign on behalf of the organization. Such a limitation completely opens PBS to insider adversaries who can share and/or exchange their keys without any repercussion,  and so completely bypass the organization's security/privacy regulations.

\smallskip
\noindent{\sc Protection against key sharing.} 
To protect against key sharing one may attach the user's identity to the policy. That is to issue   signing keys for user $\mathsf{id}$ on $ \mathsf{id}\|p$ for all plausible policies (a user may be authorized for multiple policies), ensuring that the user's identity is part of the key. This, however,  achieves privacy of identity in the same way we achieve it for policy.  Therefore, the user can still share  (or  exchange)  their signing keys without any real penalty as they can always claim it is lost or stolen.

One may consider the ``all-or-nothing'' non-transferability approach  proposed by Camenisch and Lysyanskaya~\cite{CL01EC}, 
where sharing a single credential  would reveal all credentials of the user and would lead to the user's loss of its identity. In PBS, one can link  all signing keys of a user to provide such a feature, but still, it does not prevent malicious users from sharing their keys, although in this case they would have to share all their keys at the same time.

One can however make use of the approach that has been employed in a related primitive, group signatures (GS)~\cite{CH91EC,BMW03EC}, to provide  {\em traceability}.
 In GS, a member of an organization has an individualized secret key which they can use to sign on behalf  the organization while keeping the signer's identity hidden.
 The system, however, allows the presence of an {\em opener} (who could be the same as the secret key issuer or be a separate trusted entity), to open the signature and reveal the signer's identity - in cases of disputes. This is achieved by including an encryption of the user identity, encrypted  with the public key of the opener, in the group signature.  This is the approach that we will take in this paper.

 \smallskip

 \noindent {\sc Our Results. }In this work, we aim to equip PBS with a tracing functionality, and study how to build such an enhanced scheme in a modular manner based on generic assumptions, as well as, concrete lattice-based assumptions. Our contributions are three-fold. 
\smallskip

\noindent
{\bf Formalization.} 
 We propose the primitive of traceable policy-based signatures (TPBS), an extension of PBS in which the identity of the user who has signed a message can be  recovered by an opening algorithm,  while privacy of the signing policy is preserved even against the opening authority. 
In a nutshell, TPBS enrichs PBS with a reasonable traceability feature, reminiscent to group signatures. 
This traceability aspect deters users from sharing or exchanging 
their signing keys (and hence bypassing the organization's regulations), and in general, keeps users accountable for their potentially inappropriate actions. We  formalize the  security notions of simulatability and extractability for 
TPBS  inline with the corresponding notions in~\cite{BF14PKC}, giving additional power to the adversary by modelling their capabilities to ask for  opening of signatures.  In particular,  in the  definition of  simulatability, we provide the adversary with  
 access to an opening oracle, so that to capture 
 the situation when the adversary sees the results of previous openings.  In the  definition of extractability, we also include the requirement that an adversary who has queried signing keys of  a set of users,  be unable to output a valid signature that cannot be opened or traced to a member of the queried group.

\smallskip

\noindent 
{\bf Generic construction.} As a proof of concept, we provide generic construction of TPBS, which is akin to a modular design of PBS by Bellare and Fuchsbauer and which requires essentially the same cryptographic building blocks as the latter.  Namely, we assume the existence  of  
a signature scheme that is existentially unforgeable under chosen message attacks (EU-CMA),  a  non-interactive zero-knowledge (NIZK) proof that is simulation-sound extractable (SE), and a public-key encryption scheme that is indistinguishable under chosen ciphertext attacks (IND-CCA). Let us briefly review how our construction works. The setup algorithm generates the public parameters and key pairs for the underlying building blocks.  The signing key of a user  $\mathsf{id}$ will be a set of signatures, one per $\mathsf{id}\|p$,   for all policies that  the user  is authorized to sign. To sign a message $m$, a user $\mathsf{id}$ first encrypts  its identity,  and provides   a NIZK proof of knowledge of a signature on $\mathsf{id}\|p$ such that $p$ permits the message $m$ \emph{and} $\mathsf{id}$ has been correctly encrypted to a given ciphertext. The signature contains the ciphertext as well as the resulting NIZK proof. On input the decryption key, one can open a valid signature  and  determine the  signer's identity.  We prove  this generic construction satisfies our proposed security notions and the opening algorithm correctly reveals the user~$\mathsf{id}$.

\smallskip

\noindent 
{\bf Lattice-based instantiation. }Next, to demonstrate the feasibility of achieving TPBS from concrete and well-studied computational assumptions, we provide an instantiation of TPBS based on the hardness of the Short Integer Solution (\textsf{SIS}) problem ~\cite{Ajtai96STOC} and the Learning With Errors (\textsf{LWE}) problem~\cite{Regev05STOC}. Our scheme hence can rely on the worst-case hardness of the Shortest Independent Vector Problem ($\mathsf{SIVP}_\gamma$) over standard lattices with small polynomial approximation factors $\gamma$, and has the potential to be resistant against quantum computers. 

To design a lattice-based TPBS following the above blueprint, we need to choose suitable technical ingredients from lattices and implement some non-trivial refinements, especially for the associated zero-knowledge proof system, so that to make these ingredients work smoothly together.  
First, we seek for techniques to build SE-NIZKs for expressive languages in the lattice setting.  To this end, we employ zero-knowledge techniques for \textsf{SIS} and \textsf{LWE} statements that operate within Stern's framework~\cite{Stern96IT}, and make use of the Fiat-Shamir (FS) transform~\cite{FS86C} to obtain an SE-NIZK proof in the random oracle model~\footnote{The proof of security of this approach is given in~\cite{FKMV12IndoC}.}.    We then need to choose an appropriate policy language,  a secure signature scheme and a secure encryption scheme, that are reasonably efficient and that are compatible with the zero-knowledge layer.  
Regarding policy language, we will use the one suggested by Cheng et al.~\cite{CNW16DCC} in their construction of the first PBS from lattices.
 The authors argued that their language  captures policies that are used in real-life scenarios: 
the policy language is defined for an exponential-size message space and supports a polynomial-size policy space. Moreover, a policy can  permit many messages, and a message can simultaneously satisfy many policies. 
Concretely, a policy $\mathbf{p}\in\{0,1\}^{\ell_2}$ is said to permit a message $\mathbf{m}\in\{0,1\}^n$ if there exists a witness $\mathbf{q}\in\{0,1\}^d$ such that $\mathbf{G}_1\cdot \mathbf{p}+\mathbf{G}_2\cdot \mathbf{q}=\mathbf{m}\bmod 2$ for given matrices $\mathbf{G}_1,\mathbf{G}_2$ and appropriately  chosen parameters. 
As for EU-CMA signature scheme, we will employ 
Boyen's signature~\cite{Boyen10PKC} that features much smaller key sizes than the Bonsai signature~\cite{CHKP10EC} used in Cheng et al.'s PBS scheme.  As for the encryption layer,  we will start with  the identity-based encryption (IBE) scheme by Gentry, Peikert, and Vaikuntanathan (GPV)~\cite{GPV08STOC}, and then apply the CHK transform~\cite{CHK04EC} to get an IND-CCA secure encryption scheme.

Since the chosen ingredients (Cheng et al.'s policy language, Boyen's signature, GPV-IBE encryption) are known to be compatible with Stern-like zero-knowledge protocols~\cite{CNW16DCC,LNW15PKC}, it would be possible to obtain a combined zero-knowledge proof that will serve as the backbone of our lattice-based TPBS. We, however, need to address the following challenge. The relation that needs to be proved  by the user $\mathsf{id}$  during the signature generation, must show that the user (1)   
 possesses a valid Boyen signature on $\mathsf{id}\|\mathbf{p}$; (2) has correctly encrypted $\mathsf{id}$ using GPV-IBE;  and (3) there exists a vector $\mathbf{q}$ such that the above policy relation is satisfied for $(\mathbf{p},\mathbf{m})$. While 
 (1) and (2) have been addressed in~\cite{LNW15PKC} and (3)  has been handled in~\cite{CNW16DCC},  it is not straightforward to combine them together to establish our desirable zero-knowledge protocol. 
 Nevertheless, by carefully manipulating the underlying linear equations and applying proper techniques for proving linear constraints in Stern's framework, we are able to prove that (1), (2) and (3) are satisfied simultaneously, namely, the $\mathbf{p}$ involved in (3) was certified together with an $\mathsf{id}$ in (1) and the same $\mathsf{id}$ was encrypted in (2).  In the process, we adapt an enhanced extension-permutation technique  from~\cite{LNRW18TCS,LNWX18PKC}  that allows us to achieve \emph{optimal} permutation size equal to bit size of secret input (denoted as $|\xi|$). This improves   the  suboptimal permutation size ($\mathcal{O}(|\xi|\cdot\log |\xi|)$) if we follow the same extension-permutation techniques in~\cite{LNW15PKC,CNW16DCC} and leads to  (slightly) shorter signatures. Compared to Cheng et al.'s scheme~\cite{CNW16DCC}, our lattice-based TPBS scheme is richer in terms of functionality (with traceability enabled), relies on security assumption $\mathsf{SIVP}_\gamma$ with similar factors $\gamma$, while achieving the same level of asymptotic efficiency (more concretely, our scheme has smaller key sizes but produces slightly larger signature size due to the inclusion of the encryption layer).

\smallskip
\noindent{\sc Related work. }The study of authentication systems supporting anonymity and accountability/traceability started in the early 1980s and is still one of the major research directions nowadays. Below, let us briefly review part of the extensive literature relevant to our work. 

Group signatures (GS)~\cite{CH91EC} allow certified members of a group to anonymously sign messages on behalf of the group, but when needed, any signature can be traced.  Our TPBS can be seen as a GS with fined-grained control on who can sign a message and with simulation-based security requirements. As the opening authority in a GS can violate users' anonymity at will,  there have been several efforts to restrict its power, such as creating a tracing trapdoor for each user~\cite{KTY04EC} or for each message~\cite{SEHK0O12Pairing}, or forcing the authority to decide who to be traced in advance~\cite{KM15PoPETs}. We may as well consider these enhanced mechanisms in the context of TPBS.

Ring signatures (RS)~\cite{RST01AC} enable anonymous authentications within ad-hoc groups and originally does not support any form of user accountability. While absolute anonymity could be a desirable feature in certain scenarios, it could be abused by malicious users.  Therefore, a number of RS variants have been proposed to regulate excessive anonymity, including linkable RS~\cite{LWW04ACISP}, accountable RS~\cite{XY04} and traceable RS~\cite{FS08IEICE}. Similar methods for achieving accountability/traceability have also been deployed in the context of e-cash~\cite{Chaum82C,CHL05EC} and could also be potentially useful for PBS.

Attribute-based signatures (ABS)~\cite{MPR11CT-RSA} allow a user who owns a set of certified attributes to anonymously issue signatures whenever his attributes satisfy a given predicate. Like RS, in its original form, ABS does not support user accountability. To remedy this issue, traceable ABS~\cite{EGK14CT-RSA} was then introduced so that anonymity of misbehaving users can be revoked by a designated authority. Our enhancement here for PBS has similar spirit to that of~\cite{EGK14CT-RSA} for ABS. 

Lattice-based cryptography is currently a mainstream field of research and development, due to a number of advantages over traditional public-key cryptography from factoring and discrete logarithm, most notably, its conjectured security against quantum computers. Designing secure lattice-based anonymous authentication systems supporting accountability/traceability has been an active subfield in the last decade. In fact, one can name various lattice-based GS schemes~\cite{GKV10AC,LLLS13AC,LLNW16EC,PLS18CCS,LNWX19CT-RSA,BCOS20PQC}, linkable~\cite{LAZ19ACNS,TKSSLC19ACISP,LNYWW19ESORICS} and traceble RS~\cite{FLWL20CT-RSA} schemes, as well as compact e-cash systems~\cite{LLNW17AC-ecash,YAZXYW19C}. Lattice-based constructions of ABS, for concrete policies~\cite{BK16Eprint,ZLHZJ19CCS} and general policies~\cite{Tsabary17TCC,KK18PKC}, have also been known, but no scheme supporting traceability, e.g., in the sense of~\cite{EGK14CT-RSA}, has been proposed.

The first lattice-based PBS was put forward by Cheng et al.~\cite{CNW16DCC}. 
Their construction relies on Stern-like protocols and Bonsai signature~\cite{CHKP10EC} and allows delegation of signing.  As discussed above, our lattice-based instantiation of TPBS employs the same approach to policy definition and shares the same framework for designing zero-knowledge proofs. We, however, employ Boyen's signature~\cite{Boyen10PKC} that yields much smaller keys: public key size is reduced by a factor close to $2$ and user signing key by a factor of $(\ell_1+\ell_2)/2$, with $\ell_1,\ell_2$ being bit sizes of user identities and policies, respectively).  Yet, our model of TPBS has not offered delegatability, as suggested by Bellare and Fuchsbauer~\cite{BF14PKC} and achieved by Cheng et al.~\cite{CNW16DCC}.  Extending TPBS to provide delegatability would  be an interesting future work.
On the practical front, similar to~\cite{CNW16DCC}, our lattice-based instantiation is still not practically usable, mainly due to large signature size. Nevertheless, it would certainly enrich the field and be the first step towards more efficient constructions in the near future.

\smallskip
\noindent{\sc Organization.} The rest of the paper is structured as follows. 
 Section~\ref{tpbs-section:tpbs-security-model} introduces and discusses the notion of TPBS and its security requirements. Section~\ref{tpbs-section:tpbs-generic-con} presents our generic construction and its security proofs.  We provide our lattice-based instantiation of TPBS and its underlying zero-knowledge protocol in Section~\ref{tpbs-section:tpbs-lattice-con} and  Section~\ref{tpbs-subsection:main-zk-protocol}, respectively. In Section~\ref{tpbs-section:tpbs-conclusions}, we conclude the paper and mention a few interesting open problems.  Some supplementary materials are deferred to the Appendix. 

\section{Traceable Policy-Based  Signatures}\label{tpbs-section:tpbs-security-model}
{\bf {Notations.}} Let $\mathbb{Z}^+$ denote the set of all positive integers. For $a,b\in \mathbb{Z}^+$, denote $[a,b]$ as the set $\{a,a+1,\ldots,b\}$. In the case where $a=1$ we will simply write $[b]$. All vectors considered in this work are column vectors, unless otherwise stated. When $\mathbf{a}\in \mathbb{R}^n, \mathbf{b}\in \mathbb{R}^m$, for simplicity, we denote $[\mathbf{a}^{\top}|\mathbf{b}^{\top}]^{\top}$ as $(\mathbf{a}\|\mathbf{b})\in \mathbb{R}^{n+m}$. 
 A policy checker $\mathsf{PC}$~\cite{BF14PKC} is an $\mathcal{NP}$-relation $\{0,1\}^*\times \{0,1\}^*$ with the first input being a pair $(p,m)$ representing  a policy $p\in\{0,1\}^*$ and a message $m\in\{0,1\}^*$,  while the second input being a witness $w$.   The associated policy language is  $\mathcal{L}(\mathsf{PC})=\{ (p,m): \exists ~w\in \{0,1\}^*,~\text{such that} ~\mathsf{PC}\big((p,m),w\big)=1\}$. \remove{If $(p,m)\in \mathcal{L}(\mathsf{PC})$, we say that either $m$ satisfies policy $p$ or $p$ permits $m$. The space  of user identities  is denoted as $\mathcal{ID}\subseteq\{0,1\}^*$. For each user $\mathsf{id}$, let $\mathcal{P}_{\mathsf{id}}$ be a set of policies that  user  $\mathsf{id}$ is authorized  to sign. }
\smallskip 

\noindent{\textbf{Syntax of TPBS.}} A traceable policy-based signature (TPBS) scheme extends  PBS  with an  additional feature that the identity of the signer of any signature can be revealed. This is achieved by including the user's identity in relevant algorithms and providing an additional opening algorithm to uncover the identity of the signer of any signature.  A TPBS scheme consists of the following polynomial-time algorithms. 
\begin{description}
	\item[$\mathsf{Setup}(1^{\lambda})$:] This algorithm takes as input $1^{\lambda}$, where $\lambda$ is the security parameter, and outputs  public parameter $\mathsf{pp}$, a master secret key $\mathsf{msk}$, and a master decryption key $\mathsf{mdk}$. 
	
	\item[$\mathsf{KeyGen}(\mathsf{msk},\mathsf{id},\mathcal{P}_{\mathsf{id}})$:] On inputs $\mathsf{msk}$, a user identity $\mathsf{id}\in\{0,1\}^*$, and  a set of policies $\mathcal{P}_{\mathsf{id}}$ on which the user is able to sign messages, this algorithm outputs  a key $\mathsf{usk}_{\mathsf{id}}$ for user $\mathsf{id}$ on all polices in $\mathcal{P}_{\mathsf{id}}$.

	\item[$\mathsf{Sign}(\mathsf{usk}_{\mathsf{id}},m,w)$:] It takes $\mathsf{usk}_{\mathsf{id}}$ of user $\mathsf{id}$, a message $m\in \{0,1\}^*$, and a witness $w\in \{0,1\}^*$ as inputs, and outputs a signature $\sigma$ or $\bot$ if it fails.  
	
	\item[$\mathsf{Verify}(\mathsf{pp},m,\sigma)$:] This algorithm takes $\mathsf{pp}$, $m$, and $\sigma$ as inputs, and outputs~$1$ or~$0$, indicating the validity of the signature $\sigma$ on  message $m$. 
	
	\item[$\mathsf{Open}(\mathsf{mdk},m,\sigma)$:] This algorithm takes $\mathsf{mdk}$, $m$, and $\sigma$ as inputs, and outputs an identity $\mathsf{id}$, or it fails and outputs $\bot$. 
\end{description}
\noindent\textbf{Correctness of TPBS.}  The scheme is said to be correct with respect to a policy checker $\mathcal{PC}$, if for all $\lambda$, all $(\mathsf{pp},\mathsf{msk},\mathsf{mdk})\leftarrow\mathsf{Setup}(1^{\lambda})$, all $\mathsf{id}\in\mathcal{ID}$,   $\mathsf{usk}_{\mathsf{id}}\leftarrow \mathsf{KeyGen}(\mathsf{msk},\mathsf{id},\mathcal{P}_{\mathsf{id}})$, all $(m,w)$ such that $\exists\hspace*{2pt}p\in \mathcal{P}_{\mathsf{id}}, \mathsf{PC}\big((p,m),w\big)=1$, for all $\sigma\leftarrow \mathsf{Sign}(\mathsf{usk}_{\mathsf{id}},m,w)$, we have
\begin{eqnarray*}
\mathsf{Verify}(\mathsf{pp},m,\sigma)=1~~\text{and}~~ \mathsf{Open}(\mathsf{msk},m,\sigma)=\mathsf{id}. 
\end{eqnarray*} 
{\bf{Discussion.}} TPBS extends PBS to include user identities in the $\mathsf{KeyGen}$, $\mathsf{Sign}$ functions, and  adds the $\mathsf{Open}$ function to allow tracing of a signature to its signer.  Security requirements of TPBS are aligned with PBS, ensuring privacy of the policy that is used for signing the message, and unforgeability of the signature. Bellare and Fuchsbauer~\cite{BF14PKC} argued that the traditional notions of indistinguishability and unforgeability are not sufficient for PBS for some applications,  and introduced stronger notions of simulatability and extractability instead. Their argument can be extended to TPBS leading to simulatability and extractability as appropriate security notions. 

Privacy of a TPBS scheme demands that a signature should not  reveal the identity or the  policy that is associated with  the signing key, 
nor it should leak information about the witness used. That is the following indistinguishability conditions must hold: 
(1) for an identity $\mathsf{id}$,  two signatures on a message $m$ generated under two conforming  policies $p_0, p_1$ with witness $w_0,w_1$,  should be indistinguishable; 
(2) for a policy $p$, two signatures on a message $m$ with witness $w$ (that satisfies policy $p$), generated by two users with identities
$\mathsf{id}_0, \mathsf{id}_1$, and  both are authorized to sign $m$ under $p$, cannot
be distinguished. 
Using an argument similar to PBS we note that  there may exist only one policy $p$ for a message $m$ and so the indistinguishability-based definition will not be able to  hide the policy. 
Simulatability-based definition however addresses this problem: it  requires  no $\mathrm{PPT}$ adversary be able to distinguish a simulated signature from a legitimately signed signature. We will use this notion for TPBS schemes.

\smallskip

\begin{figure}[ht]
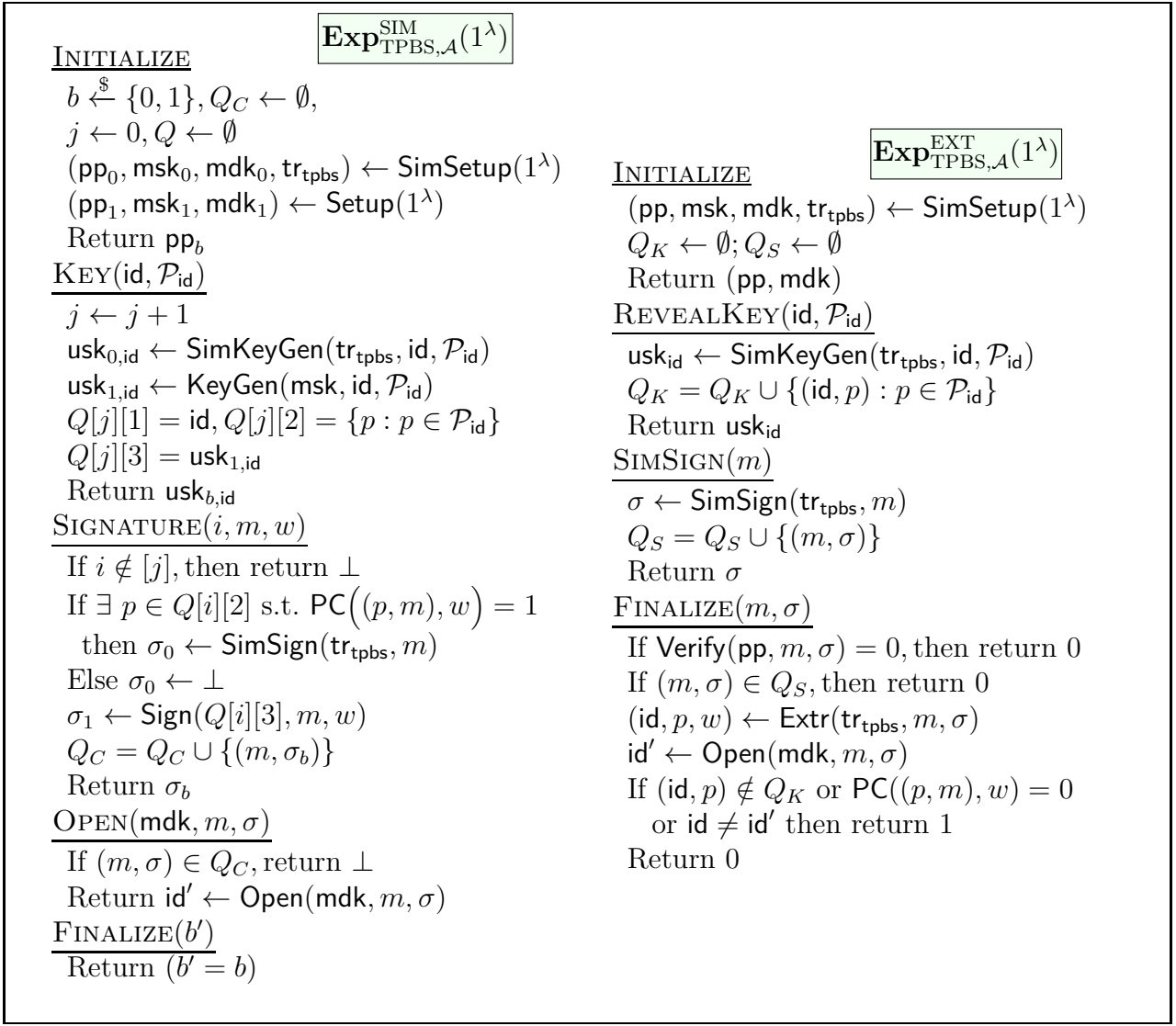

	\begin{center}
		\begin{tabular}{cc}
			\begin{minipage}{7.6cm}
				\underline{\sc Initialize} \hspace*{1.6cm}\tikz \draw[fill=green!5] (2.8,-0.3) rectangle (0.0168cm, 12pt) node[pos=0.5]{$\mathbf{Exp}_{\mathrm{TPBS},\mathcal{A}}^{\mathrm{SIM}}(1^{\lambda})$};\\
				$\hspace*{6pt} b\xleftarrow{\$}\{0,1\}, Q_C\leftarrow \emptyset$, \\
				$\hspace*{6pt}j\leftarrow 0, Q\leftarrow\emptyset$\\
				$\hspace*{6pt}(\mathsf{pp}_0,\mathsf{msk}_0,\mathsf{mdk}_0,\mathsf{tr}_{\mathsf{tpbs}})\leftarrow \mathsf{SimSetup}(1^\lambda)$\\
				$\hspace*{6pt}(\mathsf{pp}_1,\mathsf{msk}_1,\mathsf{mdk}_1)\leftarrow\mathsf{Setup}(1^\lambda)$\\
				$\hspace*{6pt}\text{Return}~\mathsf{pp}_b$
				
				\underline{\sc Key$(\mathsf{id},\mathcal{P}_{\mathsf{id}})$}\vspace*{2.2pt}\\
				$\hspace*{6pt}j\leftarrow j+1$\\
				$\hspace*{6pt}\mathsf{usk}_{0,\mathsf{id}}\leftarrow \mathsf{SimKeyGen}(\mathsf{tr}_{\mathsf{tpbs}},\mathsf{id},\mathcal{P}_{\mathsf{id}})$\\
				$\hspace*{6pt}\mathsf{usk}_{1,\mathsf{id}}\leftarrow \mathsf{KeyGen}(\mathsf{msk},\mathsf{id},\mathcal{P}_{\mathsf{id}})$\\
				$\hspace*{6pt}Q[j][1]=\mathsf{id}, Q[j][2]=\{p:p\in \mathcal{P}_{\mathsf{id}}\}$\\
				$\hspace*{6pt}Q[j][3]=\mathsf{usk}_{1,\mathsf{id}}$\\
				$\hspace*{6pt}\text{Return}~\mathsf{usk}_{b,\mathsf{id}}$

				\underline{\sc Signature$(i,m,w)$}\vspace*{2.2pt}\\
				$\hspace*{6pt}\text{If}~ i\notin [j], \text{then~return}~\bot$\\
				$\hspace*{6pt}\text{If}~\exists~p\in Q[i][2]~\text{s.t.}~\mathsf{PC}\big((p,m),w\big)=1 $ \\
				$\hspace*{12pt}\text{then}~\sigma_0\leftarrow\mathsf{SimSign}(\mathsf{tr}_{\mathsf{tpbs}},m)$ \\
				$\hspace*{6pt}\text{Else} ~\sigma_0\leftarrow \bot$ \\
				$\hspace*{6pt}\sigma_1\leftarrow\mathsf{Sign}(Q[i][3],m,w)$\\
				$\hspace*{6pt}Q_C=Q_C\cup \{(m,\sigma_b)\}$\\
				$\hspace*{6pt} \text{Return}~\sigma_b$

				\underline{\sc Open$(\mathsf{mdk},m,\sigma)$}\vspace*{2.2pt}\\
				$\hspace*{6pt}\text{If}~(m,\sigma)\in Q_C,\text{return}~\bot$\\
				$\hspace*{6pt}\text{Return}~\mathsf{id}'\leftarrow\mathsf{Open}(\mathsf{mdk},m,\sigma)$

				\underline{\sc Finalize$(b')$}\\
				$\hspace*{6pt}\text{Return}~(b'=b)$
			\end{minipage}
			&	
			\begin{minipage}{7.3cm}
				
				\underline{\sc Initialize}\hspace*{1.6cm}\tikz \draw[fill=green!5] (2.9,-0.3) rectangle (0.168cm, 12pt) node[pos=0.5]{$\mathbf{Exp}_{\mathrm{TPBS},\mathcal{A}}^{\mathrm{EXT}}(1^{\lambda})$};\\
				$\hspace*{6pt}(\mathsf{pp},\mathsf{msk},\mathsf{mdk},\mathsf{tr}_{\mathsf{tpbs}})\leftarrow\mathsf{SimSetup}(1^{\lambda})$\\
				$\hspace*{6pt}Q_K\leftarrow \emptyset; Q_S\leftarrow\emptyset$\\ $\hspace*{6pt}\text{Return}~(\mathsf{pp},\mathsf{mdk})$ 
				
				\underline{\sc RevealKey$(\mathsf{id},\mathcal{P}_{\mathsf{id}})$}\vspace*{2.2pt}\\
				$\hspace*{6pt}\mathsf{usk}_{\mathsf{id}}\leftarrow\mathsf{SimKeyGen}(\mathsf{tr}_{\mathsf{tpbs}},\mathsf{id},\mathcal{P}_{\mathsf{id}})$\\
				$\hspace*{6pt}Q_K=Q_K\cup \{(\mathsf{id},p):p\in \mathcal{P}_{\mathsf{id}}\}$\\  $\hspace*{6pt}\text{Return}~\mathsf{usk}_{\mathsf{id}}$
				
				\underline{\sc SimSign$(m)$}\vspace*{2.2pt}\\
				$\hspace*{6pt} \sigma\leftarrow\mathsf{SimSign}(\mathsf{tr}_{\mathsf{tpbs}},m)$\\
				$\hspace*{6pt}Q_S=Q_S\cup \{(m,\sigma)\}$\\
				$\hspace*{6pt}\text{Return}~\sigma$
				
				\underline{\sc Finalize$(m,\sigma)$}\vspace*{2.2pt}\\
				$\hspace*{6pt}\text{If}~\mathsf{Verify}(\mathsf{pp},m,\sigma)=0,\text{then~return}~0$\\
				$\hspace*{6pt}\text{If}~(m,\sigma)\in Q_S,\text{then~return}~0$\\
				$\hspace*{6pt}(\mathsf{id},p,w)\leftarrow \mathsf{Extr}(\mathsf{tr}_{\mathsf{tpbs}},m,\sigma)$\\
				$\hspace*{6pt}\mathsf{id}'\leftarrow \mathsf{Open}(\mathsf{mdk},m,\sigma)$\\
				$\hspace*{6pt}\text{If}~(\mathsf{id},p)\notin Q_K~\text{or}~\mathsf{PC}((p,m),w)=0$ \\
				$\hspace*{12pt}~\text{or}~\mathsf{id}\neq \mathsf{id}'~\text{then~return}~1$\\
				$\hspace*{6pt}\text{Return}~0$
				
			\end{minipage}
		\end{tabular}
	\end{center}
	\caption{Games defining simulatability and extractability of TPBS.}
	\label{fbgs-fig:sim-ext}
\end{figure}

\noindent{\bf Simulatability.} This notion requires that one cannot distinguish a signature generated by a simulator without having access to either the signing key of any identity or witness from a legitimately signed signature. To define simulatability,  we require simulated algorithms $\mathsf{SimSetup}$, $\mathsf{SimKeyGen}$, and $\mathsf{SimSign}$  as in~\cite{BF14PKC}. Algorithm $\mathsf{SimSetup}$ outputs $(\mathsf{pp},\mathsf{msk},\mathsf{mdk},\mathsf{tr}_{\mathsf{tpbs}})$ such that $\mathsf{pp}$ is indistinguishable from the one generated by  $\mathsf{Setup}$. Algorithm $\mathsf{SimKeyGen}$   outputs keys indistinguishable from those produced  by $\mathsf{KeyGen}$. On inputs trapdoor  $\mathsf{tr}_{\mathsf{tpbs}}$  and a message, algorithm $\mathsf{SimSign}$ outputs a signature  indistinguishable from that honestly produced by $\mathsf{Sign}$.  Details of the requirement are modeled in  the experiment $\mathbf{Exp}_{\mathrm{TPBS},\mathcal{A}}^{\mathrm{SIM}}(1^{\lambda})$ in Figure~\ref{fbgs-fig:sim-ext}.  The differences between this  experiment and Bellare and Fuchsbauer's corresponding experiment~\cite[Figure~$2$]{BF14PKC} are the following: (1) we include user identity in the relevant algorithms; (2) we provide an opening oracle $\text{\sc Open}(\mathsf{mdk},\cdot,\cdot)$ to the adversary and require that the queried message signature pair to this oracle is not from the challenge oracle $\text{\sc Signature}(\cdot,\cdot,
\cdot)$. Here (1) is a natural extension of PBS to TPBS, and (2) captures the possibility of an adversary seeing the results of previous openings. Note that the output of  $\mathsf{SimSign}$  does not depend on the user identity (no $\mathsf{id}$ related input) while the output of  $\mathsf{Sign}$ indeed relies on the user identity. Therefore, leaking $\mathsf{mdk}$ to the adversary enables it to run the algorithm $\mathsf{Open}$ and to distinguish a simulated signature  from an honestly generated one trivially. 
\smallskip

Define $\mathbf{Adv}_{\mathrm{TPBS},\mathcal{A}}^{\mathrm{SIM}}(1^{\lambda})=|\mathrm{Pr}[\mathbf{Exp}_{\mathrm{TPBS},\mathcal{A}}^{\mathrm{SIM}}(1^{\lambda})=1]-\frac{1}{2}|$ as the advantage of an adversary $\mathcal{A}$ against simulatability with $\mathbf{Exp}_{\mathrm{TPBS},\mathcal{A}}^{\mathrm{SIM}}(1^{\lambda})$ defined in Figure~\ref{fbgs-fig:sim-ext}. A TPBS scheme is said to be simulatable if  $\mathbf{Adv}_{\mathrm{TPBS},\mathcal{A}}^{\mathrm{SIM}}(1^{\lambda})$ is negligible in $\lambda$ for all  $\mathrm{PPT}$ adversary $\mathcal{A}$.

\smallskip
\noindent{\bf Discussion.} Using a typical unforgeability notion for TPBS would yield the same difficulty that was noted in the case of PBS.  More specifically, the experiment that defines unforgeability will need  the checking of  $(p,m)\in \mathcal{L}(\mathsf{PC})$ to determine if the  adversary has won the game. A problem that may arise is when the proof uses  game hopping, and between two games a distinguisher must efficiently determine whether an adversary has won the game. Using the stronger notion of extractability remedies this problem. 

\smallskip

\noindent {\bf Extractability.}  This is defined by requiring that   whenever a $\mathrm{PPT}$ adversary $\mathcal{A}$ outputs a valid message signature pair $(m,\sigma)$ that is not obtained from an oracle,  there exists an extractor $\mathsf{Extr}$ that  uses trapdoor $\mathsf{tr}_{\mathsf{tpbs}}$ to extract a tuple $(\mathsf{id},p,w)$ such that $\mathcal{A}$ must have queried the key for $(\mathsf{id},p)$, $\mathsf{PC}\big((p,m),w\big)=~1$, and  $\mathsf{Open}(\mathsf{mdk},m,\sigma)= \mathsf{id}$. 
These requirements  are modeled in~$\mathbf{Exp}_{\mathrm{TPBS},\mathcal{A}}^{\mathrm{EXT}}(1^{\lambda})$ in Figure~\ref{fbgs-fig:sim-ext}, where $\mathcal{A}$ receives $(\mathsf{pp},\mathsf{mdk})$ generated by the algorithm $\mathsf{SimSetup}$. In addition, $\mathcal{A}$ can obtain a simulated  user signing key by querying $\text{\sc RevealKey}$ on input $(\mathsf{id},\mathcal{P}_{\mathsf{id}})$, and a simulated  signature on a message $m$ by querying {\sc SimSign}.    Note that compared to the corresponding experiment in~\cite[Figure~$2$]{BF14PKC}, we  include the user identity in the relevant algorithms  and   specify a new case $\mathsf{Open}(\mathsf{mdk},m,\sigma)\neq \mathsf{id}$ in the {\sc Finalize} step that captures the inability of  a $\mathrm{PPT}$ adversary $\mathcal{A}$, who has queried signing keys for a set of users, to  output a valid signature that cannot be opened or  traced to a member of the queried user group. Algorithm $\mathsf{Open}$ serves as a mechanism to prevent misuse of signing ability and enforces user accountability.  

\smallskip

 Define $\mathbf{Adv}_{\mathrm{TPBS},\mathcal{A}}^{\mathrm{EXT}}(1^{\lambda})=\mathrm{Pr}[\mathbf{Exp}_{\mathrm{TPBS},\mathcal{A}}^{\mathrm{EXT}}(1^{\lambda})=1]$ as the advantage of an adversary $\mathcal{A}$ against extractability with the experiment $\mathbf{Exp}_{\mathrm{TPBS},\mathcal{A}}^{\mathrm{EXT}}(1^{\lambda})$  defined in Figure~\ref{fbgs-fig:sim-ext}. A TPBS scheme is  said to be extractable if  $\mathbf{Adv}_{\mathrm{TPBS},\mathcal{A}}^{\mathrm{EXT}}(1^{\lambda})$ is negligible in $\lambda$ for all $\mathrm{PPT}$ adversary $\mathcal{A}$.

\section{Generic Construction of Traceable Policy-Based Signatures} \label{tpbs-section:tpbs-generic-con}

We present a generic construction of TPBS for any $\mathcal{NP}$-relation $\mathsf{PC}$ in Section~\ref{tpbs-subsection:gener-con} and show its simulatability and extractability in Section~\ref{tpbs-subsection:security-proofs}.  Our construction relies on an EU-CMA secure signature scheme, an IND-CCA secure public-key encryption scheme and an SE-NIZK proof system. The standard definitions of these primitives are recalled in~\ref{tpbs-subsection:primitive-for-gener-con}.

\subsection{Generic Construction}\label{tpbs-subsection:gener-con}
To construct a traceable policy-based signature that satisfies simulatability and extractability, our starting point is the generic construction of PBS using SE-NIZK proof~\cite[Figure~$4$]{BF14PKC}. In this construction, the issuer first uses $\mathsf{Setup}$ algorithm to generate a signature key pair $(\mathsf{mvk},\mathsf{msk})$ and a common reference string $\mathsf{crs}$ for an SE-NIZK proof system $\Pi$, and makes $(\mathsf{mvk},\mathsf{crs})$ public. It then runs  $\mathsf{KeyGen}$ to generate a signing key for a policy $p$ by generating a signature on $p$ using $\mathsf{msk}$. A user that holds  a key for the policy $p$  can sign a message $m$ by providing a  zero-knowledge proof $\pi$ that shows possession of a policy $p$ satisfying $(p,m)\in \mathcal{L}(\mathsf{PC})$ and a signature on $p$ which is verifiable under $\mathsf{mvk}$.  The actual signature is $\pi$.

In TPBS, users have identities and will receive keys that are signatures on $\mathsf{id}\|p$ for  all policies that they are authorized to sign.  The signing algorithm uses $\mathsf{id}$ as an  input, and must ensure that the signature can be ``opened''.  To this end, we require that algorithm $\mathsf{Setup}$  generates an encryption key pair $(\mathsf{mek},\mathsf{mdk})$,  and that user first encrypts its identity using $\mathsf{mek}$ when signing a  message $m$ and then proves in zero-knowledge that its encrypted identity is the same as the one for which a valid signing key is known (i.e. a signature on $\mathsf{id}\|p$ is known) such that $(p,m)\in \mathcal{L}(\mathsf{PC})$.  The algorithm $\mathsf{Open}$ on input $\mathsf{mdk}$ and a message signature pair outputs an identity, specifying the originator of the signature.

To define simulatability and extractability, we require four additional algorithms $\mathsf{SimSetup}$, $\mathsf{SimKeyGen}$, $\mathsf{SimSign}$, and $\mathsf{Extr}$. Algorithm $\mathsf{SimSetup}$ is the same as $\mathsf{Setup}$ except that it runs simulated setup algorithm $\mathsf{SimSetup}_{\mathsf{nizk}}$  of the proof system $\Pi$, obtaining a simulated $\mathsf{crs}$ and $\mathsf{tr}$. Algorithm $\mathsf{SimKeyGen}$ is the same as $\mathsf{KeyGen}$.  
For algorithm $\mathsf{SimSign}$, we first encrypt a dummy identity $\mathbf{0}$ and then employ the trapdoor $\mathsf{tr}$ to run $\mathsf{SimProve}$ of the proof system $\Pi$. For algorithm $\mathsf{Extr}$, we run  $\mathsf{Extr}_{\mathsf{nizk}}$ of  the proof system $\Pi$ by utilizing $\mathsf{tr}$ as well. In the following we formalize the above approach. 
For a policy checker $\mathsf{PC}$, define an $\mathcal{NP}$-relation $\rho_\mathsf{tpbs}$ as follows: 
\begin{eqnarray*}
&&\big ((\mathsf{mek},\mathsf{mvk},m,\mathsf{ct}),(\mathsf{id},p,\mathsf{cert}_{\mathsf{id}\|p},w,\mathbf{r})\big) \in \rho_\mathsf{tpbs} \\
\Longleftrightarrow &&\mathsf{ct}=\mathsf{Enc}(\mathsf{mek},\mathsf{id}; \mathbf{r}) \wedge \mathsf{Verify}_{\mathsf{sig}}(\mathsf{mvk},\mathsf{id}\|p,\mathsf{cert}_{\mathsf{id}\|p})=1 \wedge \mathsf{PC}\big((p,m),w\big)=1
\end{eqnarray*}
Let $\mathrm{SIG}=(\mathsf{KeyGen}_{\mathsf{sig}},\mathsf{Sign}_{\mathsf{sig}}, \mathsf{Verify}_{\mathsf{sig}})$ be a signature scheme that is EU-CMA secure, let $\mathrm{PKE}=(\mathsf{KeyGen}_{\mathsf{pke}},\mathsf{Enc},\mathsf{Dec})$ be a public-key encryption scheme that satisfies IND-CCA security, and let $\Pi=(\mathsf{Setup}_{\mathsf{nizk}},\mathsf{Prove},\mathsf{Verify}_{\mathsf{nizk}},\mathsf{SimSetup}_{\mathsf{nizk}},\mathsf{SimProve},\mathsf{Extr}_{\mathsf{nizk}})$ be an SE-NIZK proof for relation~$\rho_{\mathsf{tpbs}}$. 
Our construction of TPBS scheme is depicted in Figure~\ref{tpbs-fig: generic-construction}. 

\smallskip
\noindent{\bf Correctness.} Correctness of our generic construction directly follows from completeness of the underlying proof system, signature scheme, and encryption scheme. 

\begin{figure}[ht]
		\begin{center}
		\begin{tabular}{cc}
			\begin{minipage}{8cm}
				\underline{$\mathsf{Setup}(1^{\lambda})$}\\
				$\hspace*{6pt}\mathsf{crs}\leftarrow\mathsf{Setup}_{\mathsf{nizk}} (1^\lambda)$  \\
				$\hspace*{6pt}(\mathsf{mek},\mathsf{mdk})\leftarrow\mathsf{KeyGen}_{\mathsf{pke}}(1^{\lambda})$\\
				$\hspace*{6pt} (\mathsf{mvk},\mathsf{msk})\leftarrow\mathsf{KeyGen}_{\mathsf{sig}}(1^{\lambda})$\\
				$\hspace*{6pt}\text{Return}~\mathsf{pp}\leftarrow(\mathsf{crs},\mathsf{mek},\mathsf{mvk}), \mathsf{msk}, \mathsf{mdk}$
				
				\underline{$\mathsf{KeyGen}(\mathsf{msk},\mathsf{id},\mathcal{P}_{\mathsf{id}})$}\vspace*{2.2pt}\\
				$\hspace*{6pt}\forall~p\in \mathcal{P}_{\mathsf{id}},\text{compute}$\\ $\hspace*{12pt}\mathsf{cert}_{\mathsf{id}\|p} \leftarrow \mathsf{Sign}_{\mathsf{sig}}(\mathsf{msk},\mathsf{id}\|p)$\\
				$\hspace*{6pt}\text{Set}~\mathsf{usk}_{\mathsf{id}}\leftarrow (\mathsf{id},\{(p, \mathsf{cert}_{\mathsf{id}\|p}): p\in \mathcal{P}_{\mathsf{id}} \})$ \\
				$\hspace*{6pt}\text{Return}~\mathsf{usk}_{\mathsf{id}}$ 
				
				\underline{$\mathsf{Sign}(\mathsf{usk}_{\mathsf{id}},m,w)$}\vspace*{2.2pt}\\
				$\hspace*{6pt}\text{Parse}~\mathsf{usk}_{\mathsf{id}}= (\mathsf{id},\{(p, \mathsf{cert}_{\mathsf{id}\|p}): p\in \mathcal{P}_{\mathsf{id}}\})$\\
				$\hspace*{6pt}\text{If}~\exists~p\in \mathcal{P}_{\mathsf{id}},~\text{s.t.}~\mathsf{PC}\big( (p,m),w\big)=1$\\
				$\hspace*{12pt}\mathbf{r}\leftarrow\{0,1\}^{\mathsf{poly}(\lambda)}, \mathsf{ct}\leftarrow \mathsf{Enc}(\mathsf{mek,\mathsf{id};\mathbf{r}})$\\
				$\hspace*{12pt}\pi \leftarrow\mathsf{Prove}\big(\mathsf{crs}, (\mathsf{mek},\mathsf{mvk},m,\mathsf{ct})$,\\
				$\hspace*{96pt}(\mathsf{id}, p,\mathsf{cert}_{\mathsf{id}\|p},w,\mathbf{r})\big)$\\
				$\hspace*{12pt}\text{Return}~\sigma\leftarrow (\mathsf{ct},\pi)$\\
				$\hspace*{6pt}\text{Else~return}~\bot$
				
				\underline{$\mathsf{Verify}(\mathsf{pp},m,\sigma)$} \vspace*{2.2pt}\\
				$\hspace*{6pt}\text{Parse}~\mathsf{pp}=(\mathsf{crs},\mathsf{mek},\mathsf{mvk}), \sigma=(\mathsf{ct},\pi)$\\
				$\hspace*{6pt}\text{Return}~\mathsf{Verify}_{\mathsf{nizk}}(\mathsf{crs},(\mathsf{mek},\mathsf{mvk},m,\mathsf{ct}),\pi)$
				
				\underline{$\mathsf{Open}(\mathsf{mdk},m,\sigma)$}\vspace*{2.2pt}\\
					$\hspace*{6pt}\text{If}~\mathsf{Verify}(\mathsf{pp},m,\sigma)=0,\text{return}~\bot$\\
				$\hspace*{6pt}\text{Parse}~\sigma=(\mathsf{ct},\pi)$\\
				$\hspace*{6pt}\text{Else~return}~\mathsf{Dec}(\mathsf{mdk},\mathsf{ct})$			
			\end{minipage}
			& 
			\begin{minipage}{8cm}
				\underline{$\mathsf{SimSetup}(1^{\lambda})$}\\
				$\hspace*{6pt}(\mathsf{crs},\mathsf{tr})\leftarrow\mathsf{SimSetup}_{\mathsf{nizk}} (1^\lambda)$  \\
				$\hspace*{6pt}(\mathsf{mek},\mathsf{mdk})\leftarrow\mathsf{KeyGen}_{\mathsf{pke}}(1^{\lambda})$\\
				$\hspace*{6pt} (\mathsf{mvk},\mathsf{msk})\leftarrow\mathsf{KeyGen}_{\mathsf{sig}}(1^{\lambda})$\\
				$\hspace*{6pt}\text{Return}\hspace*{2pt}\mathsf{pp}\hspace*{-2pt}\leftarrow\hspace*{-2pt}(\mathsf{crs},\mathsf{mek},\mathsf{mvk})$\\
				 $\hspace*{42pt}\mathsf{msk}, \mathsf{mdk}$\\
				$\hspace*{42pt}\mathsf{tr}_{\mathsf{tpbs}}\leftarrow(\mathsf{msk},\mathsf{mdk},\mathsf{tr})$

				\underline{$\mathsf{SimKeyGen}(\mathsf{tr}_{\mathsf{tpbs}},\mathsf{id},\mathcal{P}_{\mathsf{id}})$}\vspace*{2.2pt}\\
				$\hspace*{6pt}\text{Parse}~\mathsf{tr}_{\mathsf{tpbs}}=(\mathsf{msk},\mathsf{mdk},\mathsf{tr})$\\
				$\hspace*{6pt}\forall~p\in \mathcal{P}_{\mathsf{id}},\text{compute}$\\
				$\hspace*{12pt}\mathsf{cert}_{\mathsf{id}\|p} \leftarrow \mathsf{Sign}_{\mathsf{sig}}(\mathsf{msk},\mathsf{id}\|p)$\\
				$\hspace*{6pt}\text{Set}~\mathsf{usk}_{\mathsf{id}}\leftarrow (\mathsf{id},\{(p, \mathsf{cert}_{\mathsf{id}\|p}): p\in \mathcal{P}_{\mathsf{id}} \})$ \\
				$\hspace*{6pt}\text{Return}~\mathsf{usk}_{\mathsf{id}}$

				\underline{$\mathsf{SimSign}(\mathsf{tr}_{\mathsf{tpbs}}, m)$}\vspace*{2.2pt}\\
				$\hspace*{6pt}\text{Parse}~\mathsf{tr}_{\mathsf{tpbs}}=(\mathsf{msk},\mathsf{mdk},\mathsf{tr})$\\
				$\hspace*{6pt}\mathbf{r}\leftarrow\{0,1\}^{\mathsf{poly}(\lambda)}, \mathsf{ct}\leftarrow \mathsf{Enc}(\mathsf{mek,\mathbf{0};\mathbf{r}})$\\
				$\hspace*{6pt}\pi \leftarrow\mathsf{SimProve}\big(\mathsf{crs},\mathsf{tr}, (\mathsf{mek},\mathsf{mvk},m,\mathsf{ct})\big)$\\
				$\hspace*{6pt}\text{Return}~\sigma\leftarrow (\mathsf{ct},\pi)$
				
				\underline{$\mathsf{Extr}( \mathsf{tr}_{\mathsf{tpbs}},m,\sigma)$} \vspace*{2.2pt}\\
				$\hspace*{6pt}\text{Parse}~\mathsf{tr}_{\mathsf{tpbs}}=(\mathsf{msk},\mathsf{mdk},\mathsf{tr}), \sigma=(\mathsf{ct},\pi)$\\
				$\hspace*{6pt}(\mathsf{id},p,\mathsf{cert}_{\mathsf{id}\|p},w,\mathbf{r})\leftarrow $\\
				$\hspace*{42pt}\mathsf{Extr}_{\mathsf{nizk}}(\mathsf{tr},(\mathsf{mek},\mathsf{mvk},m,\mathsf{ct}),\pi)$\\
				$\hspace*{6pt}\text{Return}~(\mathsf{id},p,w)$

			\end{minipage}
		\end{tabular}
	\end{center}
	\caption{Generic construction of TPBS based on SE-NIZK. }\label{tpbs-fig: generic-construction}
\end{figure}	

\subsection{Security Analysis}\label{tpbs-subsection:security-proofs}
We prove extractability and simulatability of our scheme in Theorem~\ref{tpbs-thm: EXT} and Theorem~\ref{tpbs-thm: SIM}, respectively. \remove{Proof of extractability is similar to that of~\cite{BF14PKC} except that we add a new case  $\mathsf{Dec}(\mathsf{mdk},\mathsf{ct})\neq \mathsf{id}$ in Type 2 adversary. Due to the encryption of the dummy identity~$\mathbf{0}$ and the opening oracle $\text{\sc Open}(\mathsf{mdk},\cdot,
\cdot)$, the proof of simulatability requires additional security assumptions as compared to~\cite{BF14PKC}. The details are described  in Theorem~\ref{tpbs-thm: EXT} and Theorem~\ref{tpbs-thm: SIM} below. }

\begin{theorem}\label{tpbs-thm: EXT}
	Our construction of traceable policy-based signature scheme depicted in Figure~\ref{tpbs-fig: generic-construction} satisfies extractability as modeled in  $\mathbf{Exp}_{\mathrm{TPBS},\mathcal{A}}^{\mathrm{EXT}}(1^{\lambda})$ in Figure~\ref{fbgs-fig:sim-ext}, if the underlying signature scheme SIG is  EU-CMA secure,  and the proof $\Pi$ is  simulation-sound extractable. 
\end{theorem}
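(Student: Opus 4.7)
The plan is to reduce extractability to the simulation-sound extractability (SE) of $\Pi$ and the EU-CMA security of $\mathrm{SIG}$, exploiting perfect decryption correctness of $\mathrm{PKE}$ to dispose of the opening-mismatch case essentially for free. I would first invoke SE of $\Pi$: since the adversary's output $(m,\sigma)=(m,(\mathsf{ct},\pi))$ satisfies $\mathsf{Verify}_{\mathsf{nizk}}(\mathsf{crs},(\mathsf{mek},\mathsf{mvk},m,\mathsf{ct}),\pi)=1$ and $(m,\sigma)\notin Q_S$, the extractor $\mathsf{Extr}_{\mathsf{nizk}}$ run on $\mathsf{tr}$ returns, except with negligible probability $\varepsilon_\Pi$, a witness $(\mathsf{id},p,\mathsf{cert}_{\mathsf{id}\|p},w,\mathbf{r})\in\rho_{\mathsf{tpbs}}$. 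Note that this already handles the sub-case $\mathsf{PC}((p,m),w)=0$ in the \textsc{Finalize} step: it contradicts the extracted witness, so the probability of this sub-case is at most~$\varepsilon_\Pi$.

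Next, conditioning on successful extraction, the encrypted identity in $\mathsf{ct}$ is exactly $\mathsf{id}$ with the specific randomness $\mathbf{r}$, i.e., $\mathsf{ct}=\mathsf{Enc}(\mathsf{mek},\mathsf{id};\mathbf{r})$. By perfect (or statistically overwhelming) correctness of $\mathrm{PKE}$, we have $\mathsf{Dec}(\mathsf{mdk},\mathsf{ct})=\mathsf{id}$, hence $\mathsf{id}'=\mathsf{id}$. This dispatches the sub-case $\mathsf{id}\neq\mathsf{id}'$ up to the decryption-error probability, which is negligible.

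The only remaining sub-case to bound is $(\mathsf{id},p)\notin Q_K$ with a valid extracted $(\mathsf{id},p,\mathsf{cert}_{\mathsf{id}\|p})$. Here I build a reduction $\mathcal{B}$ to EU-CMA of $\mathrm{SIG}$: $\mathcal{B}$ receives $\mathsf{mvk}$ from its challenger, runs $(\mathsf{crs},\mathsf{tr})\leftarrow\mathsf{SimSetup}_{\mathsf{nizk}}(1^\lambda)$ and $(\mathsf{mek},\mathsf{mdk})\leftarrow\mathsf{KeyGen}_{\mathsf{pke}}(1^\lambda)$, and returns $(\mathsf{pp},\mathsf{mdk})$ to $\mathcal{A}$ while storing $\mathsf{tr}_{\mathsf{tpbs}}=(\bot,\mathsf{mdk},\mathsf{tr})$ (the $\mathsf{msk}$ slot is never needed, since $\mathcal{B}$ routes every RevealKey query through its signing oracle, asking for signatures on $\mathsf{id}\|p$ for each $p\in\mathcal{P}_{\mathsf{id}}$, and recording those pairs in $Q_K$). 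SimSign queries are answered by encrypting $\mathbf{0}$ and invoking $\mathsf{SimProve}$, exactly as in the experiment. When $\mathcal{A}$ halts with a winning $(m,\sigma)$ in this sub-case, $\mathcal{B}$ runs $\mathsf{Extr}_{\mathsf{nizk}}$ to recover $(\mathsf{id},p,\mathsf{cert}_{\mathsf{id}\|p},w,\mathbf{r})$ and outputs $(\mathsf{id}\|p,\mathsf{cert}_{\mathsf{id}\|p})$ as a forgery; since $(\mathsf{id},p)\notin Q_K$, the string $\mathsf{id}\|p$ was never queried to the signing oracle, and the extracted certificate verifies under $\mathsf{mvk}$ by membership in $\rho_{\mathsf{tpbs}}$. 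Thus $\mathcal{B}$'s winning probability is at least the probability of this sub-case, minus~$\varepsilon_\Pi$.

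Combining, $\mathbf{Adv}_{\mathrm{TPBS},\mathcal{A}}^{\mathrm{EXT}}(1^\lambda)\le \varepsilon_\Pi+\mathbf{Adv}^{\mathrm{EU\text{-}CMA}}_{\mathrm{SIG},\mathcal{B}}(1^\lambda)+\mathsf{negl}(\lambda)$, which is negligible. The main technical point, though largely bookkeeping rather than a deep obstacle, is to verify that $\mathcal{B}$ perfectly simulates $\mathbf{Exp}^{\mathrm{EXT}}$ from $\mathcal{A}$'s view; in particular, $\mathcal{B}$ never needs $\mathsf{msk}$ because $\mathsf{SimKeyGen}$ in our construction coincides with $\mathsf{KeyGen}$ and merely produces signatures (which are delegated to the EU-CMA signing oracle), and the SimSign oracle depends only on $\mathsf{tr}$ and $\mathsf{mek}$, not on the signing key. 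The most delicate step is ensuring the SE extractor still succeeds inside $\mathcal{B}$'s simulation — this is exactly what simulation-sound extractability guarantees even in the presence of polynomially many simulated proofs, so no additional assumption is required.
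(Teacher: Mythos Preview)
Your proof is correct and follows essentially the same approach as the paper: partition according to whether the NIZK extractor returns a valid witness for $\rho_{\mathsf{tpbs}}$ (the paper labels these Type~1 and Type~2 adversaries) and reduce to EU-CMA of $\mathrm{SIG}$ and SE of $\Pi$ respectively, with the EU-CMA reduction simulating \textsc{RevealKey} via its signing oracle and \textsc{SimSign} via $\mathsf{SimProve}$ exactly as you describe. The only cosmetic difference is that you isolate PKE decryption correctness explicitly to dispatch the $\mathsf{id}\neq\mathsf{id}'$ sub-case, whereas the paper folds that case into the SE reduction by observing (implicitly using the same correctness) that $\mathsf{Dec}(\mathsf{mdk},\mathsf{ct})\neq\mathsf{id}$ forces the extracted tuple out of $\rho_{\mathsf{tpbs}}$.
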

\begin{proof}
	We reduce extractability to the unforgeability of the underlying signature scheme $\mathrm{SIG}$ and simulation-sound extractability of the proof system $\Pi$.  Note that in the {\sc Finalize} step of experiment $\mathbf{Exp}_{\mathrm{TPBS},\mathcal{A}}^{\mathrm{EXT}}(1^{\lambda})$, algorithm $\mathsf{Extr}(\mathsf{tr}_{\mathsf{tpbs}},m,(\mathsf{ct},\pi))$  actually runs $\mathsf{Extr}_{\mathsf{nizk}}(\mathsf{tr},m,(\mathsf{ct},\pi))$; denoting the output by $(\mathsf{id},p,\mathsf{cert}_{\mathsf{id}\|p},w,\mathbf{r})$. We distinguish two types of adversary. 
	\begin{description}
		\item[Type\hspace*{1pt}$1$]\hspace*{-3.6pt} $\mathsf{Verify}_{\mathsf{sig}}(\mathsf{mvk},\mathsf{id}\|p,\mathsf{cert}_{\mathsf{id}\|p})=1$ and $(\mathsf{id},p)\notin Q_K$. 
		
		\item[Type\hspace*{1pt}$2$]\hspace*{-3.6pt} $\mathsf{Verify}_{\mathsf{sig}}(\mathsf{mvk},\mathsf{id}\|p,\mathsf{cert}_{\mathsf{id}\|p})=0$ \hspace*{-1pt}or\hspace*{-1pt} $\mathsf{PC}\big((p,m),w\big)=0$ \hspace*{-1pt}or\hspace*{-1pt} $\mathsf{Dec}(\mathsf{mdk},\mathsf{ct})\hspace*{-1pt}\neq \hspace*{-1pt}\mathsf{id}$. 
	\end{description}	
Note that a winning adversary $\mathcal{A}$ is either Type~$1$ or Type~$2$. We now show that a Type~$1$ adversary can be used to break the unforgeability of SIG and a  Type~$2$ adversary can be utilized to breach simulation-sound extractability of $\Pi$. 

Let $\mathcal{A}$ be of Type~$1$. We construct $\mathcal{B}_s$, which utilizes $\mathcal{A}$ as a subroutine,  against EU-CMA of SIG as follows. The experiment is denoted by $\mathbf{Exp}_{\mathrm{SIG},\mathcal{B}_s[\mathcal{A}]}^{\mathrm{EU-CMA}}(1^{\lambda})$.  To begin with, $\mathcal{B}_s$ receives $\mathsf{mvk}$ from its own environment, computes $(\mathsf{crs},\mathsf{tr})\leftarrow \mathsf{SimSetup}_{\mathsf{nizk}}(1^{\lambda})$ and $(\mathsf{mek},\mathsf{mdk})\leftarrow \mathsf{KeyGen}_{\mathsf{pke}}(1^{\lambda})$, sets $Q_K\leftarrow \emptyset$, $Q_S\leftarrow\emptyset$, $ \mathsf{tr}_{\mathsf{tpbs}}\leftarrow (\cdot,\mathsf{mdk},\mathsf{tr})$, and invokes $\mathcal{A}$ by sending $\mathsf{pp}\leftarrow(\mathsf{crs},\mathsf{mek},\mathsf{mvk})$, $\mathsf{mdk}$. A query of {\sc SimSign} on message $m$ is dealt faithfully since $\mathcal{B}_s$ knows the trapdoor  $\mathsf{tr}$ to run $\mathsf{SimProve}$, which produces a signature $\sigma$. $\mathcal{B}_s$ also adds $(m,\sigma)$ to $Q_S$.  All {\sc RevealKey} queries on $(\mathsf{id}, \mathcal{P}_{\mathsf{id}})$ made by $\mathcal{A}$ can be answered by querying $\mathcal{B}_s$'s signing oracle $\mathsf{Sign}_{\mathsf{sig}}(\mathsf{msk},\cdot)$ on  $(\mathsf{id}\|p)$ for all $p\in\mathcal{P}_{\mathsf{id}} $. Meanwhile, $\mathcal{B}_s$ adds $\{ (\mathsf{id},p): p\in \mathcal{P}_{\mathsf{id}}\}$ to $Q_K$. 

When $\mathcal{A}$ outputs $\big(m, (\mathsf{ct},\pi)\big)$ and wins $\mathbf{Exp}_{\mathrm{TPBS},\mathcal{A}}^{\mathrm{EXT}}(1^{\lambda})$, we obtain the valid message signature pair $\big(m, (\mathsf{ct},\pi)\big)$  which is not in $Q_S$. Then $\mathcal{B}_s$ computes $(\mathsf{id},p,\mathsf{cert}_{\mathsf{id}\|p},w,\mathbf{r})\leftarrow\mathsf{Extr}_{\mathsf{nizk}}(\mathsf{tr},m,(\mathsf{ct},\pi))$. If $\mathcal{A}$ is of Type~$1$, then we have  $\mathsf{Verify}_{\mathsf{sig}}(\mathsf{mvk},\mathsf{id}\|p,\mathsf{cert}_{\mathsf{id}\|p})=1$ and $(\mathsf{id},p)\notin Q_K$. Therefore, $\mathcal{B}_s$ wins experiment $\mathbf{Exp}_{\mathrm{SIG},\mathcal{B}_s[\mathcal{A}]}^{\mathrm{EU-CMA}}(1^{\lambda})$ by outputting $(\mathsf{id}\|p, \mathsf{cert}_{\mathsf{id}\|p})$. Hence we have   
\begin{eqnarray}\label{tpbs-eq:EU-CMA-geq-EXT} \mathbf{Adv}_{\mathrm{SIG},\mathcal{B}_s[\mathcal{A}]}^{\mathrm{EU-CMA}}(1^{\lambda}) \geq\mathbf{Adv}_{\mathrm{TPBS},\mathcal{A}}^{\mathrm{EXT}}(\lambda). 
\end{eqnarray}

Let $\mathcal{A}$ be of Type~$2$. We now construct $\mathcal{B}_\pi$ against simulation-sound extractability of proof  $\Pi$. The experiment is denoted as  $\mathbf{Exp}_{\Pi,\mathcal{B}_{\pi}[\mathcal{A}]}^{\mathrm{SE}}(1^{\lambda})$.  First, $\mathcal{B}_{\pi}$ receives $\mathsf{crs}$ from its own environment, computes $(\mathsf{mek},\mathsf{mdk})\leftarrow \mathsf{KeyGen}_{\mathsf{pke}}(1^{\lambda})$ and $(\mathsf{mvk},\mathsf{msk})\leftarrow \mathsf{KeyGen}_{\mathsf{pke}}(1^{\lambda})$, sets $Q_K\leftarrow \emptyset$, $Q_S\leftarrow\emptyset$, $ \mathsf{tr}_{\mathsf{tpbs}}\leftarrow (\mathsf{msk},\mathsf{mdk},\cdot)$, and invokes $\mathcal{A}$ by sending $\mathsf{pp}\leftarrow(\mathsf{crs},\mathsf{mek},\mathsf{mvk})$, $\mathsf{mdk}$. Then $\mathcal{B}_{\pi}$ answers all queries to {\sc ReveaKey} faitfully by employing key $\mathsf{msk}$ and also maintains the list $Q_K$ as in Type~$1$. When $\mathcal{A}$ queries {\sc SimSign} on message $m$,  $\mathcal{B}_{\pi}$ first samples $\mathbf{r}\leftarrow \{0,1\}^{\mathrm{poly}(\lambda)}$, next computes $\mathsf{ct}\leftarrow \mathsf{Enc}(\mathsf{mek},\mathbf{0},\mathbf{r})$, and then queries its own oracle {\sc SimProve} on $(\mathsf{mek},\mathsf{mvk},m,\mathsf{ct})$ and receives back $\pi$, and finally  forwards $(\mathsf{ct},\pi)$ to $\mathcal{A}$ and adds $\big(m, (\mathsf{ct},\pi)\big)$ to $Q_S$.  

When $\mathcal{A}$ outputs $\big(m, (\mathsf{ct},\pi)\big)$ and wins $\mathbf{Exp}_{\mathrm{TPBS},\mathcal{A}}^{\mathrm{EXT}}(1^{\lambda})$, we have that  $\big(m, (\mathsf{ct},\pi)\big)$ is a valid message-signature pair and  is not in list $Q_S$.  More specifically, it implies the algorithm $\mathsf{Verify}_{\mathsf{nizk}}\big(\mathsf{crs}, (\mathsf{mek},\mathsf{mvk},m,\mathsf{ct}),\pi\big)$ outputs~$1$ and $\big(\mathsf{mek},\mathsf{mvk},m,\mathsf{ct}),\pi\big)$ is not in the list for {\sc SimProve} calls maintained by Experiment $\mathbf{Exp}_{\Pi,\mathcal{B}_{\pi}[\mathcal{A}]}^{\mathrm{SE}}(1^{\lambda})$.  Compute $(\mathsf{id},p,\mathsf{cert}_{\mathsf{id}\|p},w,\mathbf{r})\leftarrow\mathsf{Extr}_{\mathsf{nizk}}(\mathsf{tr},(\mathsf{mek},\mathsf{mvk},m,\mathsf{ct}),\pi))$. If $\mathcal{A}$ is of Type~$2$, either  $\mathsf{Verify}_{\mathsf{sig}}(\mathsf{mvk},\mathsf{id}\|p,\mathsf{cert}_{\mathsf{id}\|p})=0$ or $\mathsf{PC}\big((p,m),w\big)=0$ or $\mathsf{Dec}(\mathsf{mdk},\mathsf{ct})\neq \mathsf{id}$, implying $\big((\mathsf{mek},\mathsf{mvk},m,\mathsf{ct}), (\mathsf{id},p,\mathsf{cert}_{\mathsf{id}\|p},w,\mathbf{r})\big )\notin \rho_{\mathsf{tpbs}}.$ Therefore, $\mathcal{B}_{\pi}$ wins experiment $\mathbf{Exp}_{\Pi,\mathcal{B}_{\pi}[\mathcal{A}]}^{\mathrm{SE}}(1^{\lambda})$ by outputting $\big((\mathsf{mek},\mathsf{mvk},m,\mathsf{ct}),\pi\big)$ and we have 
\begin{eqnarray}\label{tpbs-eq:SE-geq-EXT} \mathbf{Adv}_{\Pi,\mathcal{B}_{\pi}[\mathcal{A}]}^{\mathrm{SE}}(1^{\lambda}) \geq\mathbf{Adv}_{\mathrm{TPBS},\mathcal{A}}^{\mathrm{EXT}}(\lambda).\end{eqnarray}

Combining~(\ref{tpbs-eq:EU-CMA-geq-EXT}) and~(\ref{tpbs-eq:SE-geq-EXT}), we obtain
\[\mathbf{Adv}_{\mathrm{SIG},\mathcal{B}_s[\mathcal{A}]}^{\mathrm{EU-CMA}}(1^{\lambda})+\mathbf{Adv}_{\Pi,\mathcal{B}_{\pi}[\mathcal{A}]}^{\mathrm{SE}}(1^{\lambda}) \geq\mathbf{Adv}_{\mathrm{TPBS},\mathcal{A}}^{\mathrm{EXT}}(\lambda).\] By assumption,  $\mathbf{Adv}_{\mathrm{TPBS},\mathcal{A}}^{\mathrm{EXT}}(\lambda) \leq \mathsf{negl}(\lambda)$, which   concludes the proof. 

\end{proof}

\begin{theorem}\label{tpbs-thm: SIM}
Our construction of traceable policy-based signature scheme depicted in Figure~\ref{tpbs-fig: generic-construction} is simulatable (where simulatability is modeled in experiment $\mathbf{Exp}_{\mathrm{TPBS},\mathcal{A}}^{\mathrm{SIM}}$ in Figure~\ref{fbgs-fig:sim-ext}), if the underlying signature scheme SIG  is  EU-CMA secure,  the encryption scheme $\mathrm{PKE}$ is  IND-CCA secure,  and the proof $\Pi$ is zero-knowledge and simulation-sound extractable. 
\end{theorem}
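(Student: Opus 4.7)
We proceed by a sequence of hybrid games that morph the real experiment ($b=1$) into the simulated one ($b=0$), invoking one of the three stated assumptions at each hop.

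In the first hop we replace $\mathsf{Setup}_\mathsf{nizk}$ by $\mathsf{SimSetup}_\mathsf{nizk}$, so that $\mathsf{crs}$ is simulated and comes with a trapdoor $\mathsf{tr}$; the CRS-indistinguishability property of $\Pi$ gives computational indistinguishability. Note that $\mathsf{SimKeyGen}$ and $\mathsf{KeyGen}$ are syntactically identical on $\mathsf{msk}$, so keys require no further modification. In the second hop, we replace every call to $\mathsf{Prove}$ inside $\mathsf{Sign}$ with $\mathsf{SimProve}(\mathsf{crs}, \mathsf{tr}, \cdot)$. A standard multi-theorem zero-knowledge hybrid handles this, using that each Sign call is made on a true statement, because the signer has a valid certificate $\mathsf{cert}_{\mathsf{id}\|p}$ together with a witness satisfying $\mathsf{PC}$.

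In the third hop, we replace each encryption $\mathsf{Enc}(\mathsf{mek}, \mathsf{id}; \mathbf{r})$ inside $\mathsf{Sign}$ by $\mathsf{Enc}(\mathsf{mek}, \mathbf{0}; \mathbf{r})$, arriving at the simulated experiment. The reduction $\mathcal{B}_e$ to IND-CCA of $\mathrm{PKE}$ embeds the external public key as $\mathsf{mek}$, samples $(\mathsf{mvk}, \mathsf{msk})$ and simulated $(\mathsf{crs}, \mathsf{tr})$ itself, answers \textsc{Key} queries via $\mathsf{msk}$, routes each Sign query as $(\mathsf{id}, \mathbf{0})$ through its IND-CCA encryption oracle to obtain $\mathsf{ct}$, and invokes $\mathsf{SimProve}$ for the proof component. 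The subtlety is that $\mathcal{B}_e$ lacks $\mathsf{mdk}$ yet must still respond to \textsc{Open}: it uses the IND-CCA $\mathsf{Dec}$ oracle on non-challenge ciphertexts, and on challenge ciphertexts with $(m, \sigma)\notin Q_C$ it invokes $\mathsf{Extr}_\mathsf{nizk}(\mathsf{tr}, \cdot)$. By simulation-sound extractability the extractor succeeds, returning $(\mathsf{id}^*, p^*, \mathsf{cert}^*, w^*, \mathbf{r}^*)$ with $\mathsf{ct} = \mathsf{Enc}(\mathsf{mek}, \mathsf{id}^*; \mathbf{r}^*)$; correctness of PKE then gives $\mathsf{id}^* = \mathsf{Dec}(\mathsf{mdk}, \mathsf{ct})$, so the simulated Open response matches the real one.

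The main obstacle lies precisely in this third hop: the extracted witness contains a certificate $\mathsf{cert}^*$ verifying on $\mathsf{id}^*\|p^*$, and we must ensure this cannot be turned into a distinguishing artifact beyond what IND-CCA already absorbs. If that pair was never issued by the \textsc{Key} oracle then we have a SIG forgery, so EU-CMA of SIG is invoked here via an auxiliary parallel reduction $\mathcal{B}_s$ that embeds $\mathsf{mvk}$ from an EU-CMA challenger, answers \textsc{Key} queries using the challenger's signing oracle, and outputs $(\mathsf{id}^*\|p^*, \mathsf{cert}^*)$ when the bad event is detected. Combining the advantages across the three transitions yields $\mathbf{Adv}^{\mathrm{SIM}}_{\mathrm{TPBS}, \mathcal{A}}(1^\lambda) \leq \mathsf{negl}(\lambda)$.
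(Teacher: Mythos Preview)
Your hybrid sequence is sound and close to the paper's, but the two diverge at the crucial point: how the IND-CCA reduction answers an \textsc{Open} query on a challenge ciphertext $\mathsf{ct}^*$ paired with a fresh proof $\pi\neq\pi^*$. The paper's reduction simply \emph{aborts} on this event, calls it $Q_3$, and then bounds $\Pr[Q_3]$ by a separate reduction $\mathcal{B}_E$ to the extractability game of TPBS itself (Theorem~\ref{tpbs-thm: EXT}). That is where EU-CMA enters in the paper's argument: $\mathcal{B}_E$ wins because in Game~3 the ciphertext encrypts $\mathbf{0}$, so the identity extracted from $\pi$ either was never issued (forgery) or disagrees with $\mathsf{Dec}(\mathsf{mdk},\mathsf{ct}^*)=\mathbf{0}$---either way an EXT win, and EXT was already shown to reduce to EU-CMA $+$ SE.

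Your route is different and in fact more direct: instead of aborting, you let $\mathcal{B}_e$ answer that \textsc{Open} query by running $\mathsf{Extr}_{\mathsf{nizk}}$ and returning the extracted $\mathsf{id}^*$. Your key observation---that a valid extracted witness forces $\mathsf{ct}^*=\mathsf{Enc}(\mathsf{mek},\mathsf{id}^*;\mathbf{r}^*)$ and hence $\mathsf{id}^*=\mathsf{Dec}(\mathsf{mdk},\mathsf{ct}^*)$ by PKE correctness---is exactly right, and it makes the simulation of \textsc{Open} perfect whenever extraction succeeds. The only residual error is the probability that extraction fails on some such query, which is bounded by SE alone (with a guessing loss over the \textsc{Open} queries, since the SE game in the paper is single-output). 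This means your argument actually does \emph{not} need EU-CMA at all.

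Consequently your final paragraph is confused. There is no ``distinguishing artifact'' to worry about once the \textsc{Open} answer is correct: whether or not $(\mathsf{id}^*,p^*)$ was ever issued is irrelevant to the fidelity of $\mathcal{B}_e$'s simulation, because $\mathcal{B}_e$ never outputs the certificate to $\mathcal{A}$---it only returns $\mathsf{id}^*$. Drop the auxiliary reduction $\mathcal{B}_s$; it adds nothing. If you want to match the theorem statement literally (which lists EU-CMA as a hypothesis), you can note that EU-CMA is implied anyway or simply observe that your bound is tighter. But as written, your last paragraph suggests you believe an extra step is required where none is, and that should be cleaned up.
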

\begin{proof}
	We show that two runs of $\mathbf{Exp}_{\mathrm{TPBS},\mathcal{A}}^{\mathrm{SIM}}(1^{\lambda})$, one with $b$ set to~$1$, and one with $b$ set to~$0$, are indistinguishable. We proceed by a sequence of indistinguishable games, in which the first game is   $\mathbf{Exp}_{\mathrm{TPBS},\mathcal{A}}^{\mathrm{SIM}-1}(1^{\lambda})$ and the last one  $\mathbf{Exp}_{\mathrm{TPBS},\mathcal{A}}^{\mathrm{SIM}-0}(1^{\lambda})$.  
 
 Let $E_i$ be the event that adversary $\mathcal{A}$ outputs~$1$ in Game~$i$.  
 \begin{description}
 	\item[Game~$1$] This game is  $\mathbf{Exp}_{\mathrm{TPBS},\mathcal{A}}^{\mathrm{SIM}-1}(1^{\lambda})$. Therefore,  $\mathrm{Pr}[\mathbf{Exp}_{\mathrm{TPBS},\mathcal{A}}^{\mathrm{SIM}-1}(1^{\lambda})=1]=\mathrm{Pr}[E_1]$.  
 	\item[Game~$2$] This game modifies Game~$1$ by replacing algorithms  $\mathsf{Setup}_{\mathsf{nizk}}$ and $\mathsf{Prove}$ to algorithms $\mathsf{SimSetup}_{\mathsf{nizk}}$ and $\mathsf{SimProve}$, respectively. Then by the zero-knowledge property of the proof system $\Pi$, Game~$2$ is indistinguishable from Game~$1$.  In other words, we have $|\mathrm{Pr}[E_1]-\mathrm{Pr}[E_2]|\leq \mathbf{Adv}_{\Pi}^{\mathrm{ZK}}(1^{\lambda})$. 
 	
 	\item[Game~$3$] This game is the same as Game~$2$ except that when $\mathcal{A}$ queries {\sc Signature} on inputs $(i,m,w)$, $\mathsf{ct}$ is an encryption of a dummy identity $\mathbf{0}$ instead of $Q[i][1]$. Note that this game is exactly $\mathbf{Exp}_{\mathrm{TPBS},\mathcal{A}}^{\mathrm{SIM}-0}(1^{\lambda})$. Hence $\mathrm{Pr}[E_3]=\mathrm{Pr}[\mathbf{Exp}_{\mathrm{TPBS},\mathcal{A}}^{\mathrm{SIM}-0}(1^{\lambda})=1]$. We show that Game~$2$ and Game~$3$ are indistinguishable to $\mathcal{A}$ by proving the following
 	\begin{eqnarray}\label{tpbs-eq:G2-G3-difference}
 \hspace*{-9pt}	|\mathrm{Pr}[E_2]-\mathrm{Pr}[E_3] | \leq \mathbf{Adv}_{\mathrm{PKE}}^{\mathrm{IND-CCA}}(1^{\lambda})+ \mathbf{Adv}_{\mathrm{TPBS}}^{\mathrm{EXT}}(1^{\lambda}). 
 	\end{eqnarray}
 \end{description} 
Combing the above equations, we then have 
\begin{eqnarray*}
\mid\mathrm{Pr}[E_1]-\mathrm{Pr}[E_3]\mid \leq \mathbf{Adv}_{\Pi}^{\mathrm{ZK}}(1^{\lambda})+\mathbf{Adv}_{\mathrm{PKE}}^{\mathrm{IND-CCA}}(1^{\lambda})+\mathbf{Adv}_{\mathrm{TPBS}}^{\mathrm{EXT}}(1^{\lambda}). 
\end{eqnarray*}
In Theorem~\ref{tpbs-thm: EXT}, we showed that $$\mathbf{Adv}_{\mathrm{TPBS}}^{\mathrm{EXT}}(1^{\lambda})\leq \mathbf{Adv}_{\mathrm{SIG}}^{\mathrm{EU-CMA}}(1^{\lambda})+\mathbf{Adv}_{\Pi}^{\mathrm{SE}}(1^{\lambda}).$$ 
By assumption, our scheme is then simulatable. We now prove equation~(\ref{tpbs-eq:G2-G3-difference}) holds. 

\begin{figure}[ht]
	\begin{center}
		\begin{tabular}{cc}
			\begin{minipage}{7.8cm}
				\underline{$\mathcal{B}(\mathsf{mek}: \text {\sc LR}(\cdot,\cdot),\text{\sc Dec}(\cdot))$}\vspace*{2.2pt}\\
				$\hspace*{6pt}Q_C\leftarrow \emptyset, j\leftarrow 0$\\
				$\hspace*{6pt}(\mathsf{crs},\mathsf{tr}_{})\leftarrow\mathsf{SimSetup}_{\mathsf{nizk}} (1^\lambda)$  \\
				$\hspace*{6pt} (\mathsf{mvk},\mathsf{msk})\leftarrow\mathsf{KeyGen}_{\mathsf{sig}}(1^{\lambda})$\\
				$\hspace*{6pt}\text{Set}\hspace*{2pt}\mathsf{pp}\hspace*{-2pt}\leftarrow\hspace*{-2pt}(\mathsf{crs},\mathsf{mek},\mathsf{mvk})$,\\ $\hspace*{6pt}\mathsf{tr}_{\mathsf{tpbs}}\leftarrow (\mathsf{msk},\cdot,\mathsf{tr})$\\
				$\hspace*{6pt}b'\leftarrow \mathcal{A}\big(\mathsf{pp}: \text{\sc Key}_{\mathcal{B}}(\cdot,\cdot),\\
				\hspace*{12pt}\text{\sc Signature}_{\mathcal{B}}(\cdot,\cdot,\cdot), \text{\sc Open}_{\mathcal{B}}(\cdot,\cdot)\big)$\\
				$\hspace*{6pt}\text{Return}~b'$
				
				\underline{\sc Key$_{\mathcal{B}}(\mathsf{id},\mathcal{P}_{\mathsf{id}})$}\vspace*{2.2pt}\\
				$\hspace*{6pt}j\leftarrow j+1$\\
				$\hspace*{6pt}\mathsf{usk}_{\mathsf{id}}\leftarrow\mathsf{SimKeyGen}(\mathsf{tr}_{\mathsf{tpbs}},\mathsf{id},\mathcal{P}_{\mathsf{id}})$\\
				$\hspace*{6pt}Q[j][1]=\mathsf{id},Q[j][2]=\{p:p\in \mathcal{P}_{\mathsf{id}}\}$\\
				$\hspace*{6pt}Q[j][3]=\mathsf{usk}_{\mathsf{id}}$\\
				$\hspace*{6pt}\text{Return}~\mathsf{usk}_{\mathsf{id}}$
				
				\underline{\sc Signature$_{\mathcal{B}}(i,m,w)$}\vspace*{2.2pt}	\\	
				$\hspace*{6pt}\text{If}~ i\notin [j], \text{then~return}~\bot$\\
				$\hspace*{6pt}\text{If}~\exists~p\in Q[i][2]~\text{s.t.}~\mathsf{PC}\big((p,m),w\big)=1 $ \\
				$\hspace*{10pt}$ \mybox{$\mathsf{ct}^*\leftarrow \text{\sc LR}(\mathbf{0},Q[i][1])$}\\
				$\hspace*{12pt}\pi^* \leftarrow\mathsf{SimProve}\big(\mathsf{crs},\mathsf{tr},(\mathsf{mek},\mathsf{mvk},m,\mathsf{ct}^*)\big)$\\
				$\hspace*{12pt}Q_C=Q_C\cup \{\big(m,(\mathsf{ct}^*,\pi^*)\big)\}$\\
				$\hspace*{12pt}\text{Return}~\sigma^*\leftarrow (\mathsf{ct}^*,\pi^*)$\\
				$\hspace*{6pt}\text{Else} ~\text{return}~\bot$ 
				
				\underline{\sc Open$_{\mathcal{B}}\big(m,(\mathsf{ct},\pi)\big)$}\vspace*{2.2pt}\\
				$\hspace*{6pt}\text{If}~\big(m,(\mathsf{ct},\pi)\big)\in Q_C,~\text{return}~\bot$\\
				$\hspace*{6pt}\text{If}~\mathsf{Verify}(\mathsf{pp},m,(\mathsf{ct},\pi))=0,~\text{return}~\bot$\\
				$\hspace*{6pt}\text{If}~(\mathsf{ct}=\mathsf{ct}^*)\wedge (\pi\neq \pi^*)$\\
				$\hspace*{12pt}\mathcal{B}~\text{halts~and~return}~1$\\
				$\hspace*{6pt}\text{Else~return}$~\mybox{$\text{\sc Dec}(\mathsf{ct})$}
			\end{minipage}	
			&
			\begin{minipage}{8.6cm}
				\vspace*{0.43cm}
				\underline{$\mathcal{B}_E(\mathsf{pp},\mathsf{mdk}: \text{\sc RevealKey}(\cdot,\cdot), \text{\sc SimSign}(\cdot))$}\vspace*{2.2pt}\\
				$\hspace*{6pt}Q_C\leftarrow \emptyset, j\leftarrow 0$\\
				$\hspace*{6pt}b'\leftarrow \mathcal{A}\big(\mathsf{pp}: \text{\sc Key}_{\mathcal{B}_E}(\cdot,\cdot),\\
				\hspace*{12pt}\text{\sc Signature}_{\mathcal{B}_E}(\cdot,\cdot,\cdot), \text{\sc Open}_{\mathcal{B}_E}(\cdot,\cdot)\big)$\\
				\vspace*{0.53cm}
				
				\underline{\sc Key$_{\mathcal{B}_E}(\mathsf{id},\mathcal{P}_{\mathsf{id}})$}\vspace*{2.2pt}\\
				$\hspace*{6pt}j\leftarrow j+1$\\
				$\hspace*{6pt}$\mybox{$\mathsf{usk}_{\mathsf{id}}\leftarrow\text{\sc{RevealKey}}(\mathsf{id},\mathcal{P}_{\mathsf{id}})$}\\
				$\hspace*{6pt}Q[j][1]=\mathsf{id},Q[j][2]=\{p:p\in \mathcal{P}_{\mathsf{id}}\}$\\
				$\hspace*{6pt}Q[j][3]=\mathsf{usk}_{\mathsf{id}}$\\
				$\hspace*{6pt}\text{Return}~\mathsf{usk}_{\mathsf{id}}$
				
				\underline{\sc Signature$_{\mathcal{B}_E}(i,m,w)$}\vspace*{2.2pt}	\\	
				$\hspace*{6pt}\text{If}~ i\notin [j], \text{then~return}~\bot$\\
				$\hspace*{6pt}\text{If}~\exists~p\in Q[i][2]~\text{s.t.}~\mathsf{PC}\big((p,m),w\big)=1 $ \\
				$\hspace*{12pt}$\mybox{$(\mathsf{ct}^*,\pi^*) \leftarrow\text{\sc SimSign}( m)$}\\
				$\hspace*{12pt}Q_C=Q_C\cup \{\big(m,(\mathsf{ct}^*,\pi^*)\big)\}$\\
				$\hspace*{12pt}\text{Return}~\sigma^*\leftarrow (\mathsf{ct}^*,\pi^*)$\\
				$\hspace*{6pt}\text{Else} ~\text{return}~\bot$ \\
				\vspace*{0.24cm}
				
				\underline{\sc Open$_{\mathcal{B}_E}\big(m,(\mathsf{ct},\pi)\big)$}\vspace*{2.2pt}\\
				$\hspace*{6pt}\text{If}~\big(m,(\mathsf{ct},\pi)\big)\in Q_C,~\text{return}~\bot$\\
				$\hspace*{6pt}\text{If}~\mathsf{Verify}(\mathsf{pp},m,(\mathsf{ct},\pi))=0,~\text{return}~\bot$\\
				$\hspace*{6pt}\text{If}~(\mathsf{ct}=\mathsf{ct}^*)\wedge (\pi\neq \pi^*)$ \\
				$\hspace*{12pt}\mathcal{B}_{E}~\text{halts~and~return}~\big(m,(\mathsf{ct}^*,\pi)\big)$\\
				$\hspace*{6pt}\text{Else~return}~\mathsf{Dec}(\mathsf{mdk},\mathsf{ct})$
				
			\end{minipage}
		\end{tabular}
	\end{center}
	\caption{Adversary $\mathcal{B}$ against IND-CCA of $\mathrm{PKE}$ and adversary $\mathcal{B}_E$ against EXT of our TPBS. }	
	\label{tpbs-fig:Bb-against-cca+Be-against-EXT-TPBS}
\end{figure}	

Let $\mathcal{B}$, which utilizes $\mathcal{A}$ that distinguishes Game~$2$ and Game~$3$, be an adversary against IND-CCA of the underlying encryption scheme $\mathrm{PKE}$.  Denote   $\mathbf{Exp}_{\mathrm{PKE},\mathcal{B}[\mathcal{A}]}^{\mathrm{IND-CCA}}(1^{\lambda})$ to be the experiment. We model $\mathcal{B}$ in Figure~\ref{tpbs-fig:Bb-against-cca+Be-against-EXT-TPBS}, in which it receives $\mathsf{mek}$ from its environment, queries two messages to the challenge oracle LR$(\cdot,\cdot)$ and receives back a challenge ciphertext. Meanwhile, $\mathcal{B}$ is also given access to a decryption oracle $\text{\sc Dec}(\cdot)$ where  it is allowed to query any ciphertext except the one obtained from the challenge oracle.  Before we  analyze the behavior of $\mathcal{B}$ in  $\mathbf{Exp}_{\mathrm{PKE},\mathcal{B}[\mathcal{A}]}^{\mathrm{IND-CCA}}(1^{\lambda})$,  let us define event $Q_i$ for $i\in\{2,3\}$ first: $Q_i$ is the event that $\mathcal{A}$ makes a valid  {\sc Open} query $\big(m, (\mathsf{ct},\pi)\big)$ such that $(\mathsf{ct}=\mathsf{ct}^*)\wedge (\pi\neq \pi^*)$ in Game~$i$. 
Let us now consider the experiment $\mathbf{Exp}_{\mathrm{PKE},\mathcal{B}[\mathcal{A}]}^{\mathrm{IND-CCA}-1}(1^{\lambda})$, where oracle $\text{\sc LR}(\cdot, \cdot)$ returns ciphertext of the second input. Note that $\mathcal{B}$ in this experiment perfectly simulates the view of $\mathcal{A}$ in Game~$2$. Therefore 
\begin{eqnarray*}
\mathrm{Pr}[\mathbf{Exp}_{\mathrm{PKE},\mathcal{B}[\mathcal{A}]}^{\mathrm{IND-CCA}-1}(1^{\lambda})=1]&=&\mathrm{Pr}[E_2\wedge \neg Q_2]+\mathrm{Pr}[Q_2]\geq \mathrm{Pr}[E_2]. 
\end{eqnarray*}
On the other hand, $\mathcal{B}$ in the experiment $\mathbf{Exp}_{\mathrm{PKE},\mathcal{B}[\mathcal{A}]}^{\mathrm{IND-CCA}-0}(1^{\lambda})$, where oracle $\text{\sc LR}(\cdot, \cdot)$ returns ciphertext of the first input, perfectly simulates the view of $\mathcal{A}$ in Game~$3$. Therefore 
\begin{eqnarray*}
\mathrm{Pr}[\mathbf{Exp}_{\mathrm{PKE},\mathcal{B}[\mathcal{A}]}^{\mathrm{IND-CCA}-0}(1^{\lambda})=1]&=&\mathrm{Pr}[E_3\wedge \neg Q_3]+\mathrm{Pr}[Q_3]\leq  \mathrm{Pr}[E_3]+\mathrm{Pr}[Q_3]. 
\end{eqnarray*}
Together this yields 
\begin{eqnarray}\label{tpbs-eq:game-2-game-3-one-side-relation}
\mathrm{Pr}[E_2]-\mathrm{Pr}[E_3]
&\leq & \mathbf{Adv}_{\mathrm{PKE},\mathcal{B}[\mathcal{A}]}^{\mathrm{IND-CCA}}(1^{\lambda})+\mathrm{Pr}[Q_3]. 
\end{eqnarray}
Now to lower bound the terms on the left-hand side of equation~(\ref{tpbs-eq:game-2-game-3-one-side-relation}), we define adversary $\bar{\mathcal{B}}$ that behaves as $\mathcal{B}$ with two modifications: (1) it queries $(Q[i][1],\mathbf{0})$ to its challenge oracle $\text{\sc LR}(\cdot,
\cdot)$; (2) when $\mathcal{A}$ makes a valid query $\big(m, (\mathsf{ct},\pi)\big)$ such that $(\mathsf{ct}=\mathsf{ct}^*)\wedge (\pi\neq \pi^*)$, $\bar{\mathcal{B}}$ halts and returns~$0$. Following the above analysis, we get 
\begin{eqnarray*}
	\mathrm{Pr}[\mathbf{Exp}_{\mathrm{PKE},\bar{\mathcal{B}}[\mathcal{A}]}^{\mathrm{IND-CCA}-0}(1^{\lambda})=1]&=&\mathrm{Pr}[E_2\wedge \neg Q_2]\leq \mathrm{Pr}[E_2];\\
	\mathrm{Pr}[\mathbf{Exp}_{\mathrm{PKE},\bar{\mathcal{B}}[\mathcal{A}]}^{\mathrm{IND-CCA}-1}(1^{\lambda})=1]&=&\mathrm{Pr}[E_3\wedge \neg Q_3] \geq  \mathrm{Pr}[E_3]-\mathrm{Pr}[{Q}_3]. 
\end{eqnarray*}

Together this yields 
\begin{eqnarray}\label{tpbs-eq:game-2-game-3-another-side-relation}
\mathrm{Pr}[E_3]-\mathrm{Pr}[E_2]&\leq & \mathbf{Adv}_{\mathrm{PKE},\bar{\mathcal{B}}[\mathcal{A}]}^{\mathrm{IND-CCA}}(1^{\lambda})+\mathrm{Pr}[Q_3]. 
\end{eqnarray}
Combining~(\ref{tpbs-eq:game-2-game-3-one-side-relation}) and (\ref{tpbs-eq:game-2-game-3-another-side-relation}), we have 
\begin{eqnarray}\label{tpbs-eq:game-2-game-3-both-sides-relation}
|\mathrm{Pr}[E_2]-\mathrm{Pr}[E_3]|&\leq & \mathbf{Adv}_{\mathrm{PKE}}^{\mathrm{IND-CCA}}(1^{\lambda})+\mathrm{Pr}[Q_3]. 
\end{eqnarray}
To show (\ref{tpbs-eq:G2-G3-difference}) holds, we are left to show 
\begin{eqnarray}\label{tpbs-eq:Q3-upper-bound-by-ext-of-tpbs}
\mathrm{Pr}[Q_3]\leq \mathbf{Adv}_{\mathrm{TPBS}}^{\mathrm{EXT}}(1^{\lambda}). 
\end{eqnarray}

Towards this goal, we construct $\mathcal{B}_E$ that  breaks EXT of our construction whenever event $Q_3$ occurs. Denote the experiment as  $\mathbf{Exp}_{\mathrm{TPBS},\mathcal{B}_E[\mathcal{A}]}^{\mathrm{EXT}}(1^{\lambda})$. We model $\mathcal{B}_E$ in Figure~\ref{tpbs-fig:Bb-against-cca+Be-against-EXT-TPBS}, in which it receives $\mathsf{pp}$, $\mathsf{mdk}$ from its own environment, and is given access to oracles $\text{\sc RevealKey}(\cdot,\cdot)$ and $\text{\sc SimSign}(\cdot)$. Note that $\mathcal{B}_E$ defined in this way perfectly simulates the view of $\mathcal{A}$ in Game~$3$. 

We claim that $\mathcal{B}_E$ wins  $\mathbf{Exp}_{\mathrm{TPBS},\mathcal{B}_E[\mathcal{A}]}^{\mathrm{EXT}}(1^{\lambda})$ by returning $\big(m,(\mathsf{ct}^*,\pi)\big)$ when $Q_3$ occurs.  In fact, $Q_3$ implies that (1) $\mathsf{Verify}(\mathsf{pp},m,(\mathsf{ct}^*,\pi))=1$; (2)$\big(m, (\mathsf{ct}^*,\pi)\big)$ is not in the list for {\sc SimSign} calls. Compute $$(\mathsf{id},p,\mathsf{cert}_{\mathsf{id}\|p},w,\mathbf{r}) \leftarrow \mathsf{Extr}_{\mathsf{nizk}}(\mathsf{tr}, (\mathsf{mek},\mathsf{mvk},m,\mathsf{ct}^*),\pi).$$
Either  $(\mathsf{id},p)$ was not in the list for {\sc RevealKey} calls or $(\mathsf{id},p)$ was indeed in the list for {\sc RevealKey} calls. The former case immediately implies that $\mathcal{B}_E$ wins $\mathbf{Exp}_{\mathrm{TPBS},\mathcal{B}_E[\mathcal{A}]}^{\mathrm{EXT}}(1^{\lambda})$. In case the latter occurs, there exists $i$ such that $Q[i][1]=\mathsf{id}$. Specifically, it implies $\mathsf{id}\neq \mathbf{0}$. Recall that $\mathsf{ct}^*$ is indeed encryption of~$\mathbf{0}$. Therefore, correctness of the underlying encryption scheme PKE implies $\mathsf{Dec}(\mathsf{mdk},\mathsf{ct}^*)\neq \mathsf{id}$, indicating $\mathcal{B}_E$ wins the experiment as well. To summarize, we get 
$\mathbf{Adv}_{\mathrm{TPBS},\mathcal{B}_E[\mathcal{A}]}^{\mathrm{EXT}}(1^{\lambda})\geq \mathrm{Pr}[Q_3]$. This concludes the proof.

\end{proof}

\section{Lattice-Based Instantiation of TPBS}\label{tpbs-section:tpbs-lattice-con}

This section presents a realization of our generic construction  of TPBS under concrete lattice-based assumptions. Let us first briefly review several lattice-based techniques that will be used in the construction. 

\subsection{Preliminaries on Lattices}\label{tpbs-subsection:preliminary-on-lattices}


We first recall $q$-ary lattices and then review the Gaussian distribution over these lattices. 
Let $q\geq 2$ and $n,m$ be positive integers.  For $\mathbf{A}\in \mathbb{Z}_q^{n\times m}$,  we define the full-rank $q$-ary lattice with dimension $m$ as follows:
$$\Lambda^{\bot}(\mathbf{A})=\{\mathbf{x}\in\mathbb{Z}^m: \mathbf{A}\cdot\mathbf{x}=\mathbf{0} \bmod q\}.$$

For any non-zero vector $\mathbf{u}\in\mathbb{Z}_q^n$ that admits an integral solution to the equation $\mathbf{A}\cdot \mathbf{x}=\mathbf{u}\bmod q$, define the coset $$\Lambda^{\mathbf{u}}(\mathbf{A})=\{\mathbf{x}\in\mathbb{Z}^m: \mathbf{A}\cdot \mathbf{x}=\mathbf{u} \bmod q\}.$$

For any vector $\mathbf{c}\in \mathbb{R}^n$ and any positive real number $s$, define the following: $$\rho_{s,\mathbf{c}}(\mathbf{x})=\exp\left(-\pi\frac{\|\mathbf{x}-\mathbf{c}\|^2}{s^2}\right) ~~~\text{and} ~~~ \rho_{s,\mathbf{c}}(\Lambda)=\sum_{\mathbf{x}\in\Lambda}\rho_{s,\mathbf{c}}(\mathbf{x}).$$ We often omit $\mathbf{c}$ if it is $\mathbf{0}$. Define the distribution over the coset $\Lambda^{\mathbf{u}}(\mathbf{A})$ as $D_{\Lambda^{\mathbf{u}}(\mathbf{A}),s,\mathbf{c}}(\mathbf{x})={\rho_{s,\mathbf{c}}(\mathbf{x})}/{\rho_{s,\mathbf{c}}(\Lambda^{\mathbf{u}}(\mathbf{A}))}$ for any $\mathbf{x}\in\Lambda^{\mathbf{u}}(\mathbf{A})$.

Now let us look at the Gaussian distributions. For any vector $\mathbf{c}\in \mathbb{R}^n$ and any positive real number $s$, define the following: $$\rho_{s,\mathbf{c}}(\mathbf{x})=\exp\left(-\pi\frac{\|\mathbf{x}-\mathbf{c}\|^2}{s^2}\right) ~~~\text{and} ~~~ \rho_{s,\mathbf{c}}(\Lambda)=\sum_{\mathbf{x}\in\Lambda}\rho_{s,\mathbf{c}}(\mathbf{x}).$$ Then the discrete Gaussian distribution  over the lattice $\Lambda$ with parameter $s$ and center $\mathbf{c}$, denoted as $D_{\Lambda,s,\mathbf{c}}$, is defined to be $D_{\Lambda,s,\mathbf{c}}(\mathbf{x})={\rho_{s,\mathbf{c}}(\mathbf{x})}/{\rho_{s,\mathbf{c}}(\Lambda)}$ for any $\mathbf{x}\in \Lambda$. We often omit $\mathbf{c}$ if it is $\mathbf{0}$.

Note that the coset $\Lambda^{\mathbf{u}}(\mathbf{A})$ is not a lattice for any non-zero $\mathbf{u}\in\mathbb{Z}_q^n$, since obviously~$\mathbf{0}$ is not inside this set. However, we can still define the discrete Gaussian distribution over  the coset $\Lambda^{\mathbf{u}}(\mathbf{A})$ in a similar way:  $D_{\Lambda^{\mathbf{u}}(\mathbf{A}),s,\mathbf{c}}(\mathbf{x})={\rho_{s,\mathbf{c}}(\mathbf{x})}/{\rho_{s,\mathbf{c}}(\Lambda^{\mathbf{u}}(\mathbf{A}))}$ for any $\mathbf{x}\in\Lambda^{\mathbf{u}}(\mathbf{A})$.

\noindent Now we are going to recall some well-known facts about the discrete Gaussian distributions.

\begin{lemma}[\cite{GPV08STOC,PR06TCC}]\label{tpbs-lemma:infinity-norm-bound}
	Let $n,q,m$ be some positive integers such that $q\geq 2$ and $m\geq 2n\log q$. Define a positive real number $s$ such that $s\geq \omega(\sqrt{\log m})$.
	\begin{itemize}
		\item Then for all but a $2q^{-n}$ fraction of all matrices $\mathbf{A}$ over $\mathbb{Z}_q^{n\times m}$, the distribution of the syndrome $\mathbf{u}=\mathbf{A}\cdot\mathbf{x}\mod q$ for $\mathbf{x}\hookleftarrow D_{\mathbb{Z}^m,s}$ is statistically close to uniform over $\mathbb{Z}_q^n$. Besides, given $\mathbf{A}\cdot\mathbf{x}=\mathbf{u}\mod q$,  the conditional distribution of $\mathbf{x}\hookleftarrow D_{\mathbb{Z}^m,s}$ is $D_{\Lambda^{\mathbf{u}}(\mathbf{A}),s}$.
		\item Let ${a}\hookleftarrow D_{\mathbb{Z},s}$, $t=\log n$, and $\beta=\lceil s\cdot t\rceil$. Then the probability of $ |{a}|>\beta$ is negligible.
		\item The min-entropy of the distribution $D_{\mathbb{Z}^m,s}$ is at least $m-1$. In other words, for any $\mathbf{x}\in D_{\mathbb{Z}^m,s}$, we have $D_{\mathbb{Z}^m,s}(\mathbf{x})\leq 2^{1-m}$.
	\end{itemize}
\end{lemma}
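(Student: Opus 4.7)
The plan is to prove each of the three bullet points separately, relying on the standard machinery of the smoothing parameter $\eta_\epsilon(\mathbb{Z}^m)$ and classical Gaussian tail bounds. The unifying observation is that the hypothesis $s \ge \omega(\sqrt{\log m})$ places $s$ above the smoothing parameter of $\mathbb{Z}^m$ for some negligible $\epsilon$, which unlocks both the regularity of the syndrome map and the entropy bound.

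For the first bullet, I would first invoke the regularity / leftover hash lemma for discrete Gaussians (Micciancio--Regev, or the $\mathbb{Z}_q$-variant used in GPV): when $s$ exceeds $\eta_\epsilon(\mathbb{Z}^m)$, the distribution of $\mathbf{A}\cdot \mathbf{x} \bmod q$ for $\mathbf{x} \hookleftarrow D_{\mathbb{Z}^m,s}$ is statistically close to uniform on $\mathbb{Z}_q^n$, with the closeness parameter controlled by $\epsilon$ after averaging over $\mathbf{A}$. A union bound / Markov argument over the choice of $\mathbf{A}$ then converts ``on average'' to ``for all but a $2q^{-n}$ fraction of $\mathbf{A}$''. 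For the conditional statement, I would argue directly from the definition: the density of $\mathbf{x}\hookleftarrow D_{\mathbb{Z}^m,s}$ is proportional to $\rho_s(\mathbf{x})$, and restricting the sample space to $\{\mathbf{x}\in \mathbb{Z}^m : \mathbf{A}\mathbf{x} = \mathbf{u}\bmod q\} = \Lambda^{\mathbf{u}}(\mathbf{A})$ and renormalizing yields exactly $D_{\Lambda^{\mathbf{u}}(\mathbf{A}),s}$ by Bayes' rule.

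For the second bullet, I would apply the one-dimensional discrete Gaussian tail bound
\[
\Pr_{a\hookleftarrow D_{\mathbb{Z},s}}\bigl[\,|a| > c\cdot s\,\bigr] \le 2\exp(-\pi c^2)
\]
(a Banaszczyk-style bound, valid once $s$ is above smoothing), and substitute $c = t = \log n$ so that the right-hand side becomes $2\exp(-\pi \log^2 n)$, which is negligible in $n$ and hence in the security parameter.

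For the third bullet, I would combine the trivial upper bound $\rho_s(\mathbf{x}) \le \rho_s(\mathbf{0}) = 1$ with a lower bound on the normalization factor $\rho_s(\mathbb{Z}^m)$. The smoothing-parameter condition implies $\rho_s(\mathbb{Z}^m) \ge (1-\epsilon)\, s^m$ (via Poisson summation on the dual lattice $\mathbb{Z}^m$), and since $s = \omega(\sqrt{\log m})$ we certainly have $s^m \ge 2^m$ for sufficiently large $m$, so $\rho_s(\mathbb{Z}^m) \ge 2^{m-1}$. Dividing gives $D_{\mathbb{Z}^m,s}(\mathbf{x}) \le 2^{1-m}$ for every $\mathbf{x}\in \mathbb{Z}^m$, which is the claimed min-entropy bound. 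The main obstacle is the first bullet: establishing the ``for all but $2q^{-n}$ fraction of $\mathbf{A}$'' quantifier requires the Poisson-summation-based smoothing argument together with a careful variance / Markov step over $\mathbf{A}$, whereas the remaining items reduce to one-shot applications of standard Gaussian estimates.
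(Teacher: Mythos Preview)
The paper does not actually prove this lemma: it is stated with a citation to \cite{GPV08STOC,PR06TCC} and used as a black-box fact from the literature, with no accompanying argument. So there is no ``paper's own proof'' to compare against.

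Your sketch is a faithful reconstruction of the standard arguments behind these facts and is essentially correct. A couple of minor remarks. In the third bullet, Poisson summation on $\mathbb{Z}^m$ gives $\rho_s(\mathbb{Z}^m) = s^m \rho_{1/s}(\mathbb{Z}^m) \ge s^m$ outright (since $\rho_{1/s}(\mathbb{Z}^m) \ge 1$), so you do not need the $(1-\epsilon)$ factor; then $s \ge 2$ (which follows from $s=\omega(\sqrt{\log m})$ for all sufficiently large $m$) already yields $\rho_s(\mathbb{Z}^m) \ge 2^m$ and hence $D_{\mathbb{Z}^m,s}(\mathbf{x}) \le 2^{-m} \le 2^{1-m}$. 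Strictly speaking this is an asymptotic statement, matching how such lemmas are used in lattice cryptography. In the first bullet, the ``all but $2q^{-n}$ fraction'' quantifier is the part that requires the most care; your plan to go via an average-over-$\mathbf{A}$ smoothing bound plus Markov is the right shape, and the detailed computation is what appears in the cited works.
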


Let $\mathbf{S}=[\mathbf{s}_1|\cdots|\mathbf{s}_m]\in\mathbb{Z}_q^{m\times m}$  be a full-rank matrix.   The Gram-Schmidt orthogonalization of these $m$ vectors is a sequence of $m$ new orthogonal vectors $\tilde{\mathbf{s}}_1,\tilde{\mathbf{s}}_1,\ldots,\tilde{\mathbf{s}}_m$ computed as $\tilde{\mathbf{s}}_1=\mathbf{s}_1$;   $\tilde{\mathbf{s}}_j=\mathbf{s}_j-\sum_{k=1}^{j-1}a_{j,k}\cdot \tilde{\mathbf{s}}_k$,  where we have  $a_{j,k}=(\mathbf{s}_j^\top\cdot\tilde{\mathbf{s}}_k)/(\tilde{\mathbf{s}}_k^\top\cdot\tilde{\mathbf{s}}_k)$ for $j=2,\ldots, m$.  
Denote $\widetilde{\mathbf{S}}=[\tilde{\mathbf{s}}_1|\cdots|\tilde{\mathbf{s}}_m]$ be its Gram-Schmidt matrix.  
Define $\|\mathbf{S}\|=\mathrm{max}_{i\in[m]}\|\mathbf{s}_i\|$ and $\|\widetilde{\mathbf{S}}\|=\mathrm{max}_{i\in[m]}\|\tilde{\mathbf{s}}_i\|$, where $\|\cdot\|$ is the Euclidean norm. We now recall some algorithms from previous works that will be used  in this work. The $\mathsf{TrapGen}$ algorithm is used to generate a matrix $\mathbf{A}$  that is statistically close to random together with a good basis of the $q$-ary lattice $\Lambda^{\bot}(\mathbf{A})$. The  $\mathsf{SampleD}$ algorithm employs some good basis of the lattice $\Lambda^{\bot}(\mathbf{A})$ to output a short vector in $\Lambda^{\mathbf{u}}(\mathbf{A})$ if $\Lambda^{\mathbf{u}}(\mathbf{A})$ is  not empty. The $\mathsf{ExtBasis}$ algorithm extends a basis of a matrix $\mathbf{A}$ to a basis of any matrix that ontains $\mathbf{A}$ as a submatrix. 
\begin{lemma}[$\mathsf{TrapGen}$~\cite{AP09STACS,MP12EC}]\label{tpbs-lemma:trapgen}
	Let $n,m,q$  be positive integers such that $q\geq 2$ and $m$ is of order $\mathcal{O}(n\log q)$. Then the $\mathrm{PPT}$ algorithm $\mathsf{TrapGen}(n,m,q)$ outputs a tuple $(\mathbf{A},\mathbf{S})$ satisfying the following conditions:
	\begin{itemize}
		\item $\mathbf{A}$ is   within negligibly statistical distance from the uniformly random distribution over $\mathbb{Z}_q^{n\times m}$,
		\item $\mathbf{S}$ is a basis for $\Lambda^{\bot}(\mathbf{A})$, that is, $\mathbf{A}\cdot\mathbf{S}=\mathbf{0}\bmod q$, and
		\item $\|\widetilde{\mathbf{S}}\|\leq \mathcal{O}(\sqrt{n\log q})$.
	\end{itemize}
\end{lemma}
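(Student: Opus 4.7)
The statement is the standard $\mathsf{TrapGen}$ lemma attributed to \cite{AP09STACS,MP12EC}, so the plan is to recover it by invoking the gadget-trapdoor construction of Micciancio--Peikert (which gives the tightest parameters), rather than attempting anything new. Concretely, I would split $m = \bar m + w$ where $\bar m = O(n\log q)$ and $w$ is the width of the gadget matrix $\mathbf{G}\in\mathbb{Z}_q^{n\times w}$ (the one whose lattice $\Lambda^{\bot}(\mathbf{G})$ admits a public short basis $\mathbf{S}_{\mathbf{G}}$ with $\|\widetilde{\mathbf{S}_{\mathbf{G}}}\| \le O(1)$). Sample $\bar{\mathbf{A}}\hookleftarrow\mathbb{Z}_q^{n\times \bar m}$ uniformly and a short matrix $\mathbf{R}\hookleftarrow D_{\mathbb{Z},\omega(\sqrt{\log n})}^{\bar m\times w}$, and output $\mathbf{A} = [\,\bar{\mathbf{A}}\ \mid\ \mathbf{G}-\bar{\mathbf{A}}\mathbf{R}\,]$ together with the trapdoor $\mathbf{R}$.

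First I would establish the distributional claim. Fixing $\bar{\mathbf{A}}$, the product $\bar{\mathbf{A}}\mathbf{R}$ is statistically close to uniform over $\mathbb{Z}_q^{n\times w}$ for all but a negligible fraction of $\bar{\mathbf{A}}$, by a standard leftover-hash/regularity argument (this is exactly the first bullet of Lemma~\ref{tpbs-lemma:infinity-norm-bound} applied column-wise to $\mathbf{R}$). Hence $\mathbf{G}-\bar{\mathbf{A}}\mathbf{R}$ is within negligible statistical distance of uniform, and therefore so is $\mathbf{A}$, giving the first bullet of the lemma.

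Next I would exhibit the basis $\mathbf{S}$ of $\Lambda^{\bot}(\mathbf{A})$. The key observation is that any $\mathbf{z}=(\mathbf{z}_1\|\mathbf{z}_2)$ with $\mathbf{A}\mathbf{z}=\mathbf{0}$ satisfies $\bar{\mathbf{A}}(\mathbf{z}_1+\mathbf{R}\mathbf{z}_2)+\mathbf{G}\mathbf{z}_2=\mathbf{0}$, so the map $(\mathbf{z}_1,\mathbf{z}_2)\mapsto(\mathbf{z}_1+\mathbf{R}\mathbf{z}_2,\mathbf{z}_2)$ conjugates $\Lambda^{\bot}(\mathbf{A})$ with the block lattice spanned by $\Lambda^{\bot}([\bar{\mathbf{A}}\mid\mathbf{G}])$, which can be built from $\mathbf{S}_{\mathbf{G}}$ together with trivial kernel vectors coming from $\bar{\mathbf{A}}$. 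Concretely, one forms
\[
\mathbf{S} \;=\; \begin{pmatrix} \mathbf{I} & -\mathbf{R} \\ \mathbf{0} & \mathbf{I} \end{pmatrix}\cdot \begin{pmatrix} \mathbf{B} & \mathbf{0} \\ \mathbf{0} & \mathbf{S}_{\mathbf{G}} \end{pmatrix},
\]
where $\mathbf{B}$ is an (implicit) basis of $\Lambda^{\bot}(\bar{\mathbf{A}})$ that does not even need to be short, since $\widetilde{\mathbf{S}}$ is what we actually care about and $\mathbf{B}$'s contribution can be killed via Gram--Schmidt against the gadget block. Verifying $\mathbf{A}\cdot\mathbf{S}=\mathbf{0}\bmod q$ and full rank is then direct.

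The only step that needs real care, and what I expect to be the main technical obstacle, is the Gram--Schmidt bound $\|\widetilde{\mathbf{S}}\|\le O(\sqrt{n\log q})$. Here I would apply the Micciancio--Peikert analysis: after Gram--Schmidt the top block is annihilated and the effective columns become $\bigl(\begin{smallmatrix}-\mathbf{R}\\ \mathbf{I}\end{smallmatrix}\bigr)\widetilde{\mathbf{S}_{\mathbf{G}}}$, whose norm is governed by $\|\mathbf{R}\|\cdot\|\widetilde{\mathbf{S}_{\mathbf{G}}}\|$. A standard tail bound for discrete Gaussian matrices gives $\|\mathbf{R}\| \le O(\sqrt{n\log q})$ with overwhelming probability, and since $\|\widetilde{\mathbf{S}_{\mathbf{G}}}\|=O(1)$, the claimed bound follows. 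Rather than reproduce these calculations, I would cite \cite{AP09STACS,MP12EC} for the detailed Gram--Schmidt accounting and the sub-Gaussian norm bound on $\mathbf{R}$, since they are entirely orthogonal to the contributions of this paper.
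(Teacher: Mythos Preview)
The paper does not prove this lemma at all: it is stated with citations to \cite{AP09STACS,MP12EC} and used as a black box, which is standard practice for this background result. So there is nothing in the paper to compare your sketch against, and simply citing those works---as you yourself propose at the end---already matches what the paper does.

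Since you went further and sketched the Micciancio--Peikert construction, two technical slips are worth flagging. First, with $\mathbf{A}=[\bar{\mathbf{A}}\mid\mathbf{G}-\bar{\mathbf{A}}\mathbf{R}]$ the unimodular transform carrying $\Lambda^{\bot}([\bar{\mathbf{A}}\mid\mathbf{G}])$ to $\Lambda^{\bot}(\mathbf{A})$ is $\bigl(\begin{smallmatrix}\mathbf{I}&\mathbf{R}\\\mathbf{0}&\mathbf{I}\end{smallmatrix}\bigr)$, not $\bigl(\begin{smallmatrix}\mathbf{I}&-\mathbf{R}\\\mathbf{0}&\mathbf{I}\end{smallmatrix}\bigr)$; with your sign a direct computation gives $\mathbf{A}\cdot\mathbf{S}=[\mathbf{0}\mid (\mathbf{G}-2\bar{\mathbf{A}}\mathbf{R})\mathbf{S}_{\mathbf{G}}]\ne\mathbf{0}$. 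Second, and more substantively, the block-diagonal $\bigl(\begin{smallmatrix}\mathbf{B}&\mathbf{0}\\\mathbf{0}&\mathbf{S}_{\mathbf{G}}\end{smallmatrix}\bigr)$ spans only a proper sublattice of $\Lambda^{\bot}([\bar{\mathbf{A}}\mid\mathbf{G}])$: any kernel vector $(\mathbf{z}_1,\mathbf{z}_2)$ with $\mathbf{G}\mathbf{z}_2\ne\mathbf{0}$ (hence $\mathbf{z}_1\notin\Lambda^{\bot}(\bar{\mathbf{A}})$) is not an integer combination of your columns, so your $\mathbf{S}$ is not a basis of $\Lambda^{\bot}(\mathbf{A})$. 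The actual MP12 basis also incorporates an arbitrary solution $\mathbf{W}$ to $\bar{\mathbf{A}}\mathbf{W}=\mathbf{G}\bmod q$, and the Gram--Schmidt bound relies on a specific column ordering so that the long $\mathbf{W}$ and $\mathbf{B}$ blocks are orthogonalised \emph{after} the short gadget vectors; your claim that ``$\mathbf{B}$'s contribution can be killed via Gram--Schmidt'' is in the right spirit but does not hold for the matrix you wrote down. Deferring all of this to \cite{MP12EC}, as you conclude, is the right call and is exactly what the paper does.
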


\begin{lemma}[$\mathsf{SampleD}$~\cite{GPV08STOC}]\label{tpbs-lemma:sampled}
	Given   a  basis $\mathbf{S}\in\mathbb{Z}^{m\times m}$ of the full-rank $q$-ary lattice $\Lambda^{\bot}(\mathbf{A})$ for a matrix $\mathbf{A}\in\mathbb{Z}_q^{n\times m}$, a vector $\mathbf{u}$ over $\mathbb{Z}_q^n$, and a positive real number $s\geq\|\widetilde{\mathbf{S}}\|\cdot \omega(\sqrt{\log n})$,  the $\mathrm{PPT}$ algorithm $\mathsf{SampleD}(\mathbf{A},\mathbf{S},\mathbf{u},s)$ outputs a vector $\mathbf{x}\in \Lambda^{\mathbf{u}}(\mathbf{A})$   that is statistically close to the distribution $D_{\Lambda^{\mathbf{u}}(\mathbf{A}),s}$.
\end{lemma}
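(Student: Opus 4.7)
The plan is to prove the lemma by instantiating the randomised nearest-plane algorithm of Klein (as refined by Gentry--Peikert--Vaikuntanathan) and analysing its output distribution via the smoothing parameter. First I would fix any integer solution $\mathbf{t} \in \Lambda^{\mathbf{u}}(\mathbf{A})$ to $\mathbf{A}\mathbf{t} = \mathbf{u} \bmod q$ (obtainable from linear algebra over $\mathbb{Z}_q$, lifted to $\mathbb{Z}$), so that the problem reduces to sampling a vector from $D_{\Lambda^{\bot}(\mathbf{A}),\, s,\, -\mathbf{t}}$ and then adding $\mathbf{t}$ back; this converts the coset-sampling task into centre-shifted sampling on the lattice $\Lambda^{\bot}(\mathbf{A})$ whose basis $\mathbf{S}$ is known.

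Next I would describe $\mathsf{SampleD}(\mathbf{A},\mathbf{S},\mathbf{u},s)$ as the iterative procedure that, for $j = m, m{-}1, \ldots, 1$, computes the coefficient $c_j = \langle \mathbf{t}_j,\, \tilde{\mathbf{s}}_j\rangle / \langle \tilde{\mathbf{s}}_j,\, \tilde{\mathbf{s}}_j \rangle$ of the current residual $\mathbf{t}_j$ along $\tilde{\mathbf{s}}_j$, samples an integer $z_j$ from the one-dimensional discrete Gaussian $D_{\mathbb{Z} + c_j,\, s / \|\tilde{\mathbf{s}}_j\|}$ (via standard rejection sampling over $\mathbb{Z}$), and updates $\mathbf{t}_{j-1} = \mathbf{t}_j - z_j \mathbf{s}_j$. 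Correctness, i.e., membership in $\Lambda^{\mathbf{u}}(\mathbf{A})$, follows immediately because the update only subtracts integer multiples of the basis vectors of $\Lambda^{\bot}(\mathbf{A})$ from $\mathbf{t}$, and $\mathbf{A}\mathbf{t} = \mathbf{u} \bmod q$ is preserved throughout.

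The distributional analysis is the main obstacle and is where the parameter constraint $s \geq \|\widetilde{\mathbf{S}}\| \cdot \omega(\sqrt{\log n})$ enters. I would proceed by induction on $j$: expanding the density of the output as a product of the one-dimensional conditional Gaussians, one obtains a ratio that equals $\rho_s(\mathbf{x})$ up to a factor $\prod_j \rho_{s/\|\tilde{\mathbf{s}}_j\|}(\mathbb{Z} + c_j)$. The key technical step is to invoke the smoothing-parameter bound $\eta_\varepsilon(\mathbb{Z}) \leq \sqrt{\ln(2(1{+}1/\varepsilon))/\pi}$ and show that whenever $s/\|\tilde{\mathbf{s}}_j\| \geq \eta_\varepsilon(\mathbb{Z})$ for negligible $\varepsilon$, each such factor is within $1 \pm \varepsilon$ of a constant independent of $c_j$. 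Summing the slackness across the $m$ iterations gives total statistical distance at most $m \varepsilon$, which is negligible under the hypothesis on $s$.

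Finally I would conclude by converting the density-proportionality statement on $\Lambda^{\bot}(\mathbf{A})$ into one on the coset via the shift by $\mathbf{t}$, yielding an output statistically close to $D_{\Lambda^{\mathbf{u}}(\mathbf{A}),\, s}$. The hard part is really the smoothing argument showing that the per-iteration one-dimensional Gaussians aggregate into the target $m$-dimensional Gaussian; once that is in place, the algorithm's polynomial running time follows from the tail bound in Lemma~\ref{tpbs-lemma:infinity-norm-bound}, since the intermediate residuals stay bounded and the one-dimensional rejection sampler terminates quickly.
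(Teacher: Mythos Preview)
The paper does not prove this lemma at all: it is stated in the preliminaries (Section~\ref{tpbs-subsection:preliminary-on-lattices}) as a known result imported verbatim from~\cite{GPV08STOC}, with no accompanying argument. Your proposal is therefore not comparable to anything in the paper, but it is a faithful outline of the original GPV proof (Klein's randomized nearest-plane sampler plus the smoothing-parameter analysis), and is correct as a sketch of that argument.
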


\begin{lemma}[$\mathsf{ExtBasis}$~\cite{CHKP10EC}]\label{tpbs-lemma:extbasis}
	Given a basis $\mathbf{S}\in\mathbb{Z}^{m\times m}$ of the full-rank $q$-ary lattice $\Lambda^{\bot}(\mathbf{A})$ for some $\mathbf{A}\in\mathbb{Z}_q^{n\times m}$, and a matrix $\mathbf{A}'\in\mathbb{Z}_q^{n\times {m'}}$  containing $\mathbf{A}$ as a submatrix, the $\mathrm{PPT}$ algorithm $\mathsf{ExtBasis}(\mathbf{S},\mathbf{A}')$ outputs a basis $\mathbf{S}'\in\mathbb{Z}^{m'\times m'}$ of the $q$-ary lattice $\Lambda^{\bot}(\mathbf{A}')$ such that  $\|\widetilde{\mathbf{S}'}\|=\|\widetilde{\mathbf{S}}\|$.
\end{lemma}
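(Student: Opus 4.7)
The plan is to give an explicit block construction of $\mathbf{S}'$ and then verify both that it is a basis of $\Lambda^\bot(\mathbf{A}')$ and that its Gram--Schmidt norm matches that of $\mathbf{S}$. By permuting columns of $\mathbf{A}'$ (a norm-preserving step that can be reversed on the output), I would first reduce to the case $\mathbf{A}'=[\mathbf{A}\,|\,\mathbf{B}]$ for some $\mathbf{B}\in\mathbb{Z}_q^{n\times(m'-m)}$.

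Since $\mathbf{A}$ has full row rank $n$ (any matrix admitting a basis of $\Lambda^\bot$ does), for each column $\mathbf{b}_i$ of $\mathbf{B}$ one can efficiently find an integer vector $\mathbf{w}_i$ with $\mathbf{A}\mathbf{w}_i\equiv -\mathbf{b}_i\pmod q$. Setting $\mathbf{W}=[\mathbf{w}_1\,|\,\cdots\,|\,\mathbf{w}_{m'-m}]$ and
\[
\mathbf{S}' \;=\; \begin{pmatrix}\mathbf{S} & \mathbf{W}\\ \mathbf{0} & \mathbf{I}_{m'-m}\end{pmatrix},
\]
direct multiplication gives $\mathbf{A}'\mathbf{S}'\equiv\mathbf{0}\pmod q$, and the block-triangular shape with invertible diagonal blocks shows $\mathbf{S}'$ is non-singular over $\mathbb{Q}$. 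To see that its columns actually generate $\Lambda^\bot(\mathbf{A}')$, take any $(\mathbf{x}\|\mathbf{y})\in\Lambda^\bot(\mathbf{A}')$; then $\mathbf{A}(\mathbf{x}+\mathbf{W}\mathbf{y})\equiv\mathbf{0}\pmod q$, so $\mathbf{x}+\mathbf{W}\mathbf{y}=\mathbf{S}\mathbf{z}$ for some $\mathbf{z}\in\mathbb{Z}^m$, whence $(\mathbf{x}\|\mathbf{y})=\mathbf{S}'\cdot(\mathbf{z}\|\mathbf{y})$ exhibits the required integer combination.

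For the Gram--Schmidt claim, I would process the columns of $\mathbf{S}'$ left to right. The first $m$ columns $(\mathbf{s}_j\|\mathbf{0})$ orthogonalise to $(\tilde{\mathbf{s}}_j\|\mathbf{0})$, whose Euclidean norms coincide with $\|\tilde{\mathbf{s}}_j\|$. The decisive observation is that $\tilde{\mathbf{s}}_1,\ldots,\tilde{\mathbf{s}}_m$ span all of $\mathbb{R}^m$, so projecting each subsequent column $(\mathbf{w}_i\|\mathbf{e}_i)$ onto their span annihilates its entire top block; the bottom block is untouched because the first $m$ Gram--Schmidt vectors vanish there. Hence the orthogonalisation of $(\mathbf{w}_i\|\mathbf{e}_i)$ against everything processed before it is $(\mathbf{0}\|\mathbf{e}_i)$, and these standard-basis vectors are already mutually orthogonal. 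Taking maxima yields $\|\widetilde{\mathbf{S}'}\|=\max(\|\widetilde{\mathbf{S}}\|,1)=\|\widetilde{\mathbf{S}}\|$, using that $\|\widetilde{\mathbf{S}}\|\ge 1$ for the $\mathsf{TrapGen}$-style bases relevant here.

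The main subtlety, and what makes the conclusion an equality rather than a mere inequality, is that the (potentially large) entries of $\mathbf{W}$ never enter $\widetilde{\mathbf{S}'}$: they are completely absorbed by the orthogonal projection against the first $m$ Gram--Schmidt vectors. This depends crucially on orthogonalising the columns of $\mathbf{S}$ before those of $\mathbf{W}$, so the one place I would be most careful is the column-reordering step, which must preserve the convention that the $\mathbf{A}$-block leads throughout the Gram--Schmidt computation.
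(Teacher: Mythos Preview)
The paper does not prove this lemma at all; it is quoted as a known tool from~\cite{CHKP10EC}. Your proposal reproduces exactly the standard block-triangular construction given in that reference, and the argument is correct.

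Two minor remarks. First, there is a sign slip in the middle of your generation argument: with your convention $\mathbf{A}\mathbf{W}\equiv -\mathbf{B}$, the relation $\mathbf{A}\mathbf{x}+\mathbf{B}\mathbf{y}\equiv\mathbf{0}$ yields $\mathbf{A}(\mathbf{x}-\mathbf{W}\mathbf{y})\equiv\mathbf{0}$, not $\mathbf{A}(\mathbf{x}+\mathbf{W}\mathbf{y})\equiv\mathbf{0}$; with the corrected sign one indeed gets $\mathbf{x}=\mathbf{S}\mathbf{z}+\mathbf{W}\mathbf{y}$ and hence $(\mathbf{x}\|\mathbf{y})=\mathbf{S}'(\mathbf{z}\|\mathbf{y})$, so your conclusion stands. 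Second, your hedge that $\|\widetilde{\mathbf{S}}\|\ge 1$ holds ``for the $\mathsf{TrapGen}$-style bases relevant here'' is overly cautious: since $\tilde{\mathbf{s}}_1=\mathbf{s}_1$ is a non-zero integer vector, $\|\widetilde{\mathbf{S}}\|\ge\|\tilde{\mathbf{s}}_1\|\ge 1$ for \emph{any} integer basis, and the equality $\|\widetilde{\mathbf{S}'}\|=\max(\|\widetilde{\mathbf{S}}\|,1)=\|\widetilde{\mathbf{S}}\|$ holds unconditionally.
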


We then review two lattice problems: short integer solution $(\mathsf{SIS})$ problem  and learning with errors ($\mathsf{LWE}$) problem, together with their hardness results. . 

\begin{definition}[$\mathsf{SIS}^{\infty}_{n,m,q,\beta}$~\cite{Ajtai96STOC,GPV08STOC}]
	Given a uniformly random input matrix $\mathbf{A}$ over $\mathbb{Z}_q^{n\times m}$, the $\mathsf{SIS}^{\infty}_{n,m,q,\beta}$ problem asks to output a vector $\mathbf{x}\in\mathbb{Z}_q^{m}$ such that  $\mathbf{A}\cdot \mathbf{x}=\mathbf{0} \bmod q$ and $0<\|\mathbf{x}\|_{\infty}\leq \beta$.
	
\end{definition}

Let $q > \beta\cdot\widetilde{\mathcal{O}}(\sqrt{n})$ be an integer and $m,\beta$ be polynomials in $n$. Then solving  $\mathsf{SIS}^{\infty}_{n,m,q,\beta}$ problem  is at least as hard as solving $\mathsf{SIVP}_{\gamma}$ problem in the worst case for some approximation factor $\gamma = \beta \cdot \widetilde{\mathcal{O}}(\sqrt{nm})$ (\cite{MR04FOCS,GPV08STOC,MP13C}).
\begin{definition}[$\mathsf{LWE}_{n,q,\chi}$~\cite{Regev05STOC}]
	For  positive integers $n,m,q\geq 2$ and a probability distribution $\chi$ over integers $\mathbb{Z}$, define a distribution  $\mathcal{A}_{\mathbf{s}, \chi}$ over $\mathbb{Z}_q^n \times \mathbb{Z}_q$ for $\mathbf{s}\xleftarrow{\$} \mathbb{Z}_q^n$ as follows: it samples a uniformly random vector $\mathbf{a}$ over $\mathbb{Z}_q^{n}$ and an error element $e$ according to $\chi$, and outputs $(\mathbf{a},\mathbf{a}^\top\cdot \mathbf{s}+e)$. Then the goal of the $\mathsf{LWE}_{n,q,\chi}$ problem is to distinguish $m$ samples chosen from a uniform distribution over $\mathbb{Z}_q^n \times \mathbb{Z}_q$ from $m$ samples chosen from the distribution $\mathcal{A}_{\mathbf{s}, \chi}$ for  some $\mathbf{s}\xleftarrow{\$} \mathbb{Z}_q^n$.
	
\end{definition}
If $q\geq 2$ is an arbitrary modulus, then $\mathsf{LWE}_{n,q,\chi}$ problem is at least as hard as the worst-case problem $\mathsf{SIVP}_{\gamma}$ with $\gamma=\widetilde{\mathcal{O}}(n\cdot q/B)$ 
through an efficient quantum reduction (see, e.g.~\cite{Regev05STOC,PRS17STOC}). Additionally, it is showed that the hardness of the $\mathsf{LWE}$ problem is  maintained when the secret $\mathbf{s}$ is chosen from the error distribution $\chi$ (see~\cite{ACPS09C}). 
 
\subsection{Stern-Like Protocols}\label{tpbs-subsection:Stern-like-protocols}
We will work with statistical zero-knowledge argument of knowledge (ZKAoK) that operates in Stern's framework~\cite{Stern96IT}.   Stern's protocol was originally proposed in the code-based cryptography and was later adapted to the lattice setting~\cite{KTX08AC,LNSW13PKC,LLMNW16AC-dgs,LLNW17AC-ecash} to handle various matrix-vector relations associated with $\mathsf{SIS}$ and $\mathsf{LWE}$ problems. The protocol consists of three moves: commitment, challenge, and response. If a statistically hiding and computationally binding commitment scheme is used in the first step, then one obtains a statistical ZKAoK with perfect completeness and soundness error $2/3$. To achieve negligible soundness error, one can repeat the protocol $\omega(\log \lambda)$ times for security parameter $\lambda$. For our purpose of using the protocol to sign a message, we further apply Fiat-Shamir  transform~\cite{FS86C} to obtain a non-interactive proof. It was shown that the resulting NIZKAoK protocol is simulation-sound extractable~\cite{FKMV12IndoC} in the random oracle model. 
In this paper, we work with a simplified abstracted protocol  from~\cite{LLNW17AC-ecash} that handles two moduli. We now recall this protocol. 

 Let $q_i,,K_i, L_i$ be positive integers such that $q_i \geq 2$, $L_i\geq K_i$,  and let $L=L_1+L_2$. Let $\mathsf{VALID}\subseteq \{-1,0,1\}^L$ and a finite set $\mathcal{S}$, associate every $\eta\in \mathcal{S}$ with a permutation $\Gamma_\eta$ of $L$ elements such that  the following conditions hold:
\begin{eqnarray}\label{eq:zk-equivalence}
\begin{cases}
\mathbf{w} \in \mathsf{VALID} \hspace*{2.5pt} \Longleftrightarrow \hspace*{2.5pt} \Gamma_\eta(\mathbf{w}) \in \mathsf{VALID}~\text{for any}~\eta\in\mathcal{S}, \\
\text{If } \mathbf{w} \in \mathsf{VALID} \text{ and } \eta \text{ is uniform in } \mathcal{S}, \text{ then }  \Gamma_\eta(\mathbf{w}) \text{ is uniform in } \mathsf{VALID}.
\end{cases}
\end{eqnarray}
The target is  to construct a statistical ZKAoK for the following abstract relation:
\begin{eqnarray*}
	\rho_{\mathrm{abstract}} = \big\{&(\mathbf{M}_i, \mathbf{u}_i)_{i\in\{1,2\}}, (\mathbf{w}_1\|\mathbf{w}_2) \in (\mathbb{Z}_{q_i}^{K_i \times L_i} \times \mathbb{Z}_{q_i}^{K_i})_{i\in\{1,2\}} \times \mathsf{VALID}: \\
	&\mathbf{M}_i\cdot \mathbf{w}_i = \mathbf{u}_ i\bmod q_i~\text{for~}i\in\{1,2\}\big\},
\end{eqnarray*}where some entries of $\mathbf{w}_1$ may appear in $\mathbf{w}_2$ and vice versa.  In other words, $\mathbf{w}_1$ and $\mathbf{w}_2$ are mutually related. 
To obtain the desired ZKAoK protocol, one has to prove that $\mathbf{w}=(\mathbf{w}_1\|\mathbf{w}_2)\in\mathsf{VALID}$ and $\mathbf{w}$ satisfies the two linear equations $\mathbf{M}_i\cdot \mathbf{w}_i = \mathbf{u}_i \bmod q_i$ for $i\in\{1,2\}$. To prove $\mathbf{w}\in\mathsf{VALID}$ in a zero-knowledge manner, the prover chooses $\eta \xleftarrow{\$}\mathcal{S}$ and allows the verifier to check $\Gamma_\eta(\mathbf{w}) \in \mathsf{VALID}$. According to the first condition in~(\ref{eq:zk-equivalence}), the verifier should be convinced that $\mathbf{w}$ is indeed from the set $\mathsf{VALID}$. At the same time, the verifier cannot learn any extra information about $\mathbf{w}$ due to the second condition in~(\ref{eq:zk-equivalence}). Furthermore, to prove in ZK that the linear equations hold, the prover   chooses $\{\mathbf{r}_{w_i}\xleftarrow{\$}\mathbb{Z}_{q_i}^{L_i}\}_{i\in\{1,2\}}$ as  masking vectors and then shows the verifier that the equation $\mathbf{M}_i\cdot (\mathbf{w}_i + \mathbf{r}_{w_i}) = \mathbf{M}_i\cdot \mathbf{r}_{w_i} + \mathbf{u}_i \bmod q_i$  holds for $i\in\{1,2\}$. 

In Figure~\ref{Figure:Interactive-Protocol}, we recall in detail  the interaction between two $\mathrm{PPT}$ algorithms prover $\mathcal{P}$ and verifier $\mathcal{V}$. The system utilizes a commitment scheme $\mathsf{COM}$ from~\cite{KTX08AC}: $\mathsf{COM}:\{0,1\}^*\times \{0,1\}^{m}\rightarrow \mathbb{Z}_q^{n}$. It is statistically hiding if $m\geq 2n\log q$ and computationally binding if $\mathsf{SIS}_{n,2m,q,1}^{\infty}$ problem is hard.  For $\mathbf{w}=(\mathbf{w}_1\|\mathbf{w}_2)\in \mathbb{Z}^L$ and $\mathbf{r}=(\mathbf{r}_1\|\mathbf{r}_2)\in\mathbb{Z}^L$, denote $\mathbf{w} \boxplus\mathbf{r}=(\mathbf{w}_1+\mathbf{r}_1 \bmod q_1\|\mathbf{w}_2+\mathbf{r}_2 \bmod q_2)$. Note that for any $\eta\in \mathcal{S}$, we have $\Gamma_{\eta}(\mathbf{w}\boxplus\mathbf{r})=\Gamma_{\eta}(\mathbf{w})\boxplus\Gamma_{\eta}(\mathbf{r})$. 
\begin{theorem}[\cite{LLNW17AC-ecash}]\label{Theorem:zk-protocol}
	Let $\mathsf{COM}$ be a statistically hiding and computationally binding commitment scheme. Then the interactive protocol depicted in Figure~\ref{Figure:Interactive-Protocol} is a statistical \emph{ZKAoK} with perfect completeness, soundness error~$2/3$, and communication cost~$\mathcal{O}(\sum_{i=1}^2L_i\log q_i)$. Specifically:
	\begin{itemize}
		\item There exists an efficient simulator  that on input $\{(\mathbf{M}_i, \mathbf{u}_i)\}_{i\in\{1,2\}}$,  with probability $2/3$ it outputs an accepted transcript that is within statistical distance from the one produced by an honest prover who knows the witness. 
		\item There exists an efficient  algorithm $\mathcal{E}$ that, takes as input $\{(\mathbf{M}_i, \mathbf{u}_i)\}_{i\in\{1.2\}}$ and accepting transcripts $(\mathrm{CMT},1,\mathrm{RSP}_1)$, $(\mathrm{CMT},2,\mathrm{RSP}_2)$,  $(\mathrm{CMT},3,\mathrm{RSP}_3)$, outputs   $\mathbf{w}'=(\mathbf{w}_1'\|\mathbf{w}_2') \in \mathsf{VALID}$ such that $\mathbf{M}_i\cdot \mathbf{w}_i' = \mathbf{u}_i \bmod q_i$ for $i\in \{1,2\}$.  
	\end{itemize}
\end{theorem}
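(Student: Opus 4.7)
The plan is to follow the standard Stern-like framework, adapted to the two-moduli setting described just before the theorem. I would establish the three required properties (completeness, soundness/extractability, and zero-knowledge) in sequence, and then read off the communication cost.

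\textbf{Completeness.} First I would verify perfect completeness by walking through the protocol mechanically. An honest prover who knows $\mathbf{w}=(\mathbf{w}_1\|\mathbf{w}_2)\in\mathsf{VALID}$ with $\mathbf{M}_i\cdot\mathbf{w}_i=\mathbf{u}_i\bmod q_i$ picks $\eta\xleftarrow{\$}\mathcal{S}$ and masking vectors $\mathbf{r}=(\mathbf{r}_{w_1}\|\mathbf{r}_{w_2})$ with $\mathbf{r}_{w_i}\xleftarrow{\$}\mathbb{Z}_{q_i}^{L_i}$. For each of the three challenges, the corresponding verifier check (membership $\Gamma_\eta(\mathbf{w})\in\mathsf{VALID}$, the linear relations $\mathbf{M}_i\cdot(\mathbf{w}_i+\mathbf{r}_{w_i})=\mathbf{u}_i+\mathbf{M}_i\cdot\mathbf{r}_{w_i}\bmod q_i$, and the commitment openings) holds identically, using the first equivalence in~(\ref{eq:zk-equivalence}) and the distributivity $\Gamma_\eta(\mathbf{w}\boxplus\mathbf{r})=\Gamma_\eta(\mathbf{w})\boxplus\Gamma_\eta(\mathbf{r})$.

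\textbf{Soundness error $2/3$ and knowledge extractor.} I would next construct $\mathcal{E}$ assuming three accepting transcripts $(\mathrm{CMT},\mathrm{Ch},\mathrm{RSP}_{\mathrm{Ch}})$ for $\mathrm{Ch}\in\{1,2,3\}$ with the same $\mathrm{CMT}$. The computational binding of $\mathsf{COM}$ forces the values committed in $\mathrm{CMT}$ to be identical across the three transcripts, except with negligible probability (otherwise one reads off a collision and breaks $\mathsf{SIS}^\infty_{n,2m,q,1}$). From $\mathrm{RSP}_1$ one obtains a vector $\mathbf{w}^\ast\in\mathsf{VALID}$ and from $\mathrm{RSP}_3$ one obtains $\eta^\ast$ together with $\mathbf{r}^\ast$; setting $\mathbf{w}':=\Gamma_{\eta^\ast}^{-1}(\mathbf{w}^\ast)$ recovers a candidate witness which lies in $\mathsf{VALID}$ by the first part of~(\ref{eq:zk-equivalence}). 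The response $\mathrm{RSP}_2$ reveals $\mathbf{z}=\mathbf{w}'\boxplus\mathbf{r}^\ast$ satisfying $\mathbf{M}_i\cdot\mathbf{z}_i=\mathbf{u}_i+\mathbf{M}_i\cdot\mathbf{r}^\ast_{w_i}\bmod q_i$; subtracting the contribution of $\mathbf{r}^\ast$ yields $\mathbf{M}_i\cdot\mathbf{w}'_i=\mathbf{u}_i\bmod q_i$ for $i\in\{1,2\}$ simultaneously. Thus $\mathbf{w}'$ is a valid witness. Since any cheating prover that passes all three challenges yields such a triple, the soundness error is at most $2/3$.

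\textbf{Statistical zero-knowledge.} For the simulator, I would have $\mathcal{S}$ guess a challenge $\overline{\mathrm{Ch}}\in\{1,2,3\}$ uniformly in advance and then prepare a $\mathrm{CMT}$ so that the corresponding response is distributed exactly as in the real execution: for $\overline{\mathrm{Ch}}=1$ pick any $\mathbf{w}''\in\mathsf{VALID}$ and fresh masks; for $\overline{\mathrm{Ch}}=2$ pick a uniformly random $\mathbf{z}\in\mathbb{Z}_{q_1}^{L_1}\times\mathbb{Z}_{q_2}^{L_2}$ consistent with $\mathbf{M}_i\cdot\mathbf{z}_i=\mathbf{u}_i+\mathbf{M}_i\cdot\mathbf{r}_{w_i}\bmod q_i$; for $\overline{\mathrm{Ch}}=3$ pick $\eta$ and masks freely. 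If the verifier's challenge matches $\overline{\mathrm{Ch}}$ the produced transcript is opened honestly; otherwise $\mathcal{S}$ aborts. The statistical hiding of $\mathsf{COM}$ hides the unopened commitments, and the second condition in~(\ref{eq:zk-equivalence}) ensures that $\Gamma_\eta(\mathbf{w})$ in the real run is distributed uniformly over $\mathsf{VALID}$, matching the simulator's $\mathbf{w}''$. The success probability of $\mathcal{S}$ is $2/3$ as stated. The communication cost is immediate: $\mathrm{CMT}$ lies in $\mathbb{Z}_q^n$ and the responses transmit permutations of and masked versions of $\mathbf{w}_i\in\mathbb{Z}_{q_i}^{L_i}$, totalling $\mathcal{O}(\sum_i L_i\log q_i)$.

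\textbf{Main obstacle.} The delicate point is handling the two different moduli $q_1,q_2$ and the fact that entries of $\mathbf{w}_1$ may coincide with entries of $\mathbf{w}_2$. I would address this by insisting that the single permutation $\Gamma_\eta$ acts on the joint vector $\mathbf{w}=(\mathbf{w}_1\|\mathbf{w}_2)$ and that $\Gamma_\eta$ preserves the positions of shared coordinates (so related entries are permuted in tandem), while the mixed operation $\boxplus$ applies $q_i$-arithmetic block by block. Once this is set up, the commutation $\Gamma_\eta(\mathbf{w}\boxplus\mathbf{r})=\Gamma_\eta(\mathbf{w})\boxplus\Gamma_\eta(\mathbf{r})$ makes completeness, extraction, and simulation go through exactly as in the single-modulus Stern framework.
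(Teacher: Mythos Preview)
Your overall strategy is the standard Stern-like argument that the cited reference uses, and the completeness and extraction parts are essentially correct. The paper itself does not spell out the proof, so there is nothing to contrast at the level of approach.

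There is, however, a genuine gap in your simulator. As you describe it, $\mathcal{S}$ guesses a single challenge $\overline{\mathrm{Ch}}$, prepares a commitment tailored to \emph{that one} challenge, opens honestly if the verifier's challenge equals $\overline{\mathrm{Ch}}$, and aborts otherwise. That strategy succeeds with probability $1/3$, not the $2/3$ you claim and the theorem requires. The standard Stern simulator instead guesses which challenge will \emph{not} be asked and prepares a commitment that can be opened consistently for the remaining two. Concretely: if the unanswerable challenge is $2$, sample $\mathbf{w}''\xleftarrow{\$}\mathsf{VALID}$ and commit honestly (this handles $\mathrm{Ch}\in\{1,3\}$); if it is $3$, sample $\mathbf{w}''\xleftarrow{\$}\mathsf{VALID}$ but set $C_1=\mathsf{COM}(\eta,\{\mathbf{M}_i(\mathbf{w}''_i+\mathbf{r}_{w_i})-\mathbf{u}_i\bmod q_i\};\rho_1)$ (this handles $\mathrm{Ch}\in\{1,2\}$); if it is $1$, take any $\mathbf{w}''$ with $\mathbf{M}_i\mathbf{w}''_i=\mathbf{u}_i\bmod q_i$ (not necessarily in $\mathsf{VALID}$) and commit honestly (this handles $\mathrm{Ch}\in\{2,3\}$). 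Statistical hiding of $\mathsf{COM}$ and the second condition in~(\ref{eq:zk-equivalence}) then give statistical closeness of the non-aborting transcripts, and the success probability is $2/3$.

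A smaller remark on your ``main obstacle'': the commutation $\Gamma_\eta(\mathbf{w}\boxplus\mathbf{r})=\Gamma_\eta(\mathbf{w})\boxplus\Gamma_\eta(\mathbf{r})$ requires that $\Gamma_\eta$ respect the block decomposition into the $L_1$ coordinates (reduced mod $q_1$) and the $L_2$ coordinates (reduced mod $q_2$); it is this block-preservation, rather than ``preserving positions of shared coordinates'', that makes the two-moduli argument go through. The consistency of shared secret entries across $\mathbf{w}_1$ and $\mathbf{w}_2$ is enforced not by $\Gamma_\eta$ fixing positions but by using the \emph{same} randomness component of $\eta$ at every occurrence of that entry, which is a property of the specific $\mathsf{VALID}$/$\mathcal{S}$/$\Gamma$ design rather than of the abstract protocol.
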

We refer the readers to~\cite{LLNW17AC-ecash} for details  of the proof. 
\begin{figure}[htbp]
	\centering 

	\begin{enumerate}
		\item \textbf{Commitment:} The prover $\mathcal{P}$ samples $\mathbf{r}_{w_1} \xleftarrow{\$} \mathbb{Z}_{q_1}^{L_1}$, $\mathbf{r}_{w_2} \xleftarrow{\$} \mathbb{Z}_{q_2}^{L_2}$, $\eta \xleftarrow{\$} \mathcal{S}$ and randomness $\rho_1, \rho_2, \rho_3$ for $\mathsf{COM}$. Let $\mathbf{r}_w=(\mathbf{r}_{w_1}\|\mathbf{r}_{w_2})$ and $\mathbf{z}=\mathbf{w}\boxplus\mathbf{r}_w$. 
		Then he sends $\mathrm{CMT}= \big(C_1, C_2, C_3\big)$ to $\mathcal{V}$, where
		\begin{gather*}
		C_1 =  \mathsf{COM}(\eta, \{\mathbf{M}_i\cdot \mathbf{r}_{w_i} \bmod q_i\}_{i\in\{1,2\}}; \rho_1), \hspace*{5pt}
		C_2 =  \mathsf{COM}(\Gamma_{\eta}(\mathbf{r}_w); \rho_2), \\
		C_3 =  \mathsf{COM}(\Gamma_{\eta}(\mathbf{z}); \rho_3).
		\end{gather*}
		
		\item \textbf{Challenge:} $\mathcal{V}$ sends back a challenge $Ch \xleftarrow{\$} \{1,2,3\}$ to $\mathcal{P}$.
		\item \textbf{Response:} According to the choice of $Ch$, the prover $\mathcal{P}$ sends $\mathrm{RSP}$ computed in the following way:
		\smallskip
		\begin{itemize}
			\item $Ch = 1$: Let $\mathbf{t}_{w} = \Gamma_{\eta}(\mathbf{w})$, $\mathbf{t}_{r} = \Gamma_{\eta}(\mathbf{r}_w)$, and $\mathrm{RSP} = (\mathbf{t}_w, \mathbf{t}_r, \rho_2, \rho_3)$. \smallskip
			
			\item $Ch = 2$: Let $\eta_2 = \eta$, $\mathbf{z}_2 = \mathbf{z} $, and
			$\mathrm{RSP} = (\eta_2, \mathbf{z}_2, \rho_1, \rho_3)$. \smallskip
			\item $Ch = 3$: Let $\eta_3 = \eta$, $\mathbf{z}_3 = \mathbf{r}_w$, and
			$\mathrm{RSP} = (\eta_3, \mathbf{z}_3, \rho_1, \rho_2)$.
		\end{itemize}
	\end{enumerate}
	\textbf{Verification:}  When receiving $\mathrm{RSP}$ from the verifier $\mathcal{P}$, the prover $\mathcal{V}$ performs as follows:
	\begin{itemize}
		\item $Ch = 1$: Verify that $\mathbf{t}_w \in \mathsf{VALID}$, $C_2 = \mathsf{COM}(\mathbf{t}_r; \rho_2)$, ${C}_3 = \mathsf{COM}(\mathbf{t}_w \boxplus \mathbf{t}_r; \rho_3)$. \smallskip
		
		\item $Ch = 2$: Parse $\mathbf{z}_2=(\mathbf{z}_{2,1}\|\mathbf{z}_{2,2})$ such that $\mathbf{z}_{2,i}\in\mathbb{Z}^{L_i}$ for $i\in\{1,2\}$ and then verify that $C_1 = \mathsf{COM}(\eta_2, \{\mathbf{M}_i\cdot \mathbf{z}_{2,i}- \mathbf{u}_i \bmod q_i\}_{i\in \{1,2\}}; \rho_1)$, ${C}_3 = \mathsf{COM}(\Gamma_{\eta_2}(\mathbf{z}_2); \rho_3)$. \smallskip
		
		\item $Ch = 3$: Parse $\mathbf{z}_3=(\mathbf{z}_{3,1}\|\mathbf{z}_{3,2})$ such that $\mathbf{z}_{3,i}\in\mathbb{Z}^{L_i}$ for $i\in\{1,2\}$ and check that $C_1 =  \mathsf{COM}(\eta_3, \{\mathbf{M}_i\cdot \mathbf{z}_{3,i}\bmod q_i\}_{i\in\{1,2\}}; \rho_1), \hspace*{5pt}
		C_2 =  \mathsf{COM}(\Gamma_{\eta_3}(\mathbf{z}_3); \rho_2).$
		
	\end{itemize}
	In each case, if all the conditions hold, $\mathcal{V}$ outputs $1$.
	\caption{Stern-type statistical ZKAoK for the $\mathcal{NP}$-relation $\rho_{\mathrm{abstract}}$.}
	\label{Figure:Interactive-Protocol}
	      
\end{figure}

We remark that when one works with $\mathsf{SIS}$ or $\mathsf{LWE}$ associated equations, one does not have  the above abstract relation directly. For example, for $\mathsf{LWE}$ related equations, the secret vectors are usually $B$ bounded and there is no direct permutation such that conditions in~(\ref{eq:zk-equivalence}) hold. To solve this issue, Ling et al.~\cite{LNSW13PKC} developed decomposition-extension techniques that are essential to reduce the considered statement to an instance of the above abstract protocol. Looking ahead, in Section~\ref{tpbs-subsection:main-zk-protocol} we reduce the relations considered in Section~\ref{tpbs-subsection:description-of-our-construction} to an instance of the above abstract protocol.

\subsection{Description of Our Scheme}\label{tpbs-subsection:description-of-our-construction}
To obtain a concrete construction of TPBS from lattice assumptions, we will choose the required building blocks specified in our generic construction in Section~\ref{tpbs-subsection:gener-con}. As for zero-knowledge techniques, we employ Stern-like protocols~\cite{KTX08AC,LNSW13PKC}, which are the most promising choice for our purpose due to their versatility and extendability. 

The space of user identities is set as $\mathcal{ID}\subseteq \{0,1\}^{\ell_1}$ with $\ell_1=\mathcal{O}(\log n)$. This is sufficient since we work with polynomial number of users. With regard to the policy language, we follow Cheng et al.~\cite{CNW16DCC}, who came up with an instantiation that captures policies in many real-life scenarios towards construction of their lattice-based PBS.\remove{Specifically, the policy language they designed allows for exponentially many messages and polynomially many policies such that a policy permits many messages while a message satisfies  many polices.} Let $\ell_2=\mathcal{O}(\log n)$ and $d$ be an integer such that $n-\ell_2<d$. A policy checker is specified by two matrices $\mathbf{G}_1\in \mathbb{Z}_2^{n\times \ell_2}$ and $\mathbf{G}_2\in \mathbb{Z}_2^{n\times d}$, denoted as $\mathsf{PC}_{\mathbf{G}_1,\mathbf{G}_2}$. A message $\mathbf{m}\in\{0,1\}^n$ satisfies a policy $\mathbf{p}\in\{0,1\}^{\ell_2}$ if there exists  $\mathbf{q}\in \{0,1\}^{d}$ such that  
\begin{eqnarray}\label{tpbs-eq:policy-checker-equation}
 \mathbf{G}_1\cdot \mathbf{p}+\mathbf{G}_2\cdot \mathbf{q}=\mathbf{m}\bmod 2. 
\end{eqnarray}
The associated language is $$\mathcal{L}(\mathsf{PC}_{\mathbf{G}_1,\mathbf{G}_2})=\{(\mathbf{p},\mathbf{m}): \exists~\mathbf{q}\in\{0,1\}^{d}~\text{s.t.}~\mathbf{G}_1\cdot \mathbf{p}+\mathbf{G}_2\cdot \mathbf{q}=\mathbf{m}\bmod 2\}.$$
Let $\ell=\ell_1+\ell_2$. Instead of using Bonsai signature scheme~\cite{CHKP10EC} as in~\cite{CNW16DCC}, we choose Boyen signature scheme~\cite{Boyen10PKC}. This will reduce the public key size by a factor close to $2$ and user signing key size  by a factor of $\ell/2$. For  encryption scheme, we start with  GPV-IBE~\cite{GPV08STOC},   and then transform it  to an IND-CCA secure encryption  by using a strong one time signature following the CHK technique~\cite{CHK04EC}.  
When user $\mathsf{id}$ signs a message $\mathbf{m}$, it has to generate a zero-knowledge proof showing that  (1)  it possess a valid signature on $\mathsf{id}\|\mathbf{p}$ for the Boyen signature; (2) it has encrypted $\mathsf{id}$ correctly using GPV-IBE; (3) there exists a vector $\mathbf{q}$ such that the above policy relation is satisfied for $(\mathbf{p},\mathbf{m})$. Even a relation similar to the combination of  (1) and (2) is addressed in~\cite{LNW15PKC} and a relation that contains (3) as a sub-statement is addressed in~\cite{CNW16DCC}, it is not straightforward to obtain our zero-knowledge protocol. One reason is that the relation considered in~\cite{LNW15PKC} has encrypted message to be $\mathsf{id}\|\mathbf{p}$ while our relation has encrypted message  $\mathsf{id}$, which makes the considered relation different. Another reason is that we have to show that (1), (2), (3)  are satisfied simultaneously. We manage to do so by carefully utilizing the flexibility and extendability of Stern-like protocols. Furthermore, we employ  the more recent extension-permutation techniques from~\cite{LNRW18TCS,LNWX18PKC} to achieve \emph{optimal} permutation size that is exactly the bit size of the secret input (denoted as $|\xi|$). This improves the signature size slightly and is  preferable to the  suboptimal permutation size ($\mathcal{O}(|\xi|\cdot\log |\xi|)$) if we use the same extension-permutation techniques in~\cite{LNW15PKC,CNW16DCC}.   Details of our zero-knowledge protocol are described in Section~\ref{tpbs-subsection:main-zk-protocol}. For completeness, we  recall Boyen's signature and the GPV-IBE scheme in~\ref{tpbs-subsection:primitives-for-concrete-construction}.

Now that we have established all the building blocks, our construction of TPBS follows smoothly. We present it in the below. 
\begin{description}
	\item[$\mathsf{Setup}(1^{\lambda})$] Given the security parameter $\lambda$, it first specifies public parameters as follows: \begin{itemize}
		\item Message length $n=\mathcal{O}(\lambda)$, user identity length $\ell_1=\mathcal{O}(\log n)$, policy length $\ell_2=\mathcal{O}(\log n)$, witness length $d$ such that $\ell_2+d >n $. Policy specifying matrix $\mathbf{G}_1\in \mathbb{Z}_2^{n\times \ell_2}$ and $\mathbf{G}_2\in \mathbb{Z}_2^{n\times d}$. Define $\ell=\ell_1+\ell_2$. 
		\item Modulus $q=\mathcal{O}(\ell n^2)$, $m\geq 2n\log q$. 
		\item Two real numbers $s=\omega(\sqrt{\log m})\cdot \mathcal{O}(\sqrt{\ell n\log q})$ and $s_1=\omega(\log m)$. Two integer bounds $\beta=\lceil s\cdot \log n\rceil$ and $B=\widetilde{\mathcal{O}}(\sqrt{n})$. Let $\chi$ be an efficiently sample distribution over integers $\mathbb{Z}$ that outputs a sample $e$ with $|e|\leq B$. 
		
		\item Two hash functions $\mathcal{H}_1:\{0,1\}^*\rightarrow \mathbb{Z}_q^{n\times \ell_1}$ and $\mathcal{H}_2:\{0,1\}^*\rightarrow \{1,2,3\}^{\kappa}$, the latter of which will be modeled as random oracle in the security proof.  
		\item A strong one time signature scheme $\mathcal{OTS}=(\mathsf{OGen},\mathsf{OSign},\mathsf{OVerify})$ to apply CHK transform~\cite{CHK04EC}. 
		\item A statistically hiding and computationally binding commitment scheme from~\cite{KTX08AC}: $\mathsf{COM}: \{0,1\}^*\times \{0,1\}^m\rightarrow \mathbb{Z}_q^n$ for our proof system. 
		\item A number of protocol repetitions $\kappa=\omega(\log \lambda)$. \remove{ of our statistical ZKAoK protocol. }
	\end{itemize}

\smallskip
In addition, the algorithm generates key pair $\mathsf{mvk}=(\mathbf{A},\mathbf{A}_0,\ldots, \mathbf{A}_{\ell},\mathbf{u})\in(\mathbb{Z}_q^{n\times m})^{\ell+2}\times \mathbb{Z}_q^{n}$ and $\mathsf{msk}=\mathbf{S}\in\mathbb{Z}^{m\times m}$ for Boyen signature scheme~\cite{Boyen10PKC}, key pair $\mathsf{mek}=\mathbf{B}$ and $\mathsf{mdk}=\mathbf{T}$ for GPV-IBE scheme~\cite{GPV08STOC}, where $(\mathbf{A},\mathbf{S})$ and $(\mathbf{B},\mathbf{T})$ are generated via $\mathsf{TrapGen}(n,m,q)$. The public parameter $\mathsf{pp}$ then contains all parameters together with $\mathsf{mvk},\mathsf{mek}$. It returns $\mathsf{pp},\mathsf{msk},\mathsf{mdk}$. 

\smallskip
\smallskip
	
	\item[$\mathsf{KeyGen}(\mathsf{msk},\mathsf{id}\in\{0,1\}^{\ell_1},\mathcal{P}_{\mathsf{id}})$] For all  $\mathbf{p}\in\{0,1\}^{\ell_2} \in \mathcal{P}_{\mathsf{id}}$, we sign $\mathsf{id}\|\mathbf{p}$ using $\mathsf{msk}=\mathbf{S}$ via algorithm $\mathsf{SampleD}\big( \mathsf{ExtBasis}(\mathbf{S},\mathbf{A}_{\mathsf{id}\|\mathbf{p}}),\mathbf{A}_{\mathsf{id}\|\mathbf{p}}, \mathbf{u},s)\big)$, where $\mathbf{A}_{\mathsf{id}\|\mathbf{p}}=[\mathbf{A}|\mathbf{A}_0+\sum_{j=1}^{\ell_1}\mathsf{id}[j]\cdot\mathbf{A}_j+\sum_{j=1}^{\ell_2}\mathbf{p}[j]\cdot\mathbf{A}_{\ell_1+j}]$, obtaining a signature $\mathbf{v}_{\mathsf{id}\|p}\in \Lambda^{\mathbf{u}}(\mathbf{A}_{\mathsf{id}\|\mathbf{p}})$ satisfying the following: 
	\begin{eqnarray}\label{tpbs-eq:boyen-signature-condition}
	\mathbf{A}_{\mathsf{id}\|\mathbf{p}}\cdot \mathbf{v}_{\mathsf{id}\|\mathbf{p}}=\mathbf{u}\bmod q;\hspace*{12pt} 
	\|\mathbf{v}_{\mathsf{id}\|\mathbf{p}}\|_{\infty} \leq \beta. 
	\end{eqnarray}
	It returns 
	$\mathsf{usk}_{\mathsf{id}}=\big(\mathsf{id}, \{p,\mathbf{v}_{\mathsf{id}\|p}: p\in \mathcal{P}_{\mathsf{id}}\}\big)$.   
\smallskip\smallskip
	
	\item[$\mathsf{Sign}(\mathsf{usk}_{\mathsf{id}},\mathbf{m}\in\{0,1\}^n,\mathbf{q}\in\{0,1\}^d)$] Parse $\mathsf{usk}_{\mathsf{id}}=\big(\mathsf{id}, \{p,\mathbf{v}_{\mathsf{id}\|p}: p\in \mathcal{P}_{\mathsf{id}}\}\big)$.   If $\exists~ \mathbf{p}\in\mathcal{P}_{\mathsf{id}}$ such that $\mathbf{G}_1\cdot \mathbf{p}+\mathbf{G}_2\cdot 
	\mathbf{q}=\mathbf{m}\bmod 2$, then it does following. 
	\begin{itemize}
		\item It first generates a one time key pair $(\mathsf{ovk},\mathsf{osk})\leftarrow \mathsf{OGen}(1^{\lambda})$. 
		\item It next encrypts $\mathsf{id}$ with respect to ``identity'' $\mathsf{ovk}$. Specifically, it computes $\mathbf{G}=\mathcal{H}_1(\mathsf{ovk})\in\mathbb{Z}_q^{n\times \ell_1}$,  samples $\mathbf{s}\hookleftarrow\chi^n$, $\mathbf{e}_1\hookleftarrow\chi^m$,  $\mathbf{e}_2\hookleftarrow\chi^{\ell_1}$, and computes $(\mathbf{c}_1,\mathbf{c}_2)\in \mathbb{Z}_q^m\times \mathbb{Z}_q^{\ell_1}$ as 
		\begin{equation}\label{tpbs-eq:gpv-encryption-condition}
		\mathbf{c}_1=\mathbf{B}^\top \cdot \mathbf{s}+\mathbf{e}_1; ~~	\mathbf{c}_2=\mathbf{G}^\top \cdot \mathbf{s}+\mathbf{e}_2+\mathsf{id}\cdot \lfloor \frac{q}{2}\rfloor.  
		\end{equation} 
		\item Then it generates a NIZKAoK proof $\pi$ to show possession of a tuple $$\xi=(\mathsf{id}\|\mathbf{p},\mathbf{v}_{\mathsf{id}\|\mathbf{p}}, \mathbf{s},\mathbf{e}_1,\mathbf{e}_2,\mathbf{q})$$ such that equations~(\ref{tpbs-eq:boyen-signature-condition}), (\ref{tpbs-eq:gpv-encryption-condition}), (\ref{tpbs-eq:policy-checker-equation}) holds and that 
		\begin{equation}\label{tpbs-eq:gpv-encryptioon-condition-about-bound}
		\|\mathbf{s}\|_{\infty}\leq B, ~~	\|\mathbf{e}_1\|_{\infty}\leq B, ~~	\|\mathbf{e}_2\|_{\infty}\leq B. 
		\end{equation}
		This is done by running the statistical ZKAoK in Section~\ref{tpbs-subsection:main-zk-protocol} with public input 
		$\zeta=(\mathbf{A},\mathbf{A}_0,\ldots,\mathbf{A}_{\ell},\mathbf{u},\mathbf{B},\mathbf{G},\mathbf{c}_1,\mathbf{c}_2,\mathbf{G}_1,\mathbf{G}_2,\mathbf{m})$ and secret input $\xi$ as above. Then the protocol is repeated $\kappa$ times to achieve negligible soundness error and made non-interactive via Fiat-Shamir transform~\cite{FS86C}. The result proof is a triple $\pi=((\mathrm{CMT}_i)_{i=1}^{\kappa},\mathrm{CH},(\mathrm{RSP}_i)_{i=1}^{\kappa})$ with $\mathrm{CH}=\mathcal{H}_{2}(\zeta,(\mathrm{CMT}_i)_{i=1}^{\kappa}).$
		\item Finally, it runs the algorithm $ \mathsf{OSign}(\mathsf{osk}; \mathbf{c}_1, \mathbf{c}_2, \pi)$ to obtain a one time signature $\mathrm{sig}$ on the tuple $(\mathbf{c}_1, \mathbf{c}_2, \pi)$. Let $\sigma= (\mathsf{ovk},\mathbf{c}_1, \mathbf{c}_2, \pi,\mathrm{sig})$. Return~$\sigma$. 
		
	\end{itemize}  
	\smallskip
	Otherwise, this algorithm returns $\bot$. 

\smallskip\smallskip
	\item[$\mathsf{Verify}(\mathsf{pp},\mathbf{m},\sigma)$] Let $\sigma= (\mathsf{ovk},\mathbf{c}_1, \mathbf{c}_2, \pi,\mathrm{sig})$. The algorithm proceeds as follows. 
	\begin{itemize}
		\item  It first runs the verification algorithm  $\mathsf{OVerfiy}(\mathsf{ovk},(\mathbf{c}_1,\mathbf{c}_2,\pi),\mathrm{sig})$. Return~$0$ if $\mathsf{OVerfiy}$ returns~$0$. 
		
		\item It then parses $\pi=((\mathrm{CMT}_i)_{i=1}^{\kappa},\mathrm{CH}=(\mathrm{Ch}_1,\ldots,\mathrm{Ch}_{\kappa}),(\mathrm{RSP}_i)_{i=1}^{\kappa})$.  Return~$0$ if $\mathrm{CH}\neq \mathcal{H}_{2}(\zeta,(\mathrm{CMT}_i)_{i=1}^{\kappa})$. 
		
		\item  Next, for each $i\in[\kappa]$, it runs the verification step of the protocol in Section~\ref{tpbs-subsection:main-zk-protocol} to check the validity of $\mathrm{RSP}_i$ with respect to $\mathrm{CMT}_i$ and $\mathrm{Ch}_i$. Return~$0$ if any of the verification does not pass. Otherwise return~$1$. 
	\end{itemize}
	
	\item[$\mathsf{Open}(\mathsf{mdk},\mathbf{m},\sigma)$] Parse $\sigma= (\mathsf{ovk},\mathbf{c}_1, \mathbf{c}_2, \pi,\mathrm{sig})$. Return $\bot$ if $\mathsf{Verify}(\mathsf{pp},\mathbf{m},\sigma)=0$. Else, this algorithm  opens the signature using $\mathsf{mdk}=\mathbf{T}$ as follows. 
	\begin{itemize}
		\item Compute $\mathbf{G}=\mathcal{H}_1(\mathsf{ovk})\stackrel{\triangle}{=}[\mathbf{g}_1|\cdots|\mathbf{g}_{\ell_1}]$. Run  $\mathbf{f}_i\leftarrow \mathsf{SampleD}(\mathbf{B},\mathbf{T},\mathbf{g}_i,s_1)$ for $i\in[\ell_1]$. Define the decryption key  with respect to ``identity'' $\mathsf{ovk}$ as  $\mathbf{F}_{\mathsf{iden}}=[\mathbf{f}_1|\cdots|\mathbf{f}_{\ell_1}]\in\mathbb{Z}_q^{m\times\ell_1}$.   Note that $\mathbf{B}\cdot \mathbf{F}_{\mathsf{ovk}}=\mathbf{G}\bmod q$. 
		\item Decrypt $(\mathbf{c}_1,\mathbf{c}_2)$ using $\mathbf{F}_{\mathsf{ovk}}$ by computing $$\mathsf{id}'=\lfloor\frac{ \mathbf{c}_2-\mathbf{F}_{\mathsf{ovk}}^{\top}\cdot \mathbf{c}_1}{\lfloor\frac{q}{2}\rfloor}\rceil. $$ Return~$\mathsf{id}'$. 
	\end{itemize}
	
\end{description}
\noindent{\bf Asymptotic Efficiency.} We first analyze the efficiency of our construction with respect to security parameter $\lambda$. 
\begin{itemize}
	\item Public parameter $\mathsf{pp}$ is dominated by the public key of the underlying encryption scheme and signature scheme, which has bit size $\mathcal{O}(\ell \lambda^2 \log^2 q)=\widetilde{\mathcal{O}}(\ell \lambda^2)$. The bit size of $\mathsf{msk}$ and $\mathsf{mdk}$ is $\mathcal{O}(\lambda^2\log^3 \lambda)=\widetilde{\mathcal{O}}(\lambda^2)$.  \smallskip
	
	\item The bit size of  user secret key $\mathsf{usk}_{\mathsf{id}}$ is dominated by those of Boyen signatures, which is $\mathcal{O}(\lambda  \log \lambda\cdot c_{\mathsf{id}})=\widetilde{\mathcal{O}}(c_{\mathsf{id}}\cdot\lambda)$ with $c_{\mathsf{id}}=|\mathcal{P}_{\mathsf{id}}|$.  \smallskip
	\item The bit size of signature is dominated by that of NIZKAoK proof $\pi$, which is $\mathcal{O}(L_1\log q+L_2)\cdot\omega(\log \lambda)=\widetilde{\mathcal{O}}(\ell\lambda)$. Note that $L_1,L_2$ are the bit sizes of witness vectors $\mathbf{w}_1,\mathbf{w}_2$ in Section~\ref{tpbs-subsection:main-zk-protocol}  and $\mathcal{O}(L_1\log q+L_2)=\mathcal{O}(\ell \lambda\log^3 \lambda)$. 
\end{itemize} 
\noindent{\bf Correctness.} Correctness of the above construction relies on the following facts: (1) the underlying zero-knowledge protocol is perfectly complete; (2) the GPV-IBE scheme~\cite{GPV08STOC} for the choice of parameters is correct.   

Correctness of $\mathsf{Verify}$ algorithm directly follows from fact (1). As for the correctness of  $\mathsf{Open}$ algorithm, note that 
\begin{eqnarray*}
\mathbf{c}_2-\mathbf{F}_{\mathsf{ovk}}^{\top}\cdot \mathbf{c}_1&=&\mathbf{G}^\top\cdot \mathbf{s}+\mathbf{e}_2+\mathsf{id}\cdot \lfloor \frac{q}{2}\rfloor -\mathbf{F}_{\mathsf{ovk}}^{\top}\cdot \mathbf{B}^\top\cdot \mathbf{s}-\mathbf{F}_{\mathsf{ovk}}^{\top}\cdot \mathbf{e}_1\\
&=&\mathbf{e}_2-\mathbf{F}_{\mathsf{ovk}}^{\top}\cdot \mathbf{e}_1+\mathsf{id}\cdot \lfloor \frac{q}{2}\rfloor
\end{eqnarray*}
Recall that $\|\mathbf{e}_1\|_{\infty}\leq B$, $\|\mathbf{e}_2\|_{\infty} \leq B$, $B=\widetilde{\mathcal{O}}(\sqrt{n})$, and each column of $\mathbf{F}_{\mathsf{ovk}}$ is obtain via algorithm $\mathsf{SampleD}(\mathbf{B},\mathbf{T},\cdot, s_1)$ with $s_1=\omega(\log m)$. Therefore,  we have $\|\mathbf{F}_{\mathsf{ovk}}\|_{\infty} \leq \lceil s_1\log m \rceil$. Hence $$\|\mathbf{e}_2-\mathbf{F}_{\mathsf{ovk}}^{\top}\cdot \mathbf{e}_1\|_{\infty} \leq B+mB \cdot\lceil s_1\log m \rceil =\widetilde{\mathcal{O}}(n^{1.5}) \leq \lceil\frac{q}{5}\rceil =\mathcal{O}(\ell n^2). $$ Therefore, the rounding algorithm described in $\mathsf{Open}$ returns $\mathsf{id}$ with overwhelming probability. 

\smallskip

\noindent {\bf Security.} We summarize the security of our scheme  in the following theorem.
\begin{theorem}
	In the random oracle model, assuming   hardness of $\mathsf{SIVP}_{\widetilde{\mathcal{O}}(\ell \cdot n^2)}$ in the worst case, our scheme satisfies simulatability and extractability defined in Section~\ref{tpbs-section:tpbs-security-model}. 
\end{theorem}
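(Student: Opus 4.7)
The plan is to invoke the generic result (Theorem~\ref{tpbs-thm: EXT} and Theorem~\ref{tpbs-thm: SIM}) and argue that each of the three lattice-based building blocks used in the concrete instantiation satisfies the security notion required there, with the underlying worst-case assumption being $\mathsf{SIVP}_\gamma$ for $\gamma=\widetilde{\mathcal{O}}(\ell\cdot n^2)$. Concretely, extractability follows from EU-CMA security of Boyen's signature scheme plus simulation-sound extractability of the NIZK, while simulatability further requires the IND-CCA security of the encryption layer and zero-knowledge of the NIZK. So it suffices to establish these three properties for the ingredients we picked.

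First, I would argue EU-CMA security of Boyen's signature. This is known to follow from $\mathsf{SIS}^\infty_{n,m,q,\widetilde{O}(\ell\sqrt{n})}$ (using the $\ell$-bit ``tag'' consisting of $\mathsf{id}\|\mathbf p$), which in turn reduces to worst-case $\mathsf{SIVP}_{\widetilde{O}(\ell\cdot n)}$. Second, I would argue IND-CCA security of the encryption layer: the base GPV-IBE is IND-CPA under $\mathsf{LWE}_{n,q,\chi}$, and the CHK transform combined with a strong one-time signature upgrades this to IND-CCA. Choosing $q=\mathcal O(\ell n^2)$ and $B=\widetilde{O}(\sqrt n)$ the $\mathsf{LWE}$ problem reduces to $\mathsf{SIVP}_{\widetilde O(n\cdot q/B)}=\mathsf{SIVP}_{\widetilde O(\ell n^{2.5})}$; after simplification this fits within $\widetilde O(\ell n^2)$ up to poly-logarithmic factors absorbed in $\widetilde{O}(\cdot)$. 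Third, the NIZK obtained by $\kappa=\omega(\log\lambda)$ parallel repetitions of the Stern-like protocol from Section~\ref{tpbs-subsection:Stern-like-protocols} followed by Fiat--Shamir is, by~\cite{FKMV12IndoC}, simulation-sound extractable and zero-knowledge in the random oracle model; its security relies on the statistically hiding and computationally binding KTX commitment, whose binding reduces to $\mathsf{SIS}^\infty_{n,2m,q,1}$ and thus to $\mathsf{SIVP}_{\widetilde O(\sqrt n)}$. The protocol in Section~\ref{tpbs-subsection:main-zk-protocol} will need to be shown to indeed prove the relation $\rho_{\mathsf{tpbs}}$ induced by the concrete choices (Boyen signature, GPV-IBE ciphertext, and Cheng et al.'s policy equation), i.e.\ that equations~(\ref{tpbs-eq:boyen-signature-condition}), (\ref{tpbs-eq:gpv-encryption-condition}), (\ref{tpbs-eq:policy-checker-equation}) and (\ref{tpbs-eq:gpv-encryptioon-condition-about-bound}) all hold simultaneously for the same identity $\mathsf{id}$ and policy $\mathbf p$.

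Putting the three reductions together, extractability of the TPBS follows from Theorem~\ref{tpbs-thm: EXT} with advantage bounded by the sum of $\mathbf{Adv}^{\mathrm{EU\text-CMA}}_{\mathrm{Boyen}}$ and $\mathbf{Adv}^{\mathrm{SE}}_{\Pi}$; simulatability follows from Theorem~\ref{tpbs-thm: SIM} with advantage bounded by the sum of $\mathbf{Adv}^{\mathrm{ZK}}_{\Pi}$, $\mathbf{Adv}^{\mathrm{IND\text-CCA}}_{\mathrm{PKE}}$ and $\mathbf{Adv}^{\mathrm{EXT}}_{\mathrm{TPBS}}$, all of which are negligible. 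Taking the worst of the three underlying $\mathsf{SIVP}_\gamma$ factors yields the claimed $\gamma=\widetilde{\mathcal O}(\ell\cdot n^2)$.

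The main obstacle I expect is in the NIZK step: showing that the abstract protocol of Section~\ref{tpbs-subsection:Stern-like-protocols}, when instantiated with the composite witness $\xi=(\mathsf{id}\|\mathbf p,\mathbf v_{\mathsf{id}\|\mathbf p},\mathbf s,\mathbf e_1,\mathbf e_2,\mathbf q)$, actually captures relation $\rho_{\mathsf{tpbs}}$ as a single instance of $\rho_{\mathrm{abstract}}$ with two moduli $q_1=q$ (for the Boyen/LWE equations) and $q_2=2$ (for the policy equation), and that the extracted witness enforces the correct linking of $\mathsf{id}$ across Boyen's verification equation and the GPV-IBE ciphertext. The decomposition-extension and ``optimal permutation'' tricks of~\cite{LNRW18TCS,LNWX18PKC} will be applied here so that the $\mathsf{VALID}$ set and permutation family $\{\Gamma_\eta\}$ respect the bit-level constraints $\|\mathbf v_{\mathsf{id}\|\mathbf p}\|_\infty\le\beta$, $\|\mathbf s\|_\infty,\|\mathbf e_1\|_\infty,\|\mathbf e_2\|_\infty\le B$, the binary structure of $\mathsf{id}\|\mathbf p$ and $\mathbf q$, and the re-use of $\mathsf{id}$ and $\mathbf p$ across the three sub-statements. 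Once this reduction to $\rho_{\mathrm{abstract}}$ is in place, Theorem~\ref{Theorem:zk-protocol} together with the results of~\cite{FKMV12IndoC} supplies the needed SE and ZK properties, and the rest of the security proof is a routine application of the two generic theorems.
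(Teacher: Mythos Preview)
Your proposal takes essentially the same route as the paper: invoke Theorems~\ref{tpbs-thm: EXT} and~\ref{tpbs-thm: SIM}, verify that each lattice ingredient (Boyen's signature, CHK-transformed GPV-IBE, and the Fiat--Shamir'ed Stern protocol with KTX commitments) meets its required security notion, and trace each back to $\mathsf{SIVP}_\gamma$ with $\gamma\le\widetilde{\mathcal O}(\ell n^2)$. The one point the paper makes explicit that you treat as a black box is that the generic theorems are phrased for a CRS-based SE-NIZK with a trapdoor $\mathsf{tr}$, whereas the ROM instantiation has no such trapdoor, so the paper actually spells out ROM-specific $\mathsf{SimProve}$ and $\mathsf{Extr}_{\mathsf{nizk}}$ algorithms (in an appendix) rather than simply citing~\cite{FKMV12IndoC}.
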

\begin{proof}
	In Theorem~\ref{tpbs-thm: EXT} and Theorem~\ref{tpbs-thm: SIM}, we showed that the generic construction satisfies extractability and simulatability  if the underlying signature scheme is EU-CMA, the underlying encryption scheme is IND-CCA, and the underlying proof system is SE-NIZK. However, the security of our lattice-based construction is not straightforward due to the random oracle model. More specifically, we do not have a trapdoor $\mathsf{tr}$ when we run $\mathsf{SimSetup}_{\mathsf{nizk}}$, which makes algorithms $\mathsf{SimProve}$ and $\mathsf{Extr}_{\mathsf{nizk}}$ different from those in the standard model. For the sake of presentation, we provide these two algorithms for our proof system in~\ref{tpbs-section:deferred-simprove+extr-algorithms}. The security of our construction will then follow from Theorem~\ref{tpbs-thm: EXT} and Theorem~\ref{tpbs-thm: SIM}. In addition,  
	\begin{itemize}
		\item 	The GPV-IBE scheme, via CHK transform~\cite{CHK04EC} is IND-CCA secure assuming the hardness of $\mathsf{LWE}_{n,q,\chi}$ (see~\cite{GPV08STOC}), which in turn relies on hardness of $\mathsf{SIVP}_{\gamma}$ for $\gamma=\widetilde{\mathcal{O}}(n\cdot q/B)=\widetilde{\mathcal{O}}(\ell\cdot n^2)$. 
		\item Boyen signature is EU-CMA assuming hardness of $\mathsf{SIS}_{n,m,q,\widetilde{\mathcal{O}}(\ell\cdot n)}^{\infty}$ (see~\cite{Boyen10PKC,MP12EC}), which in turn relies on  worst-case hardness of $\mathsf{SIVP}_{\gamma}$ for $\gamma=\widetilde{\mathcal{O}}(\ell\cdot n)\cdot \widetilde{\mathcal{O}}(\sqrt{nm})=\widetilde{\mathcal{O}}(\ell\cdot n)$.
		
		\item The commitment scheme $\mathsf{COM}$ utilized in our proof system is statistically hiding and computationally binding assuming hardness of $\mathsf{SIS}_{n,2m,q,1}^{\infty}$, which in turn depends on  worst-case hardness of $\mathsf{SIVP}_{\widetilde{\mathcal{O}}(n)}$. \remove{Recall that~\cite{FKMV12IndoC} shows that applying Fiat-Shamir transform~\cite{FS86C} to our statistical ZKAoK protocol results in a SE-NIZK protocol.  }
			\end{itemize}
	Therefore, our scheme is simulatable and extractable if $\mathsf{SIVP}_{\widetilde{\mathcal{O}}(\ell \cdot n^2)}$ is hard. 
\end{proof}

\section{The Underlying Zero-Knowledge Argument System}\label{tpbs-subsection:main-zk-protocol}

This section presents our statistical ZKAoK of secret vector $\xi$ such that it satisfies equations~(\ref{tpbs-eq:boyen-signature-condition}), (\ref{tpbs-eq:gpv-encryption-condition}), (\ref{tpbs-eq:policy-checker-equation}), (\ref{tpbs-eq:gpv-encryptioon-condition-about-bound}), which will be invoked by user when signing messages. The target is to reduce those statements to an instance of the abstract relation described in Section~\ref{tpbs-subsection:Stern-like-protocols} such that conditions in~(\ref{eq:zk-equivalence}) hold. To this end, we first recall the decomposition technique from~\cite{LNSW13PKC} to unify our considered statements into equations of the form $\{\widehat{\mathbf{M}}_i\cdot \widehat{\mathbf{w}}_i= \mathbf{u}_i \bmod q_i\}_{i\in \{1,2\}}$ such that $\|\widehat{\mathbf{w}}_i\|_{\infty}\leq 1$ for $i\in\{1,2\}$.  Then we employ the extension-permutation techniques from~\cite{LNRW18TCS,LNWX18PKC} instead of those presented in~\cite{LNW15PKC,CNW16DCC},  which is crucial to achieve optimal permutation size, to specify a set  $\mathsf{VALID}$ that contains our extended secret vector $\mathbf{w}=(\mathbf{w}_1\|\mathbf{w}_2)$ and a permutation $\Gamma_{\eta}$ such that conditions in~(\ref{eq:zk-equivalence}) holds.  
\smallskip 

\noindent{\bf Decomposition.} For a positive integer $B\geq 2$,  let   $\delta_B:=\lfloor \log_2 B\rfloor +1 = \lceil \log_2(B+1)\rceil$ and the sequence $B_1, \ldots, B_{\delta_B}$, where $B_j = \lfloor\frac{B + 2^{j-1}}{2^j} \rfloor$, for any $  j \in [\delta_B]$. It is then verifiable that $\sum_{j=1}^{\delta_B} B_j = B$. In addition, for any integer $a\in[0,B]$, one can decompose $a$ into a vector of the form $\mathsf{idec}_B(a)=[a^{(1)}|a^{(2)}| \cdots| a^{(\delta_B)})^\top \in \{0,1\}^{\delta_B}$, satisfying that $[B_1|B_2|\cdots|B_{\delta_B}]\cdot \mathsf{idec}_B(a)=a$. The procedure of the decomposition is presented below in a deterministic manner.
\begin{enumerate}
	\item $a': = a$.
	\item For $j=1$ to $\delta_B$ do:
	\begin{enumerate}[(i)]
		\item If $a' \geq B_j$ then $a^{(j)}: = 1$, else $a^{(j)}: = 0$;
		\item $a': = a' - B_j\cdot a^{(j)}$.
	\end{enumerate}
	\item Output $\mathsf{idec}_B(a) = [a^{(1)}| a^{(2)}| \cdots| a^{(\delta_B)}]^\top$.
\end{enumerate}
\noindent When dealing with vectors of dimension $m$ and of range $[-B,B]$, we define $\mathsf{vdec}_{m,B}$ that maps $\mathbf{a}=[a_1|a_2|\cdots|a_m]^\top$ to a vector in $\{-1,0,1\}^{m\delta_B}$ of the following form: $$\mathbf{a}'=(\sigma(a_1)\cdot \mathsf{idec}_{B}(|a_1|)\|\sigma(a_2)\cdot \mathsf{idec}_{B}(|a_2|)\|\cdots\|\sigma(a_m)\cdot \mathsf{idec}_B(|a_m|)),$$
where  $\forall j\in[m]$: $\sigma(a_j)=0$ if $a_j=0$;  $\sigma(a_j)=-1$ if $a_j<0$; $\sigma(a_j)=1$ if $a_j>0$.

Define a matrix $\mathbf{G}_{m,B} \in \mathbb{Z}^{m \times m\delta_B}$ to be \begin{eqnarray*}
	\mathbf{G}_{m,B} = \begin{bmatrix} B_1 \ldots  B_{\delta_B} &  & & & \\
		&   &  &  \ddots  &  \\
		&   &  &    & B_1 \ldots  B_{\delta_B}  \\
	\end{bmatrix}
\end{eqnarray*}
Then we have \begin{eqnarray}\label{tpbs-eq:vector-decomposition}
\mathbf{a} = \mathbf{G}_{m,B}\cdot \mathsf{vdec}_{m,B}(\mathbf{a}).
\end{eqnarray}
\noindent {\bf Our statistical ZKAoK.} The goal is to prove knowledge of $\xi$ as described in Section~\ref{tpbs-subsection:description-of-our-construction} so that equations~(\ref{tpbs-eq:boyen-signature-condition}), (\ref{tpbs-eq:gpv-encryption-condition}), (\ref{tpbs-eq:policy-checker-equation}), (\ref{tpbs-eq:gpv-encryptioon-condition-about-bound}) hold. For completeness, we recall it below and then describe our protocol. 
\begin{description}
	\item[Public input $\zeta$:]  	$\mathbf{A},\mathbf{A}_0,\ldots,\mathbf{A}_{\ell}\in\mathbb{Z}_q^{n\times m},\mathbf{u}\in\mathbb{Z}_q^{n},\mathbf{B}\in\mathbb{Z}_q^{n\times m},\mathbf{G}\in\mathbb{Z}_q^{n\times\ell_1}$, $\mathbf{c}_1\in\mathbb{Z}_q^{ m},\mathbf{c}_2\in\mathbb{Z}_q^{\ell_1},\mathbf{G}_1\in\mathbb{Z}_2^{n\times \ell_2},\mathbf{G}_2\in\mathbb{Z}_2^{n\times d}$, $\mathbf{m}\in\{0,1\}^n$.  
	\item[Secret input $\xi$:] $\mathsf{id}\|\mathbf{p}\in\{0,1\}^{\ell_1+\ell_2}$, $\mathbf{v}_{\mathsf{id}\|\mathbf{p}}\in \mathbb{Z}_q^{2m}$, $ \mathbf{s}\in\mathbb{Z}_q^{n},\mathbf{e}_1\in\mathbb{Z}_q^{m},\mathbf{e}_2\in\mathbb{Z}_q^{\ell_1}$, $\mathbf{q}\in\{0,1\}^d$.  
	\item[Prover's goal:]  
	\begin{equation} \label{tpbs-eq:prover-goal-0}
	\begin{cases}
	[\mathbf{A}|\mathbf{A}_0+\sum_{j=1}^{\ell_1}\mathsf{id}[j]\cdot\mathbf{A}_j+\sum_{j=1}^{\ell_2}\mathbf{p}[j]\cdot\mathbf{A}_{\ell_1+j}]\cdot \mathbf{v}_{\mathsf{id}\|\mathbf{p}}=\mathbf{u}\bmod q;\vspace*{2.2pt}\\
	\mathbf{c}_1=\mathbf{B}^\top \cdot \mathbf{s}+\mathbf{e}_1 \bmod q; ~~	\mathbf{c}_2=\mathbf{G}^\top \cdot \mathbf{s}+\mathbf{e}_2+\mathsf{id}\cdot \lfloor \frac{q}{2}\rfloor \bmod q;\\
	\mathbf{G}_1\cdot \mathbf{p}+\mathbf{G}_2\cdot \mathbf{q}=\mathbf{m}\bmod 2; \\
	\|\mathbf{v}_{\mathsf{id}\|\mathbf{p}}\|_{\infty} \leq \beta;~~	\|\mathbf{s}\|_{\infty}\leq B;~~	\|\mathbf{e}_1\|_{\infty}\leq B;~~	\|\mathbf{e}_2\|_{\infty}\leq B. 
	\end{cases}
	\end{equation}
\end{description}
\noindent{\sc Decomposing-Unifying.}  Note that the secret vectors in the equation of third line of~(\ref{tpbs-eq:prover-goal-0}) already have  infinity norm $1$. So we focus on secret vectors whose infinity norm bound is not $1$. Denote $\mathbf{v}_{\mathsf{id}\|\mathbf{p}}=(\mathbf{v}_1\|\mathbf{v}_2)$ such that $\mathbf{v}_1,\mathbf{v}_2\in\mathbb{Z}_q^{m}$. Let 
\begin{equation*}
\begin{cases}
\widehat{\mathbf{v}}_1=\mathsf{vdec}_{m,\beta}(\mathbf{v}_1)\in\{-1,0,1\}^{m\delta_{\beta}},~ \widehat{\mathbf{v}}_2=\mathsf{vdec}_{m,\beta}(\mathbf{v}_2)\in\{-1,0,1\}^{m\delta_{\beta}}; \\
\widehat{\mathbf{A}}=\mathbf{A}\cdot \mathbf{G}_{m,\beta},~~ \widehat{\mathbf{A}}_i=\mathbf{A}_i\cdot \mathbf{G}_{m,\beta}, ~\text{for}~i\in[0,\ell].
\end{cases}
\end{equation*} 
According to~(\ref{tpbs-eq:vector-decomposition}), the first equation in~(\ref{tpbs-eq:prover-goal-0}) is now equivalent to 
\begin{equation}\label{tpbs-eq:prover-goal-1-1}
\widehat{\mathbf{A}}\cdot \widehat{\mathbf{v}}_1+\widehat{\mathbf{A}}_0\cdot \widehat{\mathbf{v}}_2+\sum_{j=1}^{\ell_1}\widehat{\mathbf{A}}_j\cdot\mathsf{id}[j] \widehat{\mathbf{v}}_2+\sum_{j=1}^{\ell_2}\widehat{\mathbf{A}}_{\ell_1+j}\cdot\mathbf{p}[j]\widehat{\mathbf{v}}_{2}=\mathbf{u}\bmod q.\end{equation}
Similarly, we decompose $\mathbf{s},\mathbf{e}_1,\mathbf{e}_2$ and form new matrices $\widehat{ \mathbf{B}}$, $\widehat{ \mathbf{G}}$ as follows. 
\begin{equation*}
\begin{cases}
\widehat{\mathbf{s}}=\mathsf{vdec}_{n,B}(\mathbf{s})\in\{-1,0,1\}^{n\delta_{B}},~
\widehat{\mathbf{e}}_1=\mathsf{vdec}_{m,B}(\mathbf{e}_1)\in\{-1,0,1\}^{m\delta_{B}},\\ 
\widehat{\mathbf{e}}_2=\mathsf{vdec}_{\ell_1,B}(\mathbf{e}_2)\in\{-1,0,1\}^{\ell_1\delta_{B}};\\
\widehat{\mathbf{B}}=\mathbf{B}\cdot \mathbf{G}_{n,B}^\top,~~ \widehat{\mathbf{G}}=\mathbf{G}\cdot \mathbf{G}_{n,B}^\top. 
\end{cases}
\end{equation*}
Therefore, based on equation~(\ref{tpbs-eq:vector-decomposition}), the second line in~(\ref{tpbs-eq:prover-goal-0}) is now equivalent to 
\begin{equation}\label{tpbs-eq:prover-goal-1-2}
\mathbf{c}_1=\widehat{\mathbf{B}}\cdot\widehat{\mathbf{s}}+ \mathbf{G}_{m,B} \cdot\widehat{\mathbf{e}}_1\bmod q;~~ \mathbf{c}_2=\widehat{\mathbf{G}} \cdot\widehat{ \mathbf{s}}+ \mathbf{G}_{\ell_1,B} \cdot\widehat{\mathbf{e}}_2+\mathsf{id}\cdot \lfloor \frac{q}{2}\rfloor\bmod q.
\end{equation}
For simplicity, we let
\begin{eqnarray*}
	\widehat{\mathbf{w}}_{1,1}&\stackrel{\triangle}{=}&\mathsf{mix}(\mathsf{id}\|\mathbf{p},\widehat{ \mathbf{v}}_2)=(\mathsf{id}[1]\widehat{ \mathbf{v}}_2\|\cdots\|\mathsf{id}[\ell_1]\widehat{ \mathbf{v}}_2\|\mathbf{p}[1]\widehat{ \mathbf{v}}_2\|\cdots\|\mathbf{p}[\ell_2]\widehat{ \mathbf{v}}_2)\in\{-1,0,1\}^{\ell m\delta_{\beta}}\vspace*{2.2pt}\\
	\widehat{\mathbf{w}}_{1,2}&=&(\widehat{\mathbf{s}}\|\widehat{\mathbf{e}}_1\|\widehat{\mathbf{e}}_2)\in\{-1,0,1\}^{(n+m+\ell_1)\delta_{B}}
\end{eqnarray*} 
Let $\widehat{L}_1=2m\delta_{\beta}+\ell m\delta_{\beta}+(n+m+\ell_1)\delta_{B}+\ell_1$ and 
form secret vector $\widehat{ \mathbf{w}}_1=(\widehat{ \mathbf{v}}_1\|\widehat{ \mathbf{v}}_2\|\widehat{\mathbf{w}}_{1,1}\|\widehat{ \mathbf{w}}_{1,2}\|\mathsf{id})\in\{-1,0,1\}^{\widehat{L}_1} $. 

Through some basic algebra, we can from a matrix $\widehat{ \mathbf{M}}_1\in\mathbb{Z}_q^{(n+m+\ell_1)\times\widehat{L}_1}$ and vector ${\mathbf{u}}_1=(\mathbf{u}\|\mathbf{c}_1\|\mathbf{c}_2)\in\mathbb{Z}_q^{n+m+\ell_1}$ such that  equations~(\ref{tpbs-eq:prover-goal-1-1}), (\ref{tpbs-eq:prover-goal-1-2}) are equivalent to one equation of the following form 
\[\widehat{\mathbf{M}}_1\cdot\widehat{ \mathbf{w}}_1={\mathbf{u}}_1 \bmod q.\]
Similarly, define $\widehat{L}_2=\ell_2+d$, we can form $\widehat{\mathbf{M}}_2\in\mathbb{Z}_2^{n\times \widehat{L}_2}$,  ${ \mathbf{u}}_2\stackrel{\triangle}{=}\mathbf{m}\in\mathbb{Z}_2^{n}$, and $\widehat{\mathbf{w}}_2=(\mathbf{p}\|\mathbf{q})\in\{0,1\}^{\widehat{L}_2}$ such that $\mathbf{G}_1\cdot \mathbf{p}+\mathbf{G}_2\cdot\mathbf{q}=\mathbf{m}\bmod 2$ is equivalent to  
\[\widehat{\mathbf{M}}_2\cdot\widehat{ \mathbf{w}}_2={\mathbf{u}}_2 \bmod 2.\]
\remove{
	Now we have transformed~(\ref{tpbs-eq:prover-goal-0}) to \begin{equation}\label{tpbs-eq:prover-goal-2}
	\begin{cases}
	\widehat{\mathbf{M}}\cdot\widehat{ \mathbf{w}}=\widehat{\mathbf{u}} \bmod q;\\
	\mathbf{G}_1\cdot \mathbf{p}+\mathbf{G}_2\cdot\mathbf{q}=\mathbf{m}\bmod 2
	\end{cases}
	\end{equation}
}

\remove{Now we will manage to transform~(\ref{tpbs-eq:prover-goal-2}) $\widehat{\mathbf{M}}\cdot\widehat{ \mathbf{w}}=\widehat{\mathbf{u}} \bmod q$ into a equation of the form $\mathbf{M}\cdot \mathbf{w}=\widehat{\mathbf{u}} \bmod q$ such that our secret vector $\mathbf{w}$ fulfills the conditions in~(\ref{eq:zk-equivalence}). }

\noindent{\sc Extending-Permuting.} Now we will manage to transform our secret vector $\widehat{ \mathbf{w}}=(\widehat{\mathbf{w}}_1\|\widehat{\mathbf{w}}_2)$ to $\mathbf{w}=(\mathbf{w}_1\|\mathbf{w}_2)$ so that the latter fulfills the conditions in~(\ref{eq:zk-equivalence}).  
To this end, we employ the following refined extension-permutation techniques in~\cite{LNRW18TCS,LNWX18PKC}.

\noindent{\bf Technique for proving that $\mathbf{z}\in\{0,1\}^{\mathfrak{m}}$.} For any $a \in \{0,1\}$, we denote by $\overline{a}$ the bit $1-a$. The addition operation modulo $2$ is denoted by~$\oplus$.
For any $\mathbf{z}=[z_1|\cdots| z_{\mathfrak{m}}]^{\top}\in\{0,1\}^{{\mathfrak{m}}}$, define an extension of it  as
\[\mathsf{enc_2(\mathbf{z})}=[\bar{z}_1|z_1|\cdots|\bar{z}_{\mathfrak{m}}|z_{\mathfrak{m}}]^{\top}\in\{0,1\}^{2{\mathfrak{m}}}.\]
Now for any vector $\mathbf{b}=[b_1|\cdots|b_{\mathfrak{m}}]^{\top}\in\{0,1\}^{\mathfrak{m}}$, associate a permutation $\phi_{\mathbf{b}}$  that works as follows. When applying to vector $\mathbf{v}=[v_1^{0}|v_1^{1}|\cdots|v_{\mathfrak{m}}^0|v_{\mathfrak{m}}^1]^{\top}\in \mathbb{Z}^{2\mathfrak{m}}$, it permutes $\mathbf{v}$ into the following vector \[[v_1^{b_1}|v_1^{\bar{b}_1}|\cdots|v_{\mathfrak{m}}^{b_1}|v_{\mathfrak{m}}^{\bar{b}_1}]^{\top}.\] For any $\mathbf{z},\mathbf{b}\in\{0,1\}^{\mathfrak{m}}$, it is verifiable that the following equivalence holds.
\begin{eqnarray} \label{tpbs-eq:equivalence-enc-2-vector}
\mathbf{v}=\mathsf{enc}_2(\mathbf{z})\Longleftrightarrow \phi_{\mathbf{b}}(\mathbf{v})=\mathsf{enc}_2(\mathbf{z}\oplus\mathbf{b}).
\end{eqnarray}Define $\mathsf{valid}_2=\{\mathbf{v}: \exists~\mathbf{z}\in\{0,1\}^{\mathfrak{m}}~\text{s.t.}~\mathbf{v}=\mathsf{enc}_2(\mathbf{z})\}$. We have that:  if $\mathbf{v}\in\mathsf{valid}_2$ and $\mathbf{b}$ is uniformly chosen from $\{0,1\}^{\mathfrak{m}}$, then $\phi_{\mathbf{b}}(\mathbf{v})$ is uniform in $\mathsf{valid}_2$. In the Stern's framework, to prove knowledge of $\mathbf{z}\in\{0,1\}^{\mathfrak{m}}$, we first extend $\mathbf{z}$ to $\mathbf{v}\in\mathsf{valid}_2$ and then show that $\mathbf{v}$ is indeed from the set $\mathsf{valid}_2$ through the equivalence observed in~(\ref{tpbs-eq:equivalence-enc-2-vector}). In addition, vector $\mathbf{b}$ acts as a ``one-time pad'' to perfectly hide $\mathbf{v}$, and hence hide $\mathbf{z}$. Moreover, if we need to prove that $\mathbf{z}$ appears somewhere else, we can use the same $\mathbf{b}$ at those places. \smallskip 

\noindent{\bf Technique for proving that $\mathbf{z}\in\{-1,0,1\}^m$.} 
For any integer vector  $\mathbf{a}=[a_1|\cdots|a_\mathfrak{m}]^{\top}\in\mathbb{Z}^{\mathfrak{m}}$,  denote by $[\mathbf{a}]_3$ the vector $\mathbf{a}'=[a_1'|\cdots|a_\mathfrak{m}']^{\top}\in\{-1,0,1\}^{\mathfrak{m}}$, such that $a_i= a_{i}'\bmod 3$ for $i\in[\mathfrak{m}]$. For $\mathbf{z}=[z_1|\cdots|z_{\mathfrak{m}}]^{\top}\in \{-1,0,1\}^{\mathfrak{m}}$, define the $3\mathfrak{m}$-dimensional vector $\mathsf{enc}_3(\mathbf{z})$ as follows:
\[
\mathsf{enc}_3(\mathbf{z}) = \big[[z_1+1]_3| [z_1]_3, [z_1-1]_3|\cdots|[z_{\mathfrak{m}}+1]_3| [z_{\mathfrak{m}}]_3, [z_{\mathfrak{m}}-1]_3\big]^\top  \in \{-1,0,1\}^{3\mathfrak{m}}.
\]

Now, for any  $\mathbf{b}=[b_1|\cdots|b_{\mathfrak{m}}]^{\top} \in \{-1,0,1\}^{\mathfrak{m}}$, define the permutation $\varphi_{\mathbf{b}}$ that transforms vector $\mathbf{v} = [v_1^{(-1)}| v_1^{(0)}|v_1^{(1)}|\cdots|v_{\mathfrak{m}}^{(-1)}| v_{\mathfrak{m}}^{(0)}|v_{\mathfrak{m}}^{(1)}]^\top \in \mathbb{Z}^{3\mathfrak{m}}$ into vector
\[
\varphi_{\mathbf{b}}(\mathbf{v}) = [v_1^{([-e_1-1]_3)}| v_1^{([-e_1]_3)}| v_1^{([-e_1+1]_3)}|\cdots|v_{\mathfrak{m}}^{([-e_{\mathfrak{m}}-1]_3)}| v_{\mathfrak{m}}^{([-e_{\mathfrak{m}}]_3)}| v_{\mathfrak{m}}^{([-e_{\mathfrak{m}}+1]_3)}]^\top.
\]

It is  observed that, for any $\mathbf{z}, \mathbf{b} \in \{-1,0,1\}$, the following equivalence holds.
\begin{eqnarray}\label{tpbs-eq:equivalence-enc-3-vector}
\mathbf{v} = \mathsf{enc}_3(\mathbf{z}) \hspace*{6.8pt}\Longleftrightarrow\hspace*{6.8pt} \varphi_{\mathbf{b}}(\mathbf{v}) = \mathsf{enc}_3([\mathbf{z}+\mathbf{b}]_3).
\end{eqnarray}
Define $\mathsf{valid}_3=\{\mathbf{v}: \exists~\mathbf{z}\in\{-1,0,1\}^{\mathfrak{m}}~\text{s.t.}~\mathbf{v}=\mathsf{enc}_3(\mathbf{z})\}$. We have that:  if $\mathbf{v}\in\mathsf{valid}_3$ and $\mathbf{b}$ is uniformly chosen from $\{-1,0,1\}^{\mathfrak{m}}$, then $\varphi_{\mathbf{b}}(\mathbf{v})$ is uniform in $\mathsf{valid}_3$. Similarly, to prove knowledge of $\mathbf{z}\{-1,0,1\}^{\mathfrak{m}}$, we extend it to $\mathbf{v}\in\mathsf{valid}_3$ and utilize equivalence~(\ref{tpbs-eq:equivalence-enc-3-vector}) to show well-formedness of $\mathbf{v}$. Further, the uniformity of $\mathbf{b}$ perfectly hides the value of $\mathbf{v}$. 
\smallskip 

\noindent
{\bf Technique for proving that $y = t \cdot z$. } 
For any integers $t \in \{0,1\}$ and $z \in \{-1,0,1\}$, construct the $6$-dimensional integer vector $\mathsf{ext}(t,z) \in \{-1,0,1\}^6$ as follows:
\begin{eqnarray*}
	\hspace*{-8pt}
	\mathsf{ext}(t,z) = \big[\hspace*{2.8pt}
	\overline{t}\cdot [z\hspace*{-1.5pt}+\hspace*{-1.5pt}1]_3\hspace*{2.4pt}| \hspace*{2.4pt}
	t \cdot [z\hspace*{-1.5pt}+\hspace*{-1.5pt}1]_3\hspace*{2.4pt}| \hspace*{2.4pt}
	\overline{t} \cdot [z]_3\hspace*{2.4pt}| \hspace*{2.4pt}
	t \cdot [z]_3\hspace*{2.4pt}| \hspace*{2.4pt}
	\overline{t} \cdot [z\hspace*{-1.5pt}-\hspace*{-1.5pt}1]_3\hspace*{2.4pt}| \hspace*{2.4pt}
	t \cdot [z\hspace*{-1.5pt}-\hspace*{-1.5pt}1]_3\hspace*{2.8pt}
	\big]^\top.
\end{eqnarray*}
Now, for any $b \in \{0,1\}$ and $e \in \{-1,0,1\}$, define the permutation $\psi_{b,e}(\cdot)$ that transforms vector $$\mathbf{v} =
\big[v^{(0, -1)}| v^{(1,-1)}| v^{(0,0)}|v^{(1,0)}| v^{(0,1)}| v^{(1,1)}\big]^\top \in \mathbb{Z}^6$$
into vector
\[
\psi_{b,e}(\mathbf{v}) = \big[
v^{(b, [-e-1]_3)}\hspace*{0.8pt}|\hspace*{0.8pt}
v^{(\overline{b}, [-e-1]_3)}\hspace*{0.8pt}|\hspace*{0.8pt}
v^{(b, [-e]_3)}\hspace*{0.8pt}|\hspace*{0.8pt}
v^{(\overline{b}, [-e]_3)}\hspace*{0.8pt}|\hspace*{0.8pt}
v^{(b, [-e+1]_3)}\hspace*{0.8pt}|\hspace*{0.8pt}
v^{(\overline{b}, [-e+1]_3)}
\big]^\top.
\]

We then observe that the following equivalence holds for any $t, b \in \{0,1\}$ and any $z,e \in \{-1,0,1\}$.
\begin{eqnarray}\label{tpbs-eq:equivalence-ext}
\mathbf{v} = \mathsf{ext}(t,z)
\hspace*{6.8pt}\Longleftrightarrow \hspace*{6.8pt}
\psi_{b,e}(\mathbf{v}) = \mathsf{ext}(\hspace*{1.6pt}t \oplus b, \hspace*{1.6pt}[z + e]_3\hspace*{1.6pt}).
\end{eqnarray}
Define $\mathsf{valid}_{e}=\{\mathbf{v}: \exists~{t}\in\{0,1\},~{z}\in\{-1,0,1\}~\text{s.t.}~\mathbf{v} = \mathsf{ext}\big(t, \hspace*{1.6pt}z\big)\}$. We have that:  if $\mathbf{v}\in\mathsf{valid}_e$ and $b,e$ are uniformly chosen from $\{0,1\}$ and $\{-1,0,1\}$ respectively, then $\psi_{b,e}(\mathbf{v})$ is uniform in $\mathsf{valid}_e$. To prove knowledge of $y=t\cdot z$ for $t\in\{0,1\}$ and $z\in\{-1,0,1\}$, we first extend $y$ to $\mathbf{v}\in\mathsf{valid}_e$ and prove well-formedness of $\mathbf{v}$ through equivalence~(\ref{tpbs-eq:equivalence-ext}). The randomness of values $b,e$ hides values $t,z$. Also, if $t,z$ appear elsewhere, we use the same $b,e$ to show that $t,z$ simultaneously satisfy multiple conditions. 

\noindent{\bf Technique for proving that $\mathbf{y}=\mathsf{mix}(\mathbf{t}, \mathbf{z})$. } For any vectors $\mathbf{t}=[t_1|\cdots|t_{\mathfrak{m}_1}]^{\top}\in\{0,1\}^{\mathfrak{m}_1}$ and $\mathbf{z}=[z_1|\cdots|z_{\mathfrak{m}_2}]^{\top}\in\{-1,0,1\}^{\mathfrak{m}_2}$, recall that $\mathbf{y}$ is of the form $[\hspace*{0.8pt}t_1\cdot z_1|\cdots |t_1\cdot z_{\mathfrak{m}_2}|\cdots|t_{\mathfrak{m}_1}\cdot z_1|\cdots| t_{\mathfrak{m}_1}\cdot z_{\mathfrak{m}_2}\hspace*{0.8pt}]^{\top}$. Define  $6\mathfrak{m}_1\mathfrak{m}_2$-dimensional integer vector $\mathsf{Ext}\big(\mathsf{mix}(\mathbf{t}, \mathbf{z})\big)\in\{-1,0,1\}^{6\mathfrak{m}_1\mathfrak{m}_2}$ in the following way:   
\begin{eqnarray*}
	\big(   \mathsf{ext}(t_1,z_1)\|\cdots\|\mathsf{ext}(t_1,z_{\mathfrak{m}_1})\|\cdots\| \mathsf{ext}(t_{\mathfrak{m}_1},z_1)\|\cdots\|\mathsf{ext}(t_{\mathfrak{m}_1},z_{\mathfrak{m}_1}) \big). 
\end{eqnarray*}
Next, for any $\mathbf{b}\in\{0,1\}^{\mathfrak{m}_1}$ and $\mathbf{e}\in\{-1,0,1\}^{\mathfrak{m}_2}$, let the permutation $\Psi_{\mathbf{b},\mathbf{e}}(\cdot)$ act as follows. When applying to vector of form $$\mathbf{v}=\big(\mathbf{v}_{1,1}\|\cdots \|\mathbf{v}_{1,\mathfrak{m}_2}\|\cdots\|\mathbf{v}_{\mathfrak{m}_1,1}\|\cdots \|\mathbf{v}_{\mathfrak{m}_1,\mathfrak{m}_2}\big)\in\mathbb{Z}^{6\mathfrak{m}_1\mathfrak{m}_2},$$  where each block is of size $6$, it transforms $\mathbf{v}$ into $\Psi_{\mathbf{b},\mathbf{e}}(\mathbf{v})$ of form $$\big( \psi_{b_1,e_1}(\mathbf{v}_{1,1})\|\cdots\| \psi_{b_1,e_{\mathfrak{m}_2}}(\mathbf{v}_{1,\mathfrak{m}_2})\|\cdots\| \psi_{b_{\mathfrak{m}_1},e_{1}}(\mathbf{v}_{{\mathfrak{m}_1},1})\|\cdots\|\psi_{b_{\mathfrak{m}_1},e_{\mathfrak{m}_2}}(\mathbf{v}_{{\mathfrak{m}_1},\mathfrak{m}_2})
\big).$$
It then follows from~(\ref{tpbs-eq:equivalence-ext}) that the following equivalence holds for any $\mathbf{t},\mathbf{b}\in\{0,1\}^{\mathfrak{m}_1}$ and any $\mathbf{z},\mathbf{e}\in\{-1,0,1\}^{\mathfrak{m}_2}$: 
\begin{eqnarray}\label{tpbs-eq:equivalence-ext-vector}
\mathbf{v} = \mathsf{Ext}\big(\hspace*{1.6pt}\mathsf{mix}(\mathbf{t},\hspace*{1.6pt}\mathbf{z})\hspace*{1.6pt}\big)
\hspace*{6.8pt}\Longleftrightarrow \hspace*{6.8pt}
\Psi_{\mathbf{b},\mathbf{e}}(\mathbf{v}) = \mathsf{Ext}\big(\hspace*{1.6pt}\mathsf{mix}(\mathbf{t} \oplus \mathbf{b}, \hspace*{1.6pt}[\mathbf{z} + \mathbf{e}]_3\hspace*{1.6pt})\big).
\end{eqnarray}
Being prepared with the above extension-permutation techniques, we are ready to extend our secret vector $\widehat{ \mathbf{w}}$ to vector $\mathbf{w}$ in a set $\mathsf{VALID}$ and define a suitable permutation $\Gamma_{\eta}$ such that we obtain equivalence similar to~(\ref{tpbs-eq:equivalence-enc-2-vector}), (\ref{tpbs-eq:equivalence-enc-3-vector}) and (\ref{tpbs-eq:equivalence-ext-vector}). When a secret input appears more than once in our extended vector $\mathbf{w}$, it is crucial that we use the same randomness to define our permutation. 

Let $L_{1,1}=3m\delta_{\beta}$, $L_{1,2}=L_{1,1}$, $L_{1,3}=6\ell m \delta_{\beta}$,  $L_{1,4}=3(n+m+\ell_1)\delta_B$, $L_{1,5}=2\ell_1$, and $L_1=2L_{1,1}+L_{1,3}+L_{1,4}+L_{1,5}$, $L_2=2(\ell_2+d)$. Now we define the extension of
$\widehat{ \mathbf{w}}_1=(\widehat{ \mathbf{v}}_1\|\widehat{ \mathbf{v}}_2\|\widehat{\mathbf{w}}_{1,1}\|\widehat{ \mathbf{w}}_{1,2}\|\mathsf{id})\in\{-1,0,1\}^{\widetilde{L}_1} $ and $\widehat{ \mathbf{w}}_2=(\mathbf{p}\|\mathbf{q})\in \{0,1\}^{\widehat{ L}_2}$ as follows.
\begin{itemize}
	\item $\mathbf{w}_{1,1}=\mathsf{enc}_3(\widehat{ \mathbf{v}}_1)\in\{-1,0,1\}^{L_{1,1}}$ and $\mathbf{w}_{1,2}=\mathsf{enc}_3(\widehat{ \mathbf{v}}_2)\in\{-1,0,1\}^{L_{1,2}}$.\smallskip
	\item $\mathbf{w}_{1,3}=\mathsf{Ext}(\widehat{ \mathbf{w}}_{1,1})\in\{-1,0,1\}^{L_{1,3}}$, recall that $\widehat{ \mathbf{w}}_{1,1}=\mathsf{mix}(\mathsf{id}\|\mathbf{p},\widehat{\mathbf{v}}_2)$. \smallskip
	\item $\mathbf{w}_{1,4}=\mathsf{enc}_3(\widehat{ \mathbf{w}}_{1,2})\in\{-1,0,1\}^{L_{1,4}}$ and $\mathbf{w}_{1,5}=\mathsf{enc}_2(\mathsf{id})\in\{0,1\}^{L_{1,5}}$. \smallskip 
	\item $\mathbf{w}_{2,1}=\mathsf{enc}_2(\mathbf{p})\in\{0,1\}^{2\ell_2}$ and $\mathbf{w}_{2,2}=\mathsf{enc}_2(\mathbf{q})\in\{0,1\}^{2d}$. 
\end{itemize}
Form  secret vector  $\mathbf{w}_1=(\mathbf{w}_{1,1}\|\mathbf{w}_{1,2}\|\mathbf{w}_{1,3}\|\mathbf{w}_{1,4}\|\mathbf{w}_{1,5})\in\{-1,0,1\}^{L_1}$. In the meanwhile, we add suitable zero-columns to matrix $\widehat{\mathbf{M}}_1$ to obtain matrix $\mathbf{M}_1\in \mathbb{Z}_q^{(n+m+\ell_1)\times L_1}$ so that $\mathbf{M}_1\cdot \mathbf{w}_1={\mathbf{u}}_1\bmod q$. Similarly, form $\mathbf{w}_2=(\mathbf{w}_{2,1}\|\mathbf{w}_{2,2})\in\{0,1\}^{L_2}$ and a suitable matrix $\mathbf{M}_2\in\mathbb{Z}_2^{n\times L_2}$ such that we have $\mathbf{M}_2\cdot \mathbf{w}_2={\mathbf{u}}_2\bmod 2$. Let $L=L_1+L_2$ and $\mathbf{w}=(\mathbf{w}_1\|\mathbf{w}_2)\in\{-1,0,1\}^L$. 

Now we specify the set $\mathsf{VALID}$ that consists of the secret vector $\mathbf{w}$, the set $\mathcal{S}$ and the associated permutation set  $\{\Gamma_{\eta}: \eta\in \mathcal{S}\}$ so that the requirements in~(\ref{eq:zk-equivalence}) are satisfied.

Let $\mathsf{VALID}$ be the set that contains vectors of form $$\mathbf{z}=(\mathbf{z}_{1,1}\|\mathbf{z}_{1,2}\|\mathbf{z}_{1,3}\|\mathbf{z}_{1,4}\|\mathbf{z}_{1,5}\|\mathbf{z}_{2,1}\|\mathbf{z}_{2,2})\in\{-1,0,1\}^{L}$$ 
satisfying:
\begin{itemize}
	\item There exists $\mathbf{y}_{v,1},\mathbf{y}_{v,2}\in\{-1,0,1\}^{m\delta_{\beta}}$ such that $\mathbf{z}_{1,i}=\mathsf{enc}_3(\mathbf{y}_{v,i})$ for $i\in\{1,2\}$.  \smallskip 
	\item There exists $\mathbf{y}_{id}\in\{0,1\}^{\ell_1},\mathbf{y}_p\in\{0,1\}^{\ell_2}$  such that  $\mathbf{z}_{1,3}=\mathsf{Ext}\big(\hspace*{2pt}\mathsf{mix}(\hspace*{2pt}\mathbf{y}_{id}\hspace*{1pt}\|\hspace*{1pt}\mathbf{y}_{p},\hspace*{4pt}\mathbf{y}_{v,2}\hspace*{2pt})\hspace*{2pt}\big)$, $\mathbf{z}_{1,5}=\mathsf{enc}_2(\mathbf{y}_{id})$, $\mathbf{z}_{2,1}=\mathsf{enc}_2(\mathbf{y}_{p})$.  \smallskip 
	\item There exists $\mathbf{y}_{1,4}\in\{-1,0,1\}^{(n+m+\ell_1)\delta_{B}}$ and $\mathbf{y}_q\in\{0,1\}^d$ such that $\mathbf{z}_{1,4}=\mathsf{enc}_3(\mathbf{y}_{1,4})$ and $\mathbf{z}_{2,2}=\mathsf{enc}_2(\mathbf{y}_q)$. 
\end{itemize} It is clear that our secret vector $\mathbf{w}$ belongs to the set $\mathsf{VALID}$. 

Let $\mathcal{S}=(\{-1,0,1\}^{m\delta_{\beta}})^2\times  \{-1,0,1\}^{(n+m+\ell_1)\delta_{B}}\times\{0,1\}^{\ell_1}\times \{0,1\}^{\ell_2}\times \{0,1\}^d$. For any $\eta=(\mathbf{b}_{v,1},\mathbf{b}_{v,2},\mathbf{b}_{1,4},\mathbf{b}_{id},\mathbf{b}_{p},\mathbf{b}_{q})\in\mathcal{S}$, let permutation $\Gamma_{\eta}:\mathbb{Z}^{L}\rightarrow\mathbb{Z}^L$ act as follows. When applying to vector of form $$\mathbf{z}=(\mathbf{z}_{1,1}\hspace*{2pt}\|\hspace*{2pt}\mathbf{z}_{1,2}\hspace*{2pt}\|\hspace*{2pt}\mathbf{z}_{1,3}\hspace*{2pt}\|\hspace*{2pt}\mathbf{z}_{1,4}\hspace*{2pt}\|\hspace*{2pt}\mathbf{z}_{1,5}\hspace*{2pt}\|\hspace*{2pt}\mathbf{z}_{2,1}\hspace*{2pt}\|\hspace*{2pt}\mathbf{z}_{2,2})\in\mathbb{Z}^{L}$$ such that the size of the blocks are $3m\delta_{\beta}$, $3m\delta_{\beta}$,  $6\ell m\delta_{\beta}$, $3(n+m+\ell_1)\delta_B$, $2\ell_1$, $2\ell_2$, $2d$ respectively, it transforms $\mathbf{z}$ into vector $\Gamma_{\eta}(\mathbf{z})$ of the following form
\begin{eqnarray*}
	\Gamma_{\eta}(\mathbf{z})=\big( \hspace*{2pt}
	\varphi_{\mathbf{b}_{v,1}} (\mathbf{z}_{1,1}) &\|&
	\varphi_{\mathbf{b}_{v,2}} (\mathbf{z}_{1,2}) \hspace*{6pt}\| \hspace*{6pt}
	\Psi_{\mathbf{b}_{id}\|\mathbf{b}_{p},\hspace*{2pt}\mathbf{b}_{v,2}} (\mathbf{z}_{1,3})\\
	&\|&\hspace*{6pt} \varphi_{\mathbf{b}_{1,4}} (\mathbf{z}_{1,4})\hspace*{6pt} \|\hspace*{6pt}
	\phi_{\mathbf{b}_{id}} (\mathbf{z}_{1,5}) \hspace*{6pt}\|\hspace*{6pt}
	\phi_{\mathbf{b}_{p}} (\mathbf{z}_{2,1}) \hspace*{6pt}\|\hspace*{6pt}
	\phi_{\mathbf{b}_{q}} (\mathbf{z}_{2,2}) \hspace*{2pt}
	\big).
\end{eqnarray*}
Observing the equivalences in~(\ref{tpbs-eq:equivalence-enc-2-vector}), (\ref{tpbs-eq:equivalence-enc-3-vector}), and (\ref{tpbs-eq:equivalence-ext-vector}), it is verifiable  that $\mathsf{VALID}$, $\mathcal{S}$, $\Gamma_{\eta}$ fulfill the conditions in~(\ref{eq:zk-equivalence}). Therefore, we have successfully reduced the considered statement into an instance of the abstracted relation described in Section~\ref{tpbs-subsection:Stern-like-protocols}. At this point, we can   run the interactive  protocol described in Figure~\ref{Figure:Interactive-Protocol} and obtain the desired statistical ZKAoK protocol, with perfect correctness, soundness error $2/3$, and communication cost $\mathcal{O}(L_1\log q+L_2)$, which is of order $\mathcal{O}(\ell\lambda\log^3 \lambda)$.

\section{Conclusions and Open Questions}\label{tpbs-section:tpbs-conclusions}
In this work, we enhanced the study of PBS, by introducing and formalizing the notion of TPBS that equips PBS with a user tracing feature, providing a generic and modular construction of TPBS that satisfies the stringent security requirements we suggest, and instantiating a TPBS scheme based on concrete, quantum-safe assumptions from lattices. We believe our results will inspire further improvements for PBS - an appealing privacy-enhancing cryptographic primitive that deserves more attention from the community. 

Naturally, our work raises a number of interesting open questions. In terms of functionality, it would be alluring to provide PBS/TPBS with useful features such as support of dynamically growing groups~\cite{BSZ05CT-RSA}, efficient revocation of signing keys~\cite{BS04CCS}, as well as fine-grained tracing mechanisms~\cite{KTY04EC,SEHK0O12Pairing,KM15PoPETs}. It would also be great to design concrete lattice-based instantiations of PBS/TPBS that involve more expressive policy languages (e.g., those related to membership/non-membership of sets and ranges, branching programs and circuits). In terms of security, it would be desirable to obtain post-quantum schemes in the standard model or with security proofs in the QROM~\cite{BDFLSZ11AC}. Finally, in terms of efficiency, it would be interesting to develop lattice-based PBS/TPBS schemes with shorter, practically relevant signature sizes. To this end, a promising starting point would be to adapt into the context of PBS the recently proposed techniques for efficient lattice-based zero-knowledge proofs by Bootle et al.~\cite{BLS19C} and Yang et al.~\cite{YAZXYW19C}.

\section*{Acknowledgements}
Funding: This project is in part supported by Alberta Innovates in the Province of Alberta, Canada. Khoa Nguyen is supported by the NTU -- Presidential Postdoctoral Fellowship 2018. Huaxiong Wang is supported by the National Research Foundation, Prime Minister's Office, Singapore under its Strategic Capability Research Centres Funding Initiative and  Singapore Ministry of Education under Research Grant RG12/19. 

\bibliographystyle{abbrvnat} 

\bibliography{TPBS-bib-infor-science.bib}
\appendix

\section{Deferred Algorithms for Our Proof System}\label{tpbs-section:deferred-simprove+extr-algorithms}
In this section, we describe algorithms $\mathsf{SimProve}$ and $\mathsf{Extr}_{\mathsf{nizk}}$ for our SE-NIZK proof system.  Note that in Section~\ref{tpbs-subsection:main-zk-protocol} we have successfully transformed the considered statement when generating signatures to an instance of the abstract relation in Section~\ref{tpbs-subsection:Stern-like-protocols}, it thus suffices to consider the following abstract relation 
\begin{eqnarray*}
	\rho_{\mathrm{abstract}} = \big\{&(\mathbf{M}_i, \mathbf{u}_i)_{i\in\{1,2\}}, (\mathbf{w}_1\|\mathbf{w}_2) \in (\mathbb{Z}_{q_i}^{K_i \times L_i} \times \mathbb{Z}_{q_i}^{K_i})_{i\in\{1,2\}} \times \mathsf{VALID}: \\
	&\mathbf{M}_i\cdot \mathbf{w}_i = \mathbf{u}_ i\bmod q_i~\text{for~}i\in\{1,2\}\big\},
\end{eqnarray*} such that the conditions in~(\ref{eq:zk-equivalence}) hold. 

\noindent{\bf The $\mathsf{SimProve}$ algorithm.} Giving input $(\mathbf{M}_i,\mathbf{u}_i)_{i\in\{1,2\}}$, the algorithm proceeds as follows. It first chooses challenge $\mathrm{CH}=(\mathrm{Ch}_1,\ldots,\mathrm{Ch}_{\kappa})$ uniformly at random from the set $\{1,2,3\}^{\kappa}$. Now it computes $(\mathrm{CMT}_1,\ldots,\mathrm{CMT}_{\kappa})$, $(\mathrm{RSP}_1,\ldots,\mathrm{RSP}_{\kappa})$ in the following way. For each $j\in[\kappa]$, 
\begin{itemize}
	\item if $\mathrm{Ch}_j=1$, it samples $\mathbf{w}_j'=(\mathbf{w}_{j,1}'\|\mathbf{w}_{j,2}')\xleftarrow{\$}\mathsf{VALID}$, $\mathbf{r}_{j,w_1}\xleftarrow{\$}\mathbb{Z}_{q_1}^{L_1}$ $\mathcal{P}$, $\mathbf{r}_{j,w_2} \xleftarrow{\$} \mathbb{Z}_{q_2}^{L_2}$, $\eta_j \xleftarrow{\$} \mathcal{S}$ and  $\rho_{j,1}, \rho_{j,2}, \rho_{j,3}\xleftarrow{\$}\{0,1\}^m$ for $\mathsf{COM}$. Let $\mathbf{r}_{j,w}=(\mathbf{r}_{j,w_1}\|\mathbf{r}_{j,w_2})$ and $\mathbf{z}_j'=\mathbf{w}_j'\boxplus\mathbf{r}_{j,w}$. 
	Then it computes $\mathrm{CMT}_j= \big(C_{j,1}', C_{j,2}', C_{j,3}'\big)$   as
	\begin{gather*}
	C_{j,1}' =  \mathsf{COM}(\eta_j, \{\mathbf{M}_i\cdot \mathbf{r}_{j,w_i} \bmod q_i\}_{i\in\{1,2\}}; \rho_{j,1}),\\
	C_{j,2}' =  \mathsf{COM}(\Gamma_{\eta_j}(\mathbf{r}_{j,w}); \rho_{j,2}), \hspace*{5pt}
	C_{j,3}' =  \mathsf{COM}(\Gamma_{\eta_j}(\mathbf{z}_j'); \rho_{j,3}).
	\end{gather*} and $\mathrm{RSP}_j=(\mathbf{t}_{j,w},\mathbf{t}_{j,r}, \rho_{j,2},\rho_{j,3})$, where $\mathbf{t}_{j,w}=\Gamma_{\eta_j}(\mathbf{w}_j')$ and $\mathbf{t}_{j,r}=\Gamma_{\eta_j}(\mathbf{r}_{j,w})$. 
	
	\item if $\mathrm{Ch}_j=2$, it samples 
	$\mathbf{w}_j'=(\mathbf{w}_{j,1}'\|\mathbf{w}_{j,2}')\xleftarrow{\$}\mathsf{VALID}$, $\mathbf{r}_{j,w_1}\xleftarrow{\$}\mathbb{Z}_{q_1}^{L_1}$ $\mathcal{P}$, $\mathbf{r}_{j,w_2} \xleftarrow{\$} \mathbb{Z}_{q_2}^{L_2}$, $\eta_j \xleftarrow{\$} \mathcal{S}$ and  $\rho_{j,1}, \rho_{j,2}, \rho_{j,3}\xleftarrow{\$}\{0,1\}^m$ for $\mathsf{COM}$. Let $\mathbf{r}_{j,w}=(\mathbf{r}_{j,w_1}\|\mathbf{r}_{j,w_2})$ and $\mathbf{z}_j'=\mathbf{w}_j'\boxplus\mathbf{r}_{j,w}$. 
	Then it computes $\mathrm{CMT}_j= \big(C_{j,1}', C_{j,2}', C_{j,3}'\big)$   as
	\begin{gather*}
	C_{j,1}' =  \mathsf{COM}(\eta_j, \{\mathbf{M}_i\cdot (\mathbf{w}_{j,i}'+\mathbf{r}_{j,w_i})-\mathbf{u}_i \bmod q_i\}_{i\in\{1,2\}}; \rho_{j,1}), \\
	C_{j,2}' =  \mathsf{COM}(\Gamma_{\eta_j}(\mathbf{r}_{j,w}); \rho_{j,2}), \hspace*{5pt}
	C_{j,3}' =  \mathsf{COM}(\Gamma_{\eta_j}(\mathbf{z}_j'); \rho_{j,3}).
	\end{gather*} and $\mathrm{RSP}_j=(\eta_{j,2},\mathbf{z}_{j,2}, \rho_{j,1},\rho_{j,3})$, where $\eta_{j,2}=\eta_j$ and $\mathbf{z}_{j,2}=\mathbf{z}_{j}'$.  
	\item if $\mathrm{Ch}_j=3$, it first computes $\mathbf{w}_{j,i}'$ such that $\mathbf{M}_i\cdot \mathbf{w}_{j,i}'=\mathbf{u}_i\bmod q_i$ for each $i\in\{1,2\}$.  Then it samples $\mathbf{r}_{j,w_1}\xleftarrow{\$}\mathbb{Z}_{q_1}^{L_1}$ $\mathcal{P}$, $\mathbf{r}_{j,w_2} \xleftarrow{\$} \mathbb{Z}_{q_2}^{L_2}$, $\eta_j \xleftarrow{\$} \mathcal{S}$ and  $\rho_{j,1}, \rho_{j,2}, \rho_{j,3}\xleftarrow{\$}\{0,1\}^m$ for $\mathsf{COM}$. Let $\mathbf{w}_j'=(\mathbf{w}_{j,1}'\|\mathbf{w}_{j,2}')$, $\mathbf{r}_{j,w}=(\mathbf{r}_{j,w_1}\|\mathbf{r}_{j,w_2})$ and $\mathbf{z}_j'=\mathbf{w}_j'\boxplus\mathbf{r}_{j,w}$.   Compute $\mathrm{CMT}_j= \big(C_{j,1}', C_{j,2}', C_{j,3}'\big)$   as
	\begin{gather*}
	C_{j,1}' =  \mathsf{COM}(\eta_j, \{\mathbf{M}_i\cdot \mathbf{r}_{j,w_i} \bmod q_i\}_{i\in\{1,2\}}; \rho_{j,1}), \\
	C_{j,2}' =  \mathsf{COM}(\Gamma_{\eta_j}(\mathbf{r}_{j,w}); \rho_{j,2}),  \hspace*{5pt}
	C_{j,3}' =  \mathsf{COM}(\Gamma_{\eta_j}(\mathbf{z}_j'); \rho_{j,3}).
	\end{gather*} and $\mathrm{RSP}_j=(\eta_{j,3},\mathbf{z}_{j,3}, \rho_{j,1},\rho_{j,2})$, where $\eta_{j,3}=\eta_{j}$ and $\mathbf{z}_{j,3}=\mathbf{r}_{j,w}$. 
\end{itemize}
Finally, it outputs the proof $\pi=\big((\mathrm{CMT}_i)_{i\in[\kappa]}, \mathrm{CH},(\mathrm{RSP}_i)_{i\in[\kappa]}\big)$ and program the random oracle as $\mathcal{H}_2((\mathbf{M}_i,\mathbf{u}_i)_{i\in\{1,2\}},(\mathrm{CMT}_i)_{i\in[\kappa]})=\mathrm{CH}$.  

\noindent{\bf The $\mathsf{Extr}_{\mathsf{nizk}}$ algorithm.} Let  $\mathcal{A}$ be an algorithm that outputs a tuple $\big((\mathbf{M}_{i}^{*},\mathbf{u}_i^*)_{i\in\{1,2\}},\pi^* \big)$  after querying simulated proofs for the input $\{(\mathbf{M}_{1,i},\mathbf{u}_{1,i})_{i\in\{1,2\}},\ldots, (\mathbf{M}_{Q_s,i},\mathbf{u}_{Q_s,i})_{i\in\{1,2\}}\}$, where $Q_s$ is the total number of statements queried by $\mathcal{A}$. The $\mathsf{Extr}_{\mathsf{nizk}}$ algorithm outputs $\bot$ if $\pi^*$ is not valid or the outputted tuple is one of the queried tuples.  Otherwise, it proceeds as follows. 

Parse $\pi^*=\big((\mathrm{CMT}_i^*)_{i\in[\kappa]}, (\mathrm{Ch}_i^*)_{i\in[\kappa]},(\mathrm{RSP}_i^*)_{i\in[\kappa]}\big)$. We claim that $\mathcal{A}$ queried to the random oracle $\mathcal{H}_2$ the tuple $h^*\stackrel{\triangle}{=}((\mathbf{M}_i^*,\mathbf{u}_i^*)_{i\in\{1,2\}},(\mathrm{CMT}_i^*)_{i\in[\kappa]})$. Otherwise, guessing correctly this value occurs with probability $3^{-\kappa}$, which is negligible since $\kappa=\omega(\log \lambda)$. Let  $Q_{H_2}$ be  total number of queries to $\mathcal{H}_2$ and $h^*$ be the $t^*$th hash query. Now the $\mathsf{Extr}_{\mathsf{nizk}}$ algorithm replays $\mathcal{A}$ for polynomial-number times with the same random tape and input as in the original run. For each replay, $\mathsf{Extr}_{\mathsf{nizk}}$ behaves exactly the same as the original execution except the difference in the hash replies from the $t^*$th  query onwards. More precisely, it replies $\mathrm{CH}_1,\ldots, \mathrm{CH}_{t^*-1}$ as in the original run while replying fresh random values $\mathrm{CH}_{t^*}',\ldots,\mathrm{CH}_{Q_{H_2}}'$. As such, the $t^*$th hash query to $\mathcal{H}_2$ for all new runs is $h^*$. By the improved forking lemma~\cite{BPVY00PKC}, with probability at least $1/2$ the $\mathsf{Extr}_{\mathsf{nizk}}$ algorithm can obtain a three-fork involving the same tuple $h^*$ with pairwise distinct values $\mathrm{CH}_{t^*}^{(1)}, \mathrm{CH}_{t^*}^{(2)},\mathrm{CH}_{t^*}^{(3)}$. It is verifiable that with probability $1-(\frac{7}{9})^{\kappa}$, there exists $j\in[\kappa]$ such that the $j$th entry of $\mathrm{CH}_{t^*}^{(1)}, \mathrm{CH}_{t^*}^{(2)},\mathrm{CH}_{t^*}^{(3)}$, denoted as $\mathrm{Ch}_{t^*,j}^{(1)}, \mathrm{Ch}_{t^*,j}^{(2)},\mathrm{Ch}_{t^*,j}^{(3)}$ forms the set $\{1,2,3\}$. Without loss of generality, assume $\mathrm{Ch}_{t^*,j}^{(i)}=i$ for $i\in\{1,2,3\}$. We now show how to extract a witness $\mathbf{w}^*=(\mathbf{w}_1^*\|\mathbf{w}_2^*)$ from the corresponding (valid) responses $\mathrm{RSP}_{t^*,j}^{(1)},\mathrm{RSP}_{t^*,j}^{(2)},\mathrm{RSP}_{t^*,j}^{(3)}$ with respect to $\mathrm{CMT}_j^*$.  Suppose $\mathrm{RSP}_{t^*,j}^{(1)}=(\hspace*{-1pt}\mathbf{t}_{j,w},\mathbf{t}_{j,r}, \rho_{j,2}^{(1)},\rho_{j,3}^{(1)})$, 
$\mathrm{RSP}_{t^*,j}^{(2)}\hspace*{-1pt}=\hspace*{-1pt}(\hspace*{-1pt}\eta_{j,2},\mathbf{z}_{j,2}, \rho_{j,1}^{(2)},\rho_{j,3}^{(2)})$, 
$\mathrm{RSP}_{t^*,j}^{(3)}\hspace*{-1pt}=\hspace*{-1pt}(\hspace*{-1pt}\eta_{j,3},\mathbf{z}_{j,3}, \rho_{j,1}^{(3)},\rho_{j,2}^{(3)})$,  $\mathrm{CMT}_j^*=(C_{j,1}^*,C_{j,2}^*,C_{j,3}^*)$. Let $\mathbf{z}_{j,2}=(\mathbf{z}_{j,2,1}\|\mathbf{z}_{j,2,2})$ and $\mathbf{z}_{j,3}=(\mathbf{z}_{j,3,1}\|\mathbf{z}_{j,3,2})$, where $\mathbf{z}_{j,2,i},\mathbf{z}_{j,3,i}\in\mathbb{Z}_{q_i}^{L_i}$ for $i\in\{1,2\}$. 
From the validity of the responses, we have \begin{eqnarray*}
	\begin{cases}
		\mathbf{t}_{j,w}\in \mathsf{VALID}; \\
		C_{j,1}^*=\mathsf{COM}(\eta_{j,2}, \{\mathbf{M}_i^*\cdot \mathbf{z}_{j,2,i}-\mathbf{u}_i^* \bmod q_i\}_{i\in\{1,2\}}, \rho_{j,1}^{(2)});\\ \vspace*{3pt}
		C_{j,1}^*=\mathsf{COM}(\eta_{j,3},\{\mathbf{M}_i^*\cdot \mathbf{z}_{j,3,i} \bmod q_i\}_{i\in\{1,2\}}, \rho_{j,1}^{(3)});\\\vspace*{3pt}
		C_{j,2}^*=\mathsf{COM}(\mathbf{t}_{j,r},\rho_{j,2}^{(1)}) =\mathsf{COM}(\Gamma_{\eta_{j,3}}(\mathbf{z}_{j,3}),\rho_{j,2}^{(3)});\\\vspace*{3pt}
		C_{j,3}^*=\mathsf{COM}(\mathbf{t}_{j,w}\boxplus\mathbf{t}_{j,r},\rho_{j,3}^{(1)})=\mathsf{COM}(\Gamma_{\eta_{j,2}}(\mathbf{z}_{j,2}),\rho_{j,3}^{(2)}). 
	\end{cases}
\end{eqnarray*}
Due to the binding property of the commitment scheme $\mathsf{COM}$, we have 
\begin{eqnarray*}
	& \mathbf{t}_{j,w}\in \mathsf{VALID}; \hspace*{12pt}\eta_{j,2}=\eta_{j,3};  \\
	&\mathbf{t}_{j,r}=\Gamma_{\eta_{j,3}}(\mathbf{z}_{j,3}); \hspace*{12pt}
	\mathbf{t}_{j,w}\boxplus\mathbf{t}_{j,r}=\Gamma_{\eta_{j,2}}(\mathbf{z}_{j,2});\\
	&	\mathbf{M}_i^*\cdot \mathbf{z}_{j,2,i}-\mathbf{u}_i^* =
	\mathbf{M}_i^*\cdot \mathbf{z}_{j,3,i} \mod q_i~\text{for}~i\in\{1,2\}.
\end{eqnarray*}
Let $\mathbf{w}^*=(\mathbf{w}_1^*\|\mathbf{w}_2^*) =\Gamma_{\eta_{j,2}}^{-1}(\mathbf{t}_{j,w})$, where $\mathbf{w}_i^*\in\mathbb{Z}^{L_i}$ for $i\in\{1,2\}$. Then we have $\mathbf{w}^*\in\mathsf{VALID}$. Note that we now have  $\Gamma_{\eta_{j,2}}(\mathbf{w}^*)\boxplus\Gamma_{\eta_{j,2}}(\mathbf{z}_{j,3})=\Gamma_{\eta_{j,2}}(\mathbf{z}_{j,2})$, implying $\mathbf{w}^*\boxplus \mathbf{z}_{j,3}=\mathbf{z}_{j,2}$. In other words, we have $\mathbf{w}_i^*+\mathbf{z}_{j,3,i}=\mathbf{z}_{j,2,i} \bmod q_i$ for $i\in\{1,2\}$. Next, for $i\in\{1,2\}$, the following equation holds: 
\begin{eqnarray*}
	\mathbf{M}_i^*\cdot \mathbf{w}_i^*= \mathbf{M}_i^*\cdot (\mathbf{z}_{j,2,i}-\mathbf{z}_{j,3,i})=\mathbf{u}_i^* \bmod q_i. 
\end{eqnarray*}
Finally, the $\mathsf{Extr}_{\mathsf{nizk}}$ algorithm outputs $\mathbf{w}^*$.

\section{Deferred Building Blocks}\label{tpbs-appexdix:deferred-building-blocks}
\subsection{Building Blocks for the Generic Construction} \label{tpbs-subsection:primitive-for-gener-con}
\noindent{\textbf{Digital Signature Schemes.}}  A digital signature scheme SIG~\cite{GMR88JC} consists of the following three polynomial-time algorithms: $\mathsf{KeyGen}$, $\mathsf{Sign}$, $\mathsf{Verify}$. \begin{description}
	\item[$\mathsf{KeyGen}$] This algorithm takes as input $1^{\lambda}$, and outputs a public-secret key pair $(\mathsf{vk},\mathsf{sk})$.
	\item[$\mathsf{Sign}$] This algorithm takes as input $\mathsf{sk}$ and a message $m$, and outputs a signature~$\sigma$. 
	\item[$\mathsf{Verify}$] This algorithm takes as input $\mathsf{vk}$, message-signature pair  $(m,\sigma)$, and outputs a bit. 
\end{description}
\noindent{\textbf{Correctness.}} A SIG scheme is said to be correct if for all $\lambda$, all $(\mathsf{vk},\mathsf{sk})\leftarrow \mathsf{KeyGen}(1^{\lambda})$ and all $m$, we have $\mathsf{Verify}(\mathsf{vk},m,\mathsf{Sign}(\mathsf{sk},m))=1$.

\noindent{\textbf{EU-CMA.}} Existential unforgeability under chosen message attacks (EU-CMA) of a SIG scheme  is formalized using the following experiment $\mathbf{Exp}_{\mathrm{SIG},\mathcal{A}}^{\mathrm{EU-CMA}}(1^{\lambda})$.    First, a challenger $\mathcal{C}$ runs $\mathsf{KeyGen}$ to obtain $(\mathsf{vk},\mathsf{sk})$. Then $\mathcal{C}$ invokes an adversary $\mathcal{A}$ by sending $\mathsf{vk}$. $\mathcal{A}$ is also given a signing oracle $\mathsf{Sign}(\mathsf{sk},\cdot)$, where it could ask for signatures of any number of messages. Finally, $\mathcal{A}$ outputs an attempted forgery $(m^*,\sigma^*)$. The experiment returns~$1$ if $m^*$ was not queried to $\mathsf{Sign}(\mathsf{sk},\cdot)$ by $\mathcal{A}$ and that $\mathsf{Verify}(\mathsf{vk},m^*,\sigma^*)=1$. Define the advantage of $\mathcal{A}$ as $\mathbf{Adv}_{\mathrm{SIG},\mathcal{A}}^{\mathrm{EU-CMA}}(1^{\lambda})=\mathrm{Pr}[\mathbf{Exp}_{\mathrm{SIG},\mathcal{A}}^{\mathrm{EU-CMA}}(1^{\lambda})=1]$. 
A SIG  scheme is said to be existentially unforgeable under chosen message attacks if for any $\mathrm{PPT}$ $\mathcal{A}$,  $\mathbf{Adv}_{\mathrm{SIG},\mathcal{A}}^{\mathrm{EU-CMA}}(1^{\lambda})$ is negligible in $\lambda$.  
\smallskip 

\noindent {\textbf{Public Key Encryption Schemes. }} A public key encryption scheme PKE~\cite{RS91C} consists of three polynomial-time algorithms: $\mathsf{KeyGen}$, $\mathsf{Enc}$, $\mathsf{Dec}$. 
\begin{description}
	\item[$\mathsf{KeyGen}$] On input security parameter $\lambda$, this algorithm outputs a public-secret key pair $(\mathsf{ek},\mathsf{dk})$. 
	\item[$\mathsf{Enc}$] On input encryption key $\mathsf{ek}$ and a message $m$, this algorithm outputs a ciphertext $\mathsf{ct}$. 
	\item[$\mathsf{Dec}$] On input decryption key $\mathsf{dk}$ and $\mathsf{ct}$, this algorithm outputs $m'$ or $\bot$ indicating decryption failure. 
\end{description}
\noindent {\textbf{Correctness. }} A PKE scheme is said to be correct if for all $\lambda$, $(\mathsf{ek},\mathsf{dk})\leftarrow \mathsf{KeyGen}(1^{\lambda})$, all $m$, we have $\mathsf{Dec}(\mathsf{dk},\mathsf{Enc}(\mathsf{ek},m))=m$. 

\noindent \textbf{IND-CCA}. Indistinguishability under chosen ciphertext attacks (IND-CCA) of a PKE scheme is modeled using the following experiment $\mathbf{Exp}_{\mathrm{PKE},\mathcal{A}}^{\mathrm{IND-CCA}}(1^{\lambda})$. 
To begin with, a challenger $\mathcal{C}$ chooses a random bit $b\in\{0,1\}$ and runs the key generation algorithm to obtain $(\mathsf{ek},\mathsf{dk})$. An adversary $\mathcal{A}$ is then invoked by given $\mathsf{ek}$ and a decryption oracle $\mathsf{Dec}(\mathsf{dk},\cdot)$, from which it could query decryption of any number of ciphertexts.  When $\mathcal{A}$ outputs two messages  $m_0,m_1$ to $\mathcal{C}$, the latter returns a ciphertext $\mathsf{ct}$ that is an encryption of $m_b$. After receiving the challenged ciphertext $\mathsf{ct}$,  $\mathcal{A}$ still has access to the decryption oracle but is not allowed to submit $\mathsf{ct}$ as a query. Finally, $\mathcal{A}$ outputs a bit $b'$ guessing  $\mathsf{ct}$ is an encryption of $m_{b'}$. The experiment returns~$1$ if $b'=b$. Define the advantage of $\mathcal{A}$ as $\mathbf{Adv}_{\mathrm{PKE},\mathcal{A}}^{\mathrm{IND-CCA}}(1^{\lambda})=\mathrm{Pr}[\mathbf{Exp}_{\mathrm{PKE},\mathcal{A}}^{\mathrm{IND-CCA}}(1^{\lambda})=1]$. A PKE scheme is said to be indistinguishable under chosen ciphertext attacks if $\mathbf{Adv}_{\mathrm{PKE},\mathcal{A}}^{\mathrm{IND-CCA}}(1^{\lambda})$ is negligible for any probabilistic polynomial-time algorithm $\mathcal{A}$. 

\noindent\textbf{Simulation-Sound Extractable Non-Interactive Zero-Knowledge Proof Systems.} Fix an $\mathcal{NP}$-relation $\rho$, a simulation-sound extractable non-interactive zero-knowledge (SE-NIZK) proof system $\Pi$~\cite{Groth06AC} for the relation $\rho$ consists the following polynomial-time algorithms: $\mathsf{Setup}$, $\mathsf{Prove}$, $\mathsf{Verify}$. 
\begin{description}
	\item[$\mathsf{Setup}$] This algorithm takes $1^{\lambda}$ as input and  returns a common reference string $\mathsf{crs}$. 
	
	\item[$\mathsf{Prove}$] This algorithm takes $\mathsf{crs}$ and a statement-witness pair $(x,w)\in \rho$, and  outputs a proof $\pi$. 
	
	\item[$\mathsf{Verify}$] Given as input $\mathsf{crs}$ and $(x,\pi)$, it returns a bit. 
	
\end{description} 
\noindent\textbf{Completeness.} $\Pi$ is said to be complete if for all $\lambda$,  all $(x,w)\in \rho$, we have
$\mathrm{Pr}[\mathsf{crs}\leftarrow \mathsf{Setup}(1^{\lambda}), \pi \leftarrow\mathsf{Prove}(\mathsf{crs},x,w):  \mathsf{Verify}(\mathsf{crs},x,\pi)=1]=1$.

\noindent\textbf{Soundness.} $\Pi$ is said to be sound if for all $\lambda$, all $\widehat{\mathsf{Prove}}$,   all $x\notin \mathcal{L}_{\rho}$, we have $\mathrm{Pr}[\mathsf{crs}\leftarrow \mathsf{Setup}(1^{\lambda}), \pi \leftarrow\widehat{\mathsf{Prove}}(\mathsf{crs},x):  \mathsf{Verify}(\mathsf{crs},x,\pi)=1]\leq 2^{-\lambda}$. 

To define zero-knowledge (ZK) and simulation-sound extractability (SE), we need the following three polynomial-time algorithms: $\mathsf{SimSetup}$, $\mathsf{SimProve}$, and $\mathsf{Extr}$. 
\begin{description}
	\item[$\mathsf{SimSetup}$] This algorithm takes $1^{\lambda}$ as input and  returns a simulated common reference string $\mathsf{crs}$ together with a trapdoor  $\mathsf{tr}$.
	
	\item[$\mathsf{SimProve}$]  It takes as input $\mathsf{tr}$ and a statement $x$ and outputs a simulated proof $\pi$. Note that $x$ may not be in language $\mathcal{L}_{\rho}$. 
	
	\item[$\mathsf{Extr}$] On input a tuple $(\mathsf{tr},x,\pi)$, this algorithm outputs a witness $w$. 
\end{description} 
\noindent\textbf{Zero-Knowledge}. Informally speaking, ZK implies that a simulated $\mathsf{crs}$ is indistinguishable from one produced by $\mathsf{Setup}$ and that a simulated proof is indistinguishable from one generated by $\mathsf{Prove}$. 
We model the definition of ZK in  experiment $\mathbf{Exp}_{\Pi,\mathcal{A}}^{\mathrm{ZK}}(1^{\lambda})$ in Figure~\ref{tpbs-fig:zk+se}.  $\Pi$ is said to satisfy (computational) zero-knowledge if $\mathbf{Adv}_{\Pi,\mathcal{A}}^{\mathrm{ZK}}(1^{\lambda})=\mathrm{Pr}[\mathbf{Exp}_{\Pi,\mathcal{A}}^{\mathrm{ZK}}(1^{\lambda})=1]$ is negligible in~$\lambda$ for all $\mathrm{PPT}$ adversary $\mathcal{A}$. 

\noindent\textbf{Simulation-Sound Extractability.} We model the definition of SE using experiment $\mathbf{Exp}_{\Pi,\mathcal{A}}^{\mathrm{SE}}(1^{\lambda})$ in Figure~\ref{tpbs-fig:zk+se}. $\Pi$ is said to simulation-sound extractable if for all  $\mathrm{PPT}$ adversary $\mathcal{A}$, we have $\mathbf{Adv}_{\Pi,\mathcal{A}}^{\mathrm{SE}}(1^{\lambda})=\mathrm{Pr}[\mathbf{Exp}_{\Pi,\mathcal{A}}^{\mathrm{SE}}(1^{\lambda})=1]$ negligible in~$\lambda$. 
\begin{figure}
	\begin{center}
		\begin{tabular}{cc}
			\begin{minipage}{8cm}
				\underline{\sc Initialize}\hspace*{1.6cm}\tikz \draw[fill=green!5] (2.4,-0.3) rectangle (0.068cm, 12pt) node[pos=0.5]{$\mathbf{Exp}_{\Pi,\mathcal{A}}^{\mathrm{ZK}}(1^{\lambda})$};\\
				$\hspace*{6pt}b\xleftarrow{\$}\{0,1\}$\\
				$\hspace*{6pt}(\mathsf{crs}_0,\mathsf{tr})\leftarrow \mathsf{SimSetup}(1^{\lambda})$ \\
				$\hspace*{6pt}\mathsf{crs}_1\leftarrow \mathsf{Setup}(1^{\lambda})$ \\
				$\hspace*{6pt}\text{Return}~\mathsf{crs}_b$
				
				\underline{\sc Proof$(x,w)$}\vspace*{2.2pt}\\
				$\hspace*{6pt}\text{If}~(x,w)\in \rho, \text{then}~\pi_0\leftarrow \mathsf{SimProve}(\mathsf{tr},x)$\\
				$\hspace*{68pt}\text{else}~\pi_0\leftarrow \bot$\\
				$\hspace*{6pt}\pi_1\leftarrow\mathsf{Prove}(\mathsf{crs},x,w)$\\
				$\hspace*{6pt}\text{Return}~\pi_b$
				
				\underline{\sc Finalize$(b')$}\vspace*{2.2pt}\\
				$\hspace*{6pt}\text{Return}~(b'=b)$
			\end{minipage}
			&
			\begin{minipage}{8cm}
				\underline{\sc Initialize}\hspace*{1.6cm}\tikz \draw[fill=green!5] (2.4,-0.3) rectangle (0.068cm, 12pt) node[pos=0.5]{$\mathbf{Exp}_{\Pi,\mathcal{A}}^{\mathrm{SE}}(1^{\lambda})$};\\
				$\hspace*{6pt}(\mathsf{crs}_0,\mathsf{tr})\leftarrow \mathsf{SimSetup}(1^{\lambda}),Q\leftarrow \emptyset$ \\
				$\hspace*{6pt}\text{Return}~\mathsf{crs}$ 
				
				\underline{\sc SimProve$(x)$}\vspace*{2.2pt}\\
				$\hspace*{6pt}\pi\leftarrow \mathsf{SimProve}(\mathsf{crs},\mathsf{tr},x)$\\ $\hspace*{6pt}Q=Q\cup \{(x,\pi)\}$\\
				$\hspace*{6pt}\text{Return}~\pi$
				
				\underline{\sc Finalize$(x,\pi)$}\vspace*{2.2pt}\\
				$\hspace*{6pt}w\leftarrow \mathsf{Extr}(\mathsf{tr},x,\pi)$\\
				$\hspace*{6pt}\text{Return}~1~\text{if the following hold}:$\\
				$\hspace*{24pt}(x,\pi)\notin Q$\\
				$\hspace*{24pt}\mathsf{Verify}(\mathsf{crs},x,\pi)=1$\\
				$\hspace*{24pt}(x,w)\notin \rho$
			\end{minipage}
		\end{tabular}
	\end{center}
	\caption{Games defining zero-knowledge and simulation-sound extractability of a proof system~$\Pi$.}
	\label{tpbs-fig:zk+se}
\end{figure}
\subsection{Building Blocks for the Concrete Construction} \label{tpbs-subsection:primitives-for-concrete-construction}
Below we recall Boyen signature scheme~\cite{Boyen10PKC}. The scheme takes the following parameters:  security parameter $\lambda$,  message length $\ell$, a integer  $n=\mathcal{O}(\lambda)$,  a sufficient large modulus $q=\mathrm{poly}(n)$, a integer $m\geq 2 n\log q$,  a real number  $s=\omega(\sqrt{\log m})\cdot \mathcal{O}(\sqrt{\ell n\log q})$, and a integer bound $\beta=\lceil s\cdot \log n\rceil$. The verification key is a tuple $(\mathbf{A},\mathbf{A}_0,\ldots,\mathbf{A}_{\ell},\mathbf{u})$ while the signing key a matrix $\mathbf{S}$, where $(\mathbf{A},\mathbf{S})$ is generated by $\mathsf{TrapGen}(n,m,q)$ described in Lemma~\ref{tpbs-lemma:trapgen} and matrices $\mathbf{A}_0,\ldots,\mathbf{A}_{\ell}$ and vector $\mathbf{u}$ are all uniformly at random from $\mathbb{Z}_q^{n\times m}$ and $\mathbb{Z}_q^n$, respectively. 

To sign a message $\mathfrak{m}=[\mathfrak{m}[1]|\cdots| \mathfrak{m}[{\ell}]]^{\top}\in\{0,1\}^{\ell}$, the signer forms matrix $\mathbf{A}_{\mathfrak{m}}$ in the following way: $\mathbf{A}_{\mathfrak{m}}=[\mathbf{A}|\mathbf{A}_0+\sum_{j=1}^{\ell}\mathfrak{m}[j]\cdot\mathbf{A}_j]\in\mathbb{Z}_q^{n\times 2m}$. It then outputs a vector $\mathbf{v}\in \Lambda^{\mathbf{u}}(\mathbf{A}_{\mathfrak{m}})$ via $\mathsf{SampleD}\big( \mathsf{ExtBasis}(\mathbf{S},\mathbf{A}_{\mathfrak{m}}),\mathbf{A}_{\mathfrak{m}}, \mathbf{u},s)\big)$. To verify the validity of a signature $\mathbf{v}$, it suffices to check that $\mathbf{A}_{\mathfrak{m}}\cdot \mathbf{v}=\mathbf{u}\bmod q$ and $\|\mathbf{v}\|_{\infty}\leq \beta$.  The above scheme is shown to be EU-CMA secure if $\mathsf{SIS}_{n,m,q,\ell\widetilde{\mathcal{O}}(n)}^{\infty}$ is hard~\cite{Boyen10PKC,MP12EC}. 

We now review the GPV-IBE scheme~\cite{GPV08STOC} in the following. The scheme has the following parameters: security parameter $\lambda$, message length $\ell_1$, a integer $n=\mathcal{O}(\lambda)$, a prime modulus $q=\mathcal{O}(n^2)$, a integer $m\geq 2 n\log q$, a real number $s_1=\omega(\log m)$, an integer bound $B=\widetilde{\mathcal{O}}(\sqrt{n})$ and an efficiently sample distribution $\chi$ over integer $\mathbb{Z}$ that outputs a sample $e$ such that $|e|\leq B$ with overwhelming probability, a hash function $\mathcal{H}_1:\{0,1\}^*\rightarrow \mathbb{Z}_q^{n\times \ell_1}$. The master public-secret key pair is $(\mathbf{B},\mathbf{T})$ generated via $\mathsf{TrapGen}(n,m,q)$. For a identity $\mathsf{iden}\in\{0,1\}^*$, the extraction algorithm first hashes $\mathsf{iden}$ to a matrix $\mathbf{G}=[\mathbf{g}_1|\cdots|\mathbf{g}_{\ell_1}]\in \mathbb{Z}_q^{n\times \ell_1}$ using $\mathcal{H}_1$ and then runs  $\mathbf{f}_i\leftarrow \mathsf{SampleD}(\mathbf{B},\mathbf{T},\mathbf{g}_i,s_1)$ for $i\in[\ell_1]$. Define the decryption key of user $\mathsf{iden}$ as $\mathbf{F}_{\mathsf{iden}}=[\mathbf{f}_1|\cdots|\mathbf{f}_{\ell_1}]\in\mathbb{Z}_q^{m\times\ell_1}$.  

To encrypt a message $\mathfrak{m}=[\mathfrak{m}[1]|\cdots|\mathfrak{m}[\ell_1]]^{\top}\in\{0,1\}^{\ell_1}$ under identity $\mathsf{iden}$, the sender first hashes $\mathsf{iden}$ to obtain $\mathbf{G}$ as above and computes ciphertext as $(\mathbf{c}_1,\mathbf{c}_2)=(\mathbf{B}^{\top}\cdot \mathbf{s}+\mathbf{e}_1,\mathbf{G}^{\top}\cdot \mathbf{s}+\mathbf{e}_2+\mathfrak{m}\cdot \lfloor \frac{q}{2}\rfloor)\in \mathbb{Z}_q^{m}\times \mathbb{Z}_q^{\ell_1}$, where $\mathbf{s}\hookleftarrow\chi^n$, $\mathbf{e}_1\hookleftarrow\chi^m$, and $\mathbf{e}_2\hookleftarrow \chi^{\ell_1}$. To decrypt a ciphertext of the above form, the receiver computes $\mathfrak{m}'= \mathbf{c}_2-\mathbf{F}_{\mathsf{iden}}^{\top}\cdot \mathbf{c}_1\in \mathbb{Z}_q^{\ell_1}$. Then for each $i\in[\ell_1]$, set $\mathfrak{m}[i]=1$ if $\mathfrak{m}'[i]$ is closer to $\lfloor\frac{q}{2} \rfloor$ than to~$0$; otherwise, set $\mathfrak{m}[i]=0$.  Return $\mathfrak{m}=[\mathfrak{m}[1]|\cdots|\mathfrak{m}[\ell_1]]^{\top}\in\{0,1\}^{\ell_1}$. The scheme is proven to be chosen plaintext attacks (CPA) secure if $\mathsf{LWE}_{n,q,\chi}$ is hard~\cite{GPV08STOC}.

\end{document}